\newtheorem{definition}{Definition}[section]
\newtheorem{remark}[definition]{Remark}
\newtheorem{theorem}[definition]{Theorem}
\newtheorem{lemma}[definition]{Lemma}
\newtheorem{proposition}[definition]{Proposition}
\newtheorem{corollary}[definition]{Corollary}
\renewcommand {\l}{\lambda}
\renewcommand {\iff}{\Leftrightarrow}
\newcommand {\imp}{\Rightarrow}
\title{\Large\bf  Boolean lifting property in quantales.}
\author{\sc  Daniela Cheptea, George Georgescu }
\begin{document}
\maketitle
\vskip 1.5em
\begin{abstract}
In ring theory, the lifting idempotent property (LIP) is related to some important classes of rings: clean rings, exchange rings, local and semilocal rings, Gelfand rings,maximal rings, etc. Inspired by LIP, there were defined lifting properties for other algebraic structures: MV-algebras, BL- algebras, residuated lattices, abelian l-groups, congruence distributive universal algebras,etc.
In this paper we define a lifting property (LP) in quantales, structures that constitute a good abstraction of the lattices of ideals, filters or congruences. LP generalizes all the lifting properties existing in literature. The main tool in the study of LP in a quantale A is the reticulation of A, a bounded distributive lattice whose prime spectrum is homeomorphic to the prime spectrum os A. The principal results of the paper include a characterization theorem for quantales with LP, a structure theorem for semilocal quantales with LP and a charaterization theorem for hyperarhimedean quantales.
\end{abstract}
\section{Introduction}
 \hspace{0.5cm}In ring theory is frequently met the following lifting idempotent property (LIP): the
idempotents can be lifted modulo every left (respectively right) ideal. LIP was related to
two important classes of rings: the exchange rings and the clean rings \cite{a}. In the case of
commutative rings there was proved that the exchange rings, the clean rings and the rings
with LIP coincide. These rings have significant algebraic and topological properties and a
whole literature has been dedicated to their study (see \cite{b}, \cite{c}, \cite{d}).
By analogy to LIP, there were introduced various “lifting properties” for other algebraic
structures: MV- algebras \cite{e} , commutative l-groups \cite{f},  BL- algebras \cite{g}, residuated
lattices \cite{GeorgescuVoiculescu}, bounded distributive lattices \cite{h}, etc.
Both LIP definition and of other lifting properties assume the existence of “Boolean
centers”, subsets of algebras endowed with a Boolean structure: the idempotent in the
case of commutative rings and the complemented elements for the other mentioned
algebras.
If we want to extend the definition of these lifting properties to more general classes of
universal algebras then we must ensure that these algebras possess Boolean centers. A
suggestion can be offered by the remark that the Boolean centers of the concrete algebras
are isomorphic (or anti-isomorphic) with Boolean subalgebras of some ideal (or filter)
lattices: see Lemma 1, \cite{Banaschewski} for commutative rings,  Lemmas 5.6 and 5.7, \cite{GeorgescuVoiculescu} for
residuated lattices, etc. In order to obtain a Boolean center of an algebra $A$ we shall
choose a Boolean subalgebra of the congruence lattice $Con(A)$. In \cite{GeorgescuMuresanJMVL}, \cite{GeorgescuMuresanStudia} there are defined
two notions of lifting property for a congruence distributive algebra $A$: Congruence
Boolean Lifting Property (CBLP), whenever the Boolean center is the set $B(Con(A))$ of
complemented congruences of A, and Factor Congruence Lifting Property (FCLP),
whenever the Boolean center is the set FC(A) of factor congruences of $A$.

On the other hand, the quantales \cite{Rosenthal}, \cite{GCM} and the frames \cite{Johnstone} constitute a good abstraction
of lattices of ideals, filters and congruences. Several results in algebra, topology,
analysis, etc can formulated in the  framework of quantales and frames. The first
abstract formulation of LIP in the context of frames is the condition (3) of Lemma 4, \cite{Banaschewski}.

In this paper we shall define a notion of lifting property (LP) in setting of the coherent
quantales. This extends LIP and the other lifting properties, as well that CBLP.
The main tool in studying LP in a quantale $A$ is the reticulation $L(A)$ of $A$, a bounded
distributive lattice whose $Spec_{Id}(L(A))$ is homeomorphic to the prime spectrum $Spec(A)$
of $A$. The assignment $A \to L(A)$ allows us to transfer some algebraic and topological
results from bounded distributive lattices to quantales. In order to apply this thesis in the
study of LP we shall use a remarkable property: the Boolean center $B(A)$ of a quantale $A$
is isomorphic to the Boolean algebra $B(L(A))$ of complemented elements of $L(A)$.

Section 2 is a presentation of some definitions and identities in quantales and frames, 
the basic properties of prime spectrum $Spec(A)$ of a coherent quantale $A$ and of the
frame $R(A)$ of radical elements of $A$ \cite{Rosenthal}, \cite{Johnstone}, \cite{GCM}.

In Section 3 we define the reticulation $L(A)$ of a coherent quantale $A$ and we prove its unicity. We recall from \cite{Georgescu} the construction of the reticulation $L(A)$ and the principal algebraic and topological connections between $A$ and $L(A)$.

Section 4 studies the relationship between the Boolean center $B(A)$ of a coherent quantale
A, the Boolean center $B(R(A))$ of the frame $R(A)$ and the Boolean algebra $B(L(A))$. These
three Boolean algebras are isomorphic, which will generate a strong transfer of properties
between $A$, $R(A)$ and $L(A)$. Particularly, we obtain a characterization of hyperarhimedean
quantales as the quantales $A$ for which the prime elements of $A$ and the maximal
elements coincide. This result extends the Nachbin theorem in the lattice theory \cite{BalbesDwinger} and
the Kaplansky characterization of regular rings \cite{Johnstone}, \cite{Simmons}.

In Section 5 we define the lifting property (LP) in a quantale $A$ following a suggestion
given by Lemma 4, \cite{Banaschewski} and we prove the equivalence of LP in $A$, LP in the frame $R(A)$ and
a lifting property of ideals in $L(A)$ (Proposition 5.7).

Section 6 is concerned with the relationship between LP and the properties of normality
and B- normality (in $A, R(A)$ and $L(A)$). Firstly we prove that the coherent quantale is B-normal iff the frame $R(A)$ is B-normal iff the reticulation $L(A)$ is B-normal as lattice. The
main theorem of section (Theorem 6.6) combines this result on B-normality of the three
entities ( $A, R(A)$ and $L(A)$) with the mentioned proposition of the previous section, by
using Lemma 4, \cite{Banaschewski}.

In Section 7 we study LP versus finite products of quantales. The main result of section
(Theorem 7.9) establishes many equivalent conditions that characterize the semilocal
coherent quantales with LP. Particularly, $A$ is a semilocal quantale with LP if and only if
$A$ is isomorphic with a finite product of local quantales.
\section{Preliminaries}
 \hspace{0.5cm}In this section we shall recall some definitions and basic properties of quantales (see \cite{Rosenthal}, \cite{Paseka} ).
\begin{definition} A {\em quantale } is a structure of the form $(A,\lor, \land, \cdot, 0, 1  )$ such that  $(A,\lor, \land, 0, 1  ) $ is a complete lattice and $(A, \cdot)$ is a semigroup with the property that the multiplication  $\cdot $ satisfies the infinite distributive law: for any $a\in A $ and $(b_{i})_{i\in I}\subseteq A $,   
$a\cdot (\bigvee _{i\in I} b_{i})=\bigvee_{i\in I}(a\cdot b_{i}) $ and  $(\bigvee _{i\in I}b_{i})\cdot a=\bigvee_{i\in I}(b_{i}\cdot a) $  
\end{definition}
If the multiplication $\cdot $ coincides with $\land$ then  the notion of frame \cite{Johnstone} is obtained.
A quantale $A$ is said to be 
\begin{itemize}
\item []{\em unital}, if $(A, \cdot, 1)$ is monoid
\item []{\em commutative}, if the multiplication is commutative.
\end{itemize}

Throughout this paper by a quantale , we shall understand a unital and comutative quantale. Often we shall write $ab$ instead of $a\cdot b$.

Let us denote by $ K(A)$ the set of compact elements of a quantale $A$. An algebraic quantale $A$ is {\em coherent} if $K(A)$ is closed under $\cdot $ and $1\in K(A)$.

Let $A$ be a quantale. For any $a,b\in A$, denote $a\to b=\bigvee\{x\in A| a\cdot x\leq b\}$. Therefore $(A,\lor, \land, \cdot,\to,  0, 1  )$  is a residuated lattice \cite{Kowalski}, \cite{JipsenTsinakis}. The negation operation $a^{\perp}$ is defined by $a^{\perp}=a\to 0=\bigvee \{x\in A|a\cdot x=0\}$.

The following lemma collects some basic properties of $\to$ and $\perp$.
\begin{lemma}\label{to}
For all $a, b, c \in A$ we have  $a\leq b\to c$  iff $a\cdot b \leq c$. 
\end{lemma}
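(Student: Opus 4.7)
The statement is the familiar residuation adjunction, and the plan is to prove both implications directly from the definition $b \to c = \bigvee\{x \in A \mid b \cdot x \leq c\}$ together with the infinite distributive law from Definition 2.1 and the assumed commutativity of multiplication.

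First I would establish the easy direction: assume $a \cdot b \leq c$. Then by commutativity $b \cdot a \leq c$, so $a$ belongs to the set $\{x \in A \mid b \cdot x \leq c\}$, hence $a \leq \bigvee\{x \in A \mid b \cdot x \leq c\} = b \to c$. No infinite distributivity is needed here; only the definition and commutativity.

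For the converse, I would suppose $a \leq b \to c$ and multiply both sides on the right by $b$. Monotonicity of the multiplication (which itself follows from the infinite distributive law, since $u \leq v$ gives $u \vee v = v$ and hence $ub \vee vb = vb$, i.e.\ $ub \leq vb$) yields $a \cdot b \leq (b \to c) \cdot b$. Using commutativity rewrite this as $b \cdot (b \to c)$, and then apply the infinite distributive law:
\[
b \cdot (b \to c) \;=\; b \cdot \bigvee\{x \in A \mid b \cdot x \leq c\} \;=\; \bigvee\{b \cdot x \mid b \cdot x \leq c\} \;\leq\; c.
\]
Composing the two inequalities gives $a \cdot b \leq c$, which completes the proof.

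There is no real obstacle here; the only point that deserves attention is remembering that monotonicity of $\cdot$ is a consequence of, rather than an additional hypothesis beyond, the infinite distributive law in Definition 2.1, and that commutativity (which is a standing assumption in this paper) is what allows the symmetric form of the adjunction to be derived from the one-sided definition of $\to$.
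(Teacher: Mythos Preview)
Your proof is correct. The paper states this lemma without proof (it is presented as a basic, well-known property immediately after the definition of $\to$), so there is no paper proof to compare against; your direct verification from the definition of $\to$ and the infinite distributive law is exactly the standard argument and is fully adequate.
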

\begin{lemma}\label{propQ}
For all $a, b, c \in A$, the following properties hold:
\newcounter{nr}
\begin{list}{(\arabic{nr})}{\usecounter{nr}}
\item If $a\lor b =1$ then $a\cdot b= a \land b $;
\item If $a\lor b=a\lor c=1$ then $a\lor (b\cdot c)= a \lor (b\land c)=1$;
\item If $a\lor b =1$ then $a^{n}\lor b^{n}=1$ for any natural number $n\geq 1$;
\item If $a\lor b =1$ and $a\leq c$ then $a\lor (b\cdot c)=c$. 
\end{list}
\end{lemma}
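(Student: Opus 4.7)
The plan is to prove the four parts in order, using the infinite distributivity of multiplication over joins, together with the general facts $xy\leq x\wedge y$ (valid because the quantale is commutative, unital with top element $1$, so $xy\leq x\cdot 1=x$ and $xy\leq 1\cdot y=y$) and $1\cdot 1 = 1$. Each part feeds into the next, so the work is almost mechanical once (1) is in place.

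For (1), the inequality $ab\leq a\wedge b$ is the general fact just noted. For the reverse, I multiply $a\wedge b$ by $a\vee b = 1$ and distribute: $a\wedge b = (a\wedge b)(a\vee b) = (a\wedge b)a \vee (a\wedge b)b$; since $a\wedge b\leq b$ the first summand is $\leq ba = ab$ (here commutativity is used), and symmetrically the second summand is $\leq ab$, so $a\wedge b\leq ab$.

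For (2), the identity $bc\leq b\wedge c$ shows that it suffices to prove $a\vee(bc) = 1$, since then $a\vee(b\wedge c)\geq a\vee(bc) = 1$. Multiplying the two hypotheses and distributing gives $1 = 1\cdot 1 = (a\vee b)(a\vee c) = a^{2}\vee ac\vee ba\vee bc$, and the first three joinands are all $\leq a$, yielding $1\leq a\vee(bc)$. For (3), I iterate (2): starting from $a\vee b = 1$, applying (2) with $c:=b$ gives $a\vee b^{2}=1$; applying it again with $c:=b^{2}$ gives $a\vee b^{3}=1$; by induction $a\vee b^{n}=1$ for every $n$. Swapping the roles of $a$ and $b^{n}$ (or symmetrically applying the same argument) yields $b^{n}\vee a^{m}=1$ for every $m$, and setting $m=n$ finishes the part.

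For (4), distributivity again does the work: $c = c\cdot 1 = c(a\vee b) = ca\vee cb$. Now $ca\leq 1\cdot a = a$, so $c\leq a\vee cb$, and the reverse inclusion $a\vee cb\leq c$ is immediate from $a\leq c$ and $cb\leq c$. Hence $c = a\vee(bc)$.

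I do not expect any real obstacle: the only subtlety is remembering to use commutativity in part (1) to identify $(a\wedge b)a$ with a term bounded by $ab$ rather than $a^{2}$, and to use (2) as the engine driving the inductive step of (3) rather than trying to expand $(a\vee b)^{n}$ directly, which produces cross terms that are harder to compare to $a^{n}\vee b^{n}$.
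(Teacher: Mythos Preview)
Your proof is correct and follows the same route as the paper. The paper simply defers (1) and (2) to a standard reference, states that (3) is by induction, and gives for (4) exactly the computation you wrote; you have merely supplied the details that the paper omits.
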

\begin{proof}
The properties (1), (2) follow by \cite {Birkhoff} and (3) is obtain by induction.\\
(4) Assuming $a\lor b=1$ and $a\leq c$ one gets $c=(a\lor b) \cdot c= a\cdot c\lor b\cdot c \leq a\lor b\cdot c$. The converse inequality follows from $a\leq c$ and $b\cdot c \leq c$.
\end{proof}
\begin{definition}
An element $p < 1$ of a quantale $A$ is {\em m-prime} if for all $a,b\in A, a\cdot b \leq p$ implies $a\leq p$ or $b\leq p$. Let us denote by $Spec(A)$ the set of m-prime elements in $A$ and $Max(A)$ the set of maximal elements of $A$.
\end{definition}
\begin{lemma}\label{SpecA inclus MaxA}
Assume $1\in K(A)$. Therefore:
\begin{list}{(\arabic{nr})}{\usecounter{nr}}
\item For any $a < 1$ there exists $m\in Max(A)$ such that $a\leq m$;
\item $Spec(A)\subseteq Max(A)$
\end{list}
\end{lemma}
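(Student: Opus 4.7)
The plan is to treat the two parts separately.

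For part (1), I would apply Zorn's Lemma to the poset $S = \{x \in A : a \leq x,\ x < 1\}$ ordered by $\leq$; this is nonempty since $a \in S$. The key verification is that every chain $C \subseteq S$ has its join $\bigvee C$ back in $S$, for which the only nontrivial clause is $\bigvee C < 1$. Here the hypothesis $1 \in K(A)$ enters decisively: were $\bigvee C = 1$, compactness of $1$ would yield a finite $F \subseteq C$ with $\bigvee F = 1$, and since $C$ is totally ordered, $\bigvee F$ equals the largest element of $F$, forcing some $c \in C$ to equal $1$ and contradicting $c < 1$. A Zorn-maximal element of $S$ is then automatically maximal in $A$ (any strictly greater element below $1$ would itself lie in $S$), giving the desired $m \in Max(A)$ above $a$.

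For part (2), the statement as printed is $Spec(A) \subseteq Max(A)$, but I suspect this is a typographical inversion: the inclusion that follows from the quantale axioms alone is $Max(A) \subseteq Spec(A)$, whereas its reverse---namely, that every m-prime is maximal---is precisely the hyperarchimedean condition singled out for separate treatment in the abstract. I therefore sketch the proof of $Max(A) \subseteq Spec(A)$. Given $m \in Max(A)$ and $a \cdot b \leq m$ with $a, b \not\leq m$, maximality upgrades $m \vee a$ and $m \vee b$ to $1$. Expanding $(m \vee a)(m \vee b)$ via the infinite distributive law produces the four summands $m \cdot m$, $m \cdot b$, $a \cdot m$, and $a \cdot b$; the first three are bounded above by $m$ using monotonicity together with the unital identity $m \cdot 1 = m$, and the fourth by hypothesis. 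Hence $1 = (m \vee a)(m \vee b) \leq m$, contradicting $m < 1$, so $m \in Spec(A)$.

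The genuine content lies in part (1): it is the compactness of $1$ that drives the Zorn argument, and without this hypothesis one cannot guarantee that chain joins stay strictly below $1$. Part (2) is a formal calculation once commutativity and the distributive expansion of a product of two binary joins are exploited; the only conceptual hurdle there is reconciling the inclusion as printed with the direction the hypotheses actually support.
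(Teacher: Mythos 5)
The paper states this lemma without proof, so there is no official argument to compare against; I will evaluate your proposal on its own merits. Your Zorn's lemma argument for part (1) is correct and is the natural one: the poset $S=\{x\in A: a\leq x<1\}$ is nonempty, chain joins stay in $S$ precisely because $1\in K(A)$ forces any chain whose join were $1$ to contain $1$ already, and a Zorn-maximal element of $S$ is maximal in $A$.

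Your diagnosis of part (2) is also correct, and worth flagging explicitly: the inclusion $Spec(A)\subseteq Max(A)$ as printed cannot be a theorem under the stated hypothesis. The quantale $Id(\Z)$ is coherent with $1$ compact, yet the zero ideal is m-prime (as $\Z$ is a domain) and not maximal; moreover the paper's own Theorem~\ref{hyperarhimedian} treats $Max(A)=Spec(A)$ as a nontrivial condition characterizing hyperarchimedean quantales, so the unqualified inclusion cannot hold in general. The intended statement must be $Max(A)\subseteq Spec(A)$, and indeed the application in Lemma~\ref{coherent=>MaxA=MaxRA} (deducing $\rho(a)<1$ from $a<1$ via a maximal element above $a$) requires exactly that direction. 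Your proof of $Max(A)\subseteq Spec(A)$ is correct: from $a\cdot b\leq m$ with $a,b\not\leq m$, maximality gives $m\vee a=m\vee b=1$, and distributing the product $(m\vee a)\cdot(m\vee b)$ yields four terms each $\leq m$ (the three involving $m$ via $m\cdot 1=m$ and monotonicity, the fourth by hypothesis), so $1\leq m$, a contradiction. Note that, as you observe, part (2) does not even use $1\in K(A)$; only the unital commutative quantale axioms are needed. Overall: a correct proof, together with a correct identification of a typographical error in the statement.
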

\begin{lemma}
Let  $A$ be a coherent quantale and $p\in A\setminus \{1\}$. The following are equivalent:
\begin{list}{(\arabic{nr})}{\usecounter{nr}}
\item $p$ is m-prime;
\item For all $c, d\in K(A)$, $c\cdot d\leq p$ implies $c\leq p$ or $d\leq p$;
\end{list}
\end{lemma}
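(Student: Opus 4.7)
The implication (1) $\Rightarrow$ (2) is immediate, since $K(A) \subseteq A$ and condition (2) is just the restriction of condition (1) to compact elements; I would dispose of it in one line.

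For (2) $\Rightarrow$ (1), the plan is to exploit the fact that $A$ is coherent (hence algebraic), so every element of $A$ is the join of the compact elements below it, and to combine this with the infinite distributive law of the multiplication. Concretely, suppose $a, b \in A$ satisfy $a \cdot b \leq p$, and assume toward contrapositive that $a \not\leq p$. Since $a = \bigvee\{c \in K(A) \mid c \leq a\}$, there must exist some compact $c_{0} \leq a$ with $c_{0} \not\leq p$.

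Now for every compact $d \leq b$ one has $c_{0} \cdot d \leq a \cdot b \leq p$; since $c_{0}, d \in K(A)$, hypothesis (2) forces $c_{0} \leq p$ or $d \leq p$, and the former is excluded, so $d \leq p$. Writing $b = \bigvee\{d \in K(A) \mid d \leq b\}$ and taking the join over all such $d$ yields $b \leq p$. Thus $a \not\leq p$ implies $b \leq p$, which is exactly the m-primeness of $p$.

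I do not expect any real obstacle here: the only nontrivial ingredients are the algebraic representation of elements as joins of compacts (given by the coherence hypothesis on $A$) and the distributivity $c_{0} \cdot \bigvee d = \bigvee (c_{0} \cdot d)$ from the definition of a quantale, both of which are available in the preliminaries. The argument never needs closure of $K(A)$ under multiplication, because condition (2) only requires $c$ and $d$ themselves to be compact, not their product.
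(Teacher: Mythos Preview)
Your argument is correct. The paper states this lemma without proof, so there is nothing to compare against; your write-up would serve perfectly well as the omitted justification. One small remark: you list the infinite distributivity $c_{0}\cdot\bigvee d=\bigvee(c_{0}\cdot d)$ among the needed ingredients, but your actual argument never invokes it---once you know every compact $d\leq b$ satisfies $d\leq p$, the conclusion $b=\bigvee\{d\in K(A)\mid d\leq b\}\leq p$ follows simply because $p$ is an upper bound of that set, with no multiplication involved.
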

For any $a\in A$ denote $V(a)=\{p\in Spec(A)|a\leq p\}$ and $v(a)= V(a)\cap Max(A)$. Then $Spec(A)$ (respective Max(A)) can be endowed with a topology whose closed sets are $(V(a))_{a\in A}$ (respective $(v(a))_{a\in A}$). 

The {\em radical} $\rho_{A}(a)$ of an element $a\in A$ is defined by $\rho_{A}(a)=\bigwedge \{ p\in Spec(A)|a\leq p\}$; if $a=\rho_{A}(a)$ then $a$ is a {\em radical element}. If there is no danger of confusion then we shall write $\rho(a)$ instead $\rho_{A}(a)$.
\begin{lemma}\cite{Rosenthal}\label{rho} For all $a,b\in A$ the following properties hold:
\begin{list}{(\arabic{nr})}{\usecounter{nr}}
\item $a \leq \rho(a) $;
\item $\rho(a\land b)=\rho(a\cdot b)=\rho(a)\land \rho(b)$;
\item $\rho(a)=1$ iff $a=1$;
\item $\rho(a\lor b)=\rho(\rho(a)\lor\rho(b))$;
\item $\rho(\rho(a))=\rho(a)$.
\item $\rho(a)\lor\rho(b)=1 $ iff $a\lor b=1$
\item $\rho(a^{n})=\rho(a)$ for all integer $n\geq 1$.
\end{list}
\end{lemma}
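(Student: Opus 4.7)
The proof plan is to work throughout with the characterization $\rho(a)=\bigwedge V(a)$ where $V(a)=\{p\in \mathrm{Spec}(A)\mid a\leq p\}$, and to reduce each clause to a set-theoretic manipulation of the families $V(a)$. Two auxiliary observations are used silently: $\rho$ is monotone (since $a\leq b$ gives $V(b)\subseteq V(a)$), and for any m-prime $p$ the definition gives the equivalence $a\cdot b\leq p\iff a\leq p$ or $b\leq p$. I would also invoke Lemma \ref{SpecA inclus MaxA}, together with the standard fact (valid since $1\in K(A)$ in a coherent quantale) that each $a<1$ lies below some m-prime, so that $V(a)$ is nonempty whenever $a<1$; this is essential for (3) and (6).

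First I would dispatch (1) directly from the definition, and (5) from the observation $V(\rho(a))=V(a)$, which holds because $a\leq \rho(a)$ gives one inclusion and $\rho(a)\leq p$ whenever $p\in V(a)$ gives the other. For (2), the commutative-unital hypothesis yields $a\cdot b\leq a\land b$, so $V(a\land b)\subseteq V(a\cdot b)$; m-primality of each $p$ gives $V(a\cdot b)=V(a)\cup V(b)$; and since $V(a)\cup V(b)\subseteq V(a\land b)$ trivially, all three sets coincide, and taking meets yields $\rho(a\cdot b)=\rho(a\land b)=\rho(a)\land\rho(b)$. Property (7) then follows by an immediate induction on $n$ using (2).

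For (4), the key is the equality $V(a\lor b)=V(\rho(a)\lor\rho(b))$: one direction uses that $p\in V(a\lor b)$ forces $p\geq a$ and $p\geq b$, hence $p\geq \rho(a)\lor\rho(b)$ (each $\rho(c)$ is a meet of primes that all lie above $p$'s ``target''); the other direction uses $a\leq\rho(a)$ and $b\leq\rho(b)$. Taking meets gives (4). For (3), $a=1$ makes $V(a)$ empty so $\rho(a)=\bigwedge\emptyset=1$; conversely, if $a<1$ then the standard fact above produces some $p\in V(a)$, so $\rho(a)\leq p<1$. Finally (6) splits into the easy direction ($a\lor b=1$ with $a\leq\rho(a)$, $b\leq\rho(b)$ forces $\rho(a)\lor\rho(b)=1$) and the converse, which combines (4) and (3): $\rho(a\lor b)=\rho(\rho(a)\lor\rho(b))=\rho(1)=1$, so by (3) $a\lor b=1$.

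The only step that is not fully mechanical is the existence of an m-prime above an arbitrary $a<1$, needed in (3) (and hence indirectly in (6)). Lemma \ref{SpecA inclus MaxA} as stated gives only a maximal element above $a$, so the plan would silently appeal to the coherent-quantale fact that every maximal is m-prime (the direction that is actually useful here, regardless of the orientation of the inclusion printed in Lemma \ref{SpecA inclus MaxA}); this is the one place where a careful citation is required.
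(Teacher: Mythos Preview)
The paper does not actually prove this lemma: it is stated with the citation \cite{Rosenthal} and no proof block follows, so there is nothing to compare your argument against at the level of strategy. Your proof is correct and is the standard spectrum-based argument one would expect.

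One remark on the point you flagged yourself. You are right that clause (3), and through it (6), depends on the existence of an m-prime above every $a<1$, and that Lemma~\ref{SpecA inclus MaxA} as printed gives the inclusion in the wrong direction ($Spec(A)\subseteq Max(A)$ is almost certainly a typo for $Max(A)\subseteq Spec(A)$, since the latter is what holds and what is used throughout the paper, e.g.\ in the very definition of $\rho$ and in the proof of Lemma~\ref{coherent=>MaxA=MaxRA}). Your handling of this is fine: under the standing hypothesis $1\in K(A)$, part~(1) of Lemma~\ref{SpecA inclus MaxA} gives a maximal element above $a$, and the standard argument (using $1\in K(A)$ and the multiplicative structure) shows that maximal elements are m-prime. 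That is the only place where an extra hypothesis on $A$ is genuinely needed; the remaining clauses (1), (2), (4), (5), (7) go through for any unital commutative quantale exactly as you wrote them.
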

The equality (4) of previous Lemma can be extended to an arbitrary family $(a_{i})_{i\in I}$ of $A$:
$\rho(\bigvee_{i\in I}a_{i})=\rho(\bigvee_{i\in I}\rho(a_{i}))$.

Let us denote by $R(A)$ the set of radical elements of $A$. For any family $(a_{i})_{i\in I}\subseteq R(A)$, let us denote $\dot{\bigvee}_{i\in I}a_{i}=\rho(\bigvee_{i\in I}a_{i})$. Thus $(R(A), \dot{\bigvee}, \bigwedge, \rho(0), 1)$ is be a frame.
\begin{lemma}\label{coherent=>MaxA=MaxRA}
If $A$ is a coherent quantale then $Max(A)=Max(R(A))$ and $Spec(A)=Spec(R(A))$.
\end{lemma}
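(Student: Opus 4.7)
My plan is to verify both equalities by mutual inclusion, using as the main bridge the identity $\rho(a\cdot b)=\rho(a)\land \rho(b)$ from Lemma \ref{rho}(2) together with the fact that the multiplication of the frame $R(A)$ is meet (so $r\cdot_A s \leq r\land s$ automatically when $r,s\in R(A)$).

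The first step is the auxiliary observation that every m-prime element of $A$ is radical: if $p\in Spec(A)$, then $p$ itself lies in the family $\{q\in Spec(A)\mid p\leq q\}$, so $\rho(p)\leq p$, and combined with $p\leq \rho(p)$ from Lemma \ref{rho}(1) this yields $p=\rho(p)\in R(A)$. The same conclusion applies to maximal elements once one knows $Max(A)\subseteq Spec(A)$ in a coherent quantale: if $m<1$ is maximal and $a,b\not\leq m$, then $m\lor a=m\lor b=1$ forces $1=(m\lor a)(m\lor b)\leq m\lor ab$, so $ab\not\leq m$.

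For $Spec(A)=Spec(R(A))$, the inclusion $(\subseteq)$ is easy: if $p\in Spec(A)\subseteq R(A)$ and $r,s\in R(A)$ satisfy $r\land s\leq p$, then $rs\leq r\land s\leq p$ gives $r\leq p$ or $s\leq p$ by m-primality of $p$ in $A$. For $(\supseteq)$, taking $p\in Spec(R(A))$ and $a,b\in A$ with $ab\leq p$, I would apply $\rho$ and use $p=\rho(p)$ to obtain $\rho(a)\land \rho(b)=\rho(ab)\leq p$; m-primality of $p$ in the frame $R(A)$ then forces $\rho(a)\leq p$ or $\rho(b)\leq p$, and $a\leq \rho(a)$ (respectively $b\leq \rho(b)$) completes the step.

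For $Max(A)=Max(R(A))$, the inclusion $(\subseteq)$ is immediate from the first step, since any radical element strictly above a maximal $m\in Max(A)$ is already strictly above $m$ in $A$ and hence equals $1$. For $(\supseteq)$, given $m\in Max(R(A))$ and $a\in A$ with $m<a$, the plan is to pass to $\rho(a)\in R(A)$: from $a\leq \rho(a)$ and $a>m$ we cannot have $\rho(a)=m$, so $m<\rho(a)$ in $R(A)$, maximality forces $\rho(a)=1$, and Lemma \ref{rho}(3) then yields $a=1$. The only real delicacy — and the step I would be most careful with — is keeping track of which multiplication is in play at each moment, since $R(A)$ is a frame whose product is $\land$ while $A$ has its own $\cdot$; this is exactly what Lemma \ref{rho}(2) is designed to mediate.
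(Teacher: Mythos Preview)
Your proof is correct and, for the $Max$ equality, follows essentially the same line as the paper: both arguments show $Max(A)\subseteq R(A)$ via $\rho(m)=m$, and for the converse inclusion both pass from $m<a$ to $m<\rho(a)$, use maximality of $m$ in $R(A)$ to force $\rho(a)=1$, and conclude $a=1$ (you cite Lemma \ref{rho}(3); the paper cites Lemma \ref{SpecA inclus MaxA} to get $\rho(a)<1$ directly, which amounts to the same thing).

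Where you go beyond the paper is the $Spec$ equality: the paper's proof simply does not address it, while you give a clean two-directional argument using $\rho(a\cdot b)=\rho(a)\land\rho(b)$ to translate between $m$-primality in $A$ (with product $\cdot$) and $m$-primality in the frame $R(A)$ (with product $\land$). You also spell out the inclusion $Max(A)\subseteq Spec(A)$, which the paper states (with the inclusion accidentally reversed) as Lemma \ref{SpecA inclus MaxA}(2) but does not prove; your expansion $(m\lor a)(m\lor b)\leq m\lor ab$ is the standard one. So your write-up is in fact more complete than the paper's on this lemma.
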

\begin{proof}
Assume that $m\in Max(A)$ then $\rho(m)=m$, hence $Max(A)\subseteq Max(R(A))$. Conversely, assume that $m\in Max(R(A))$ and $a\in A$ such that $m\leq a<1$. Thus, by Lemma \ref{SpecA inclus MaxA}, one gets $m\leq a\leq \rho(a)< 1$, so $m=\rho(a)=a$. It follows that $Max(R(A))\subseteq Max(A)$.  
\end{proof}

\begin{lemma}\cite{Martinez}\label{c^{k}}
Let $A$ be a coherent quantale and $a\in A$. Then 
 \begin{list}{(\arabic{nr})}{\usecounter{nr}}
\item $\rho(a)=\bigvee\{c\in K(A)|c^{k}\leq a \, , \, for \, some\, k\geq 1 \}$;
\item For any $c\in K(A)$, $c\leq \rho(a)$ iff $c^{k} \leq a$, for some $k\geq 1$.
\end{list}
\end{lemma}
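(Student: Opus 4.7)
The plan is to establish (2) first and then derive (1) from it using the algebraicity of $A$. For the $(\Leftarrow)$ direction of (2) I would simply invoke Lemma \ref{rho}(7): if $c^{k}\leq a$ then $\rho(c)=\rho(c^{k})\leq \rho(a)$, and since $c\leq \rho(c)$ we obtain $c\leq \rho(a)$.

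The bulk of the work lies in the $(\Rightarrow)$ direction of (2). Assuming $c\in K(A)$ with $c\leq \rho(a)$ but $c^{k}\not\leq a$ for every $k\geq 1$, I would introduce the set
\[
T=\{x\in A\mid c^{k}\not\leq x \text{ for all } k\geq 1\},
\]
which contains $a$. The coherent hypothesis ensures $c^{k}\in K(A)$ for every $k$; hence $T$ is inductively ordered, because for any chain $(x_{i})_{i\in I}$ in $T$ a supposed bound $c^{k}\leq \bigvee_{i}x_{i}$ would, by compactness of $c^{k}$, force $c^{k}\leq x_{i}$ for some $i$, contradicting $x_{i}\in T$. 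Zorn's lemma then produces a maximal $p\in T$ with $a\leq p$.

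The crux is to show that this $p$ is m-prime, which is where I expect the main difficulty. If $u\cdot v\leq p$ with $u\not\leq p$ and $v\not\leq p$, then maximality of $p$ in $T$ forces $p\vee u\notin T$ and $p\vee v\notin T$, so $c^{k}\leq p\vee u$ and $c^{m}\leq p\vee v$ for some $k,m\geq 1$. Distributing the product gives
\[
c^{k+m}=c^{k}\cdot c^{m}\leq (p\vee u)(p\vee v)=p\cdot p\vee p\cdot v\vee u\cdot p\vee u\cdot v\leq p,
\]
since $p\cdot p\leq p\cdot 1=p$, $p\cdot u,\, p\cdot v\leq p$, and $u\cdot v\leq p$ by hypothesis. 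This contradicts $p\in T$, so $p$ is m-prime. But then $a\leq p\in Spec(A)$ gives $c\leq \rho(a)\leq p$, contradicting $c=c^{1}\not\leq p$.

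Finally, (1) would follow cleanly from (2) by algebraicity: $\rho(a)=\bigvee\{d\in K(A)\mid d\leq \rho(a)\}$, and by (2) the condition $d\leq \rho(a)$ on a compact $d$ is equivalent to $d^{k}\leq a$ for some $k\geq 1$, yielding the claimed formula. The only nontrivial input throughout is the Zorn/m-primality argument above; the coherence assumption is used both to secure compactness of $c^{k}$ (closure of $K(A)$ under $\cdot$) and to ensure that the inductive join step in $T$ goes through.
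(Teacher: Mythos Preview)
The paper itself does not supply a proof of this lemma; it simply attributes the result to \cite{Martinez}. Your argument is the standard separation argument via Zorn's lemma (the quantale analogue of the multiplicative-set avoidance construction from commutative algebra) and is correct.

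Two small points worth tightening. First, to guarantee $a\leq p$ you should apply Zorn to the inductive subset $\{x\in T\mid a\leq x\}$ rather than to all of $T$; maximality in this smaller set is still enough for the primality step, since $p\vee u\geq p\geq a$ automatically, so $p\vee u\in T$ would contradict maximality there. Second, to conclude $p\in Spec(A)$ you need $p<1$, which is immediate from $c=c^{1}\not\leq p$ but should be said explicitly. With these cosmetic fixes the proof is complete.
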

\begin{lemma}
If $A$ is a coherent quantale then $K(R(A))=\rho(K(A))$ and $R(A)$ is a coherent frame. 
\end{lemma}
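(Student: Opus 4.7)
The plan is to prove the two set equalities in order and then read off the coherent-frame conclusion. Throughout, the key technical point is that joins in $R(A)$ are not the ambient joins in $A$: one has $\dot{\bigvee}_{i} a_{i}=\rho(\bigvee_{i} a_{i})$, and radicals interact with powers via $\rho(c^{k})=\rho(c)$ (Lemma \ref{rho}(7)) and with products via $\rho(c)\land\rho(d)=\rho(c\cdot d)$ (Lemma \ref{rho}(2)). Both characterizations in Lemma \ref{c^{k}} will be used crucially.

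For the inclusion $\rho(K(A))\subseteq K(R(A))$, I would take $c\in K(A)$ and show $\rho(c)\in K(R(A))$. Given a family $(a_{i})_{i\in I}\subseteq R(A)$ with $\rho(c)\leq \dot{\bigvee}_{i\in I}a_{i}=\rho(\bigvee_{i\in I}a_{i})$, one has $c\leq \rho(\bigvee_{i\in I}a_{i})$, so Lemma \ref{c^{k}}(2) yields $c^{k}\leq \bigvee_{i\in I}a_{i}$ for some $k\geq 1$. Since $A$ is coherent, $c^{k}\in K(A)$, so $c^{k}\leq \bigvee_{i\in F}a_{i}$ for some finite $F\subseteq I$. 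Applying $\rho$ and using Lemma \ref{rho}(7) together with $a_{i}=\rho(a_{i})$ and the extended Lemma \ref{rho}(4), one arrives at $\rho(c)=\rho(c^{k})\leq \rho(\bigvee_{i\in F}a_{i})=\dot{\bigvee}_{i\in F}a_{i}$, which is exactly compactness of $\rho(c)$ in $R(A)$.

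For the reverse inclusion $K(R(A))\subseteq \rho(K(A))$, I would start from $a\in K(R(A))\subseteq R(A)$ and use Lemma \ref{c^{k}}(1) to write $a=\rho(a)=\bigvee S$ in $A$, where $S=\{c\in K(A)\mid c^{k}\leq a \text{ for some }k\geq 1\}$. Since each $c\in S$ satisfies $\rho(c)\leq \rho(a)=a$, one can rewrite this as $a=\dot{\bigvee}_{c\in S}\rho(c)$, a join in $R(A)$. Compactness of $a$ in $R(A)$ then gives $c_{1},\dots,c_{n}\in S$ with $a\leq \dot{\bigvee}_{j=1}^{n}\rho(c_{j})=\rho(c_{1}\lor\cdots\lor c_{n})$, while the reverse inequality is automatic. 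Setting $c:=c_{1}\lor\cdots\lor c_{n}\in K(A)$ (finite joins of compact elements are compact) gives $a=\rho(c)\in\rho(K(A))$.

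For the coherence of $R(A)$, I would just observe: (i) $1\in K(A)$ by coherence of $A$ and $\rho(1)=1$, so $1\in \rho(K(A))=K(R(A))$; (ii) for $\rho(c),\rho(d)\in K(R(A))$ with $c,d\in K(A)$, the product in the frame $R(A)$ is the meet, and Lemma \ref{rho}(2) gives $\rho(c)\land\rho(d)=\rho(c\cdot d)$, which lies in $\rho(K(A))=K(R(A))$ because $c\cdot d\in K(A)$ by coherence of $A$. Together with the presentation $a=\dot{\bigvee}_{c\in S}\rho(c)$ from the second inclusion (showing $R(A)$ is algebraic), this establishes that $R(A)$ is a coherent frame. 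The only delicate point throughout is the constant need to pass between $\bigvee$ (in $A$) and $\dot{\bigvee}$ (in $R(A)$); once that bookkeeping is handled via Lemma \ref{rho}, the argument is essentially routine.
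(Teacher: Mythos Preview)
Your proof is correct and follows essentially the same route as the paper's. The only cosmetic differences are that, for the inclusion $K(R(A))\subseteq\rho(K(A))$, the paper writes an arbitrary $\rho(x)\in K(R(A))$ as a join of compacts using algebraicity of $A$ directly, whereas you invoke the specific formula of Lemma~\ref{c^{k}}(1); and your verification that $R(A)$ is coherent (checking $1\in K(R(A))$, closure of $K(R(A))$ under $\land$, and algebraicity) is spelled out more fully than the paper's one-line appeal to Lemma~\ref{rho}.
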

\begin{proof}
Let $a\in K(A)$ and $(x_{i})_{i\in I}\subseteq A$ such that $\rho(a)\leq \dot{\bigvee}_{i\in I}\rho(x_{i})$. Then $a\leq \rho(\bigvee_{i\in I}\rho(x_{i}))$, hence by Lemma \ref{c^{k}}, there exists a natural number  $k\geq 1$ such that $a^{k}\leq \bigvee_{i\in I} \rho(x_{i})$. Since $a^{k}\in K(A)$, there exists a finite subset $J$ of $I$ such that $a^{k}\leq \bigvee_{i\in J}\rho(x_{i})$, therefore $\rho(a)=\rho(a^{k})\leq \rho(\bigvee_{i\in J}\rho(x_{i}))=\dot{\bigvee}_{i\in J}\rho(x_{i})$. It follows that $\rho(a)\in K(R(A))$, so $\rho(K(A))\subseteq K(R(A))$.
In order to prove the converse inclusion, assume that $\rho(x)\in K(R(A))$. We can write $\rho(x)=\bigvee_{i\in I}c_{i}$, for some family $(c_{i})_{i\in I}$ of compact elements of $A$, hence $\rho(x)=\rho(\rho(x))=\rho(\bigvee_{i\in I}c_{i})=\dot{\bigvee}_{i\in I}\rho(c_{i})$. By hypothesis $\rho(x)$ is a compact element of $R(A)$, so there exists a finite subset $J$ of $I$ such that $\rho(x)=\dot{\bigvee}_{i\in J}\rho(c_{i})=\rho(\bigvee_{i\in J}c_{i})$. $\bigvee_{i\in J}c_{i}$ is a compact element of $A$, hence $\rho(x)\in \rho(K(A))$. We conclude that $K(R(A))\subseteq \rho(K(A))$. By Lemma \ref{rho},     $ \rho(K(A))$ is closed under $\land $, so $R(A)$ is a coherent frame.
\end{proof}  
The quantale $A$ is {\em semiprime} if $\rho_{A}(0)=0$.

Let $A,B$ be two quantales. A function $f:A \to B$ is a morphism of quantales if it preserves the arbitrary joins and the multiplication; $f$ is an unital morphism if $f(1)=1$. If $f(K(A))\subseteq K(B)$ then we say that $f$ preserves the compacts. 

If $L$ is a bounded distributive lattice then $Id(L)$ (respective $Spec_{Id}(L)$) will denote the set of ideals (respective prime ideals) of $L$. 
The set of maximal ideals of $L$ will be denoted by $Max_{Id}(L)$. The sets $Spec_{Id}(L)$ and $Max_{Id}(L)$ are topological spaces with respect to the Stone topologies.
For any $I\in Id(L)$, $p_{I}:L\to L/I$ will be the canonical lattice morphism defined by $p_{I}(a)=a/I$, for any $a\in L$.

A bounded distributive lattice $L$ is {\em Id-local} if $|Max_{id}(L)|=1$. $L$ is {\em Id-semilocal} if it has a finite number of maximal ideals. It is well-known that $L$ is Id-local iff for all $a,b\in L$, $a\lor b$ implies $a=1$ or $b=1$.
 
The {\em boolean center} of the lattice $L$ is the Boolean algebra $B(L)$ of the complemented elements in $L$.
If $I\in Id(L)$ then $p_{I}:L\to L/I$ induces a boolean morphism $B(p_{I}): B(L)\to B(L/I)$. Following \cite {h}, we say that the lattice $L$ has {\em Id-BLP} if for all ideals $I$ of $L$, the boolean morphism $B(p_{I})$ is surjective.

Recall from \cite{Johnstone} that the bounded distributive lattice $L$ is {\em normal} if for all $a,b\in L$, with $a\lor b= 1$, there exist $c,d\in L$ such that $a\lor c=b\lor d =1 $ and $c\cdot d= 0$.
 
\section{Reticulation of a coherent quantale}
 \hspace{0.5cm}In this section we present an axiomatic definition of the reticulation of a coherent quantale, we prove its unicity and we recall from \cite{Georgescu} a construction of this object.

Let $A$ be a coherent quantale and $K(A)$ the set of its compact elements. 

\begin{definition}
A {\em reticulation} of a quantale $A$ is a bounded distributive lattice $L$ together a surjective function $\l:K(A)\to L$ such that for all $a,b \in K(A)$ the following properties hold:
\newcounter{nrr}
\begin{list}{(\roman{nrr})}{\usecounter{nrr}}
\item $\l(a\lor b)\leq \l(a)\lor \l(b)$
\item $\l(a\cdot b)=\l(a)\land \l(b)$
\item $\l(a)\leq \l(b)$ iff $a^{n}\leq b$, for some integer $n\geq 1$
\end{list}
\end{definition}
 
 The reticulation $A$ will be denoted by $(L,\l)$ ; often we shall say that the lattice $L$ is a reticulation of $A$.
\begin{lemma}\label{l(compact)}
Assume that $(L,\l)$ is a reticulation of $A$. For all $a,b\in K(A)$, the following properties hold:
\begin{list}{(\arabic{nr})}{\usecounter{nr}}
\item $a\leq b$ implies $\l(a)\leq \l(b)$;
\item $\l(a\lor b)=\l(a)\lor \l(b)$;
\item $\l(a)=1$ iff $a=1$;
\item $\l(0)=0$;
\item $\l(a)=0$ iff  $a^{k}=0$, for some integer $k\geq 1$;
\item $\l(a^{n})=\l(a)$, for any $n\geq 1$. 
\end{list}
\end{lemma}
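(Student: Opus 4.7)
The plan is to extract (1) and (6) directly from axioms (ii) and (iii), then bootstrap them to pin down the behaviour of $\lambda$ at $0$ and $1$ via the surjectivity hypothesis, after which the remaining items fall out.

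First I would dispatch (1): the hypothesis $a\leq b$ yields $a^{1}\leq b$, so axiom (iii) gives $\lambda(a)\leq\lambda(b)$. Item (2) is then immediate: axiom (i) supplies $\lambda(a\lor b)\leq\lambda(a)\lor\lambda(b)$, while (1) applied to $a\leq a\lor b$ and $b\leq a\lor b$ supplies the reverse inequality $\lambda(a)\lor\lambda(b)\leq\lambda(a\lor b)$. Item (6) follows by induction using axiom (ii): $\lambda(a^{n+1})=\lambda(a^{n}\cdot a)=\lambda(a^{n})\land\lambda(a)=\lambda(a)\land\lambda(a)=\lambda(a)$.

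The crux is establishing that $\lambda$ sends $0$ to the bottom and $1$ to the top of $L$; this is where the surjectivity of $\lambda$ onto a \emph{bounded} distributive lattice is essential. For (4), note that in a quantale $0\cdot a=0$ (take the empty join in the infinite distributivity law), hence for every $a\in K(A)$ axiom (ii) gives $\lambda(0)=\lambda(0\cdot a)=\lambda(0)\land\lambda(a)$, i.e.\ $\lambda(0)\leq\lambda(a)$. Since $\lambda$ is surjective, $\lambda(0)$ lies below every element of $L$, so $\lambda(0)=0$. Symmetrically, using the unit law $1\cdot a=a$ and axiom (ii), $\lambda(a)=\lambda(1)\land\lambda(a)$, so $\lambda(a)\leq\lambda(1)$ for every $a\in K(A)$; surjectivity again forces $\lambda(1)=1$.

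With these fixed points in hand, (5) and (3) are routine: $\lambda(a)=0$ means $\lambda(a)\leq\lambda(0)$, which by (iii) is equivalent to $a^{k}\leq 0$ (i.e.\ $a^{k}=0$) for some $k\geq 1$; and $\lambda(a)=1$ means $\lambda(1)\leq\lambda(a)$, which by (iii) is equivalent to $1^{n}=1\leq a$, i.e.\ $a=1$. The only delicate point to keep in mind throughout is that axiom (iii) is phrased as an \emph{iff}, so both the forward direction (to derive monotonicity) and the backward direction (to translate identities in $L$ back to inequalities in $K(A)$) are used; the main obstacle, if any, is resisting the temptation to take $\lambda(0)=0$ and $\lambda(1)=1$ for granted and instead deducing them carefully from the bounded surjectivity of $\lambda$.
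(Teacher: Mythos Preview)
Your proof is correct and follows essentially the same strategy as the paper: derive monotonicity from axiom (iii), combine it with (i) for item (2), and use surjectivity to pin down $\lambda(0)=0$ and $\lambda(1)=1$ before reading off (3) and (5) from (iii). The only cosmetic difference is that you obtain $\lambda(0)=0$ and $\lambda(1)=1$ via axiom (ii) (using $0\cdot a=0$ and $1\cdot a=a$), whereas the paper uses surjectivity to pick a preimage of $0$ (resp.\ $1$) and then applies monotonicity directly; similarly, you derive (6) from (ii) by induction while the paper invokes (iii)---both variants are fine.
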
 
\begin{proof}
$(1)$ follows by  $(iii)$  and $(2)$ by $(i)$ and  $(1)$.\\
$(3)$ According the surjectivity of $\l$, there exists $a\in A$ such that $1=\l(a)$. By $(1)$, $a\leq 1$ implies $1=\l(1)\leq(1)$ so $\l(1)=1$. Conversely, assume that $\l(a)=1$. Thus $\l(a)=\l(1)$, hence, by $(iii)$ there exists an integer $n\geq 1$ such that $1=1^{n}\leq a$, so $a=1$.\\
$(4)$ By the surjectivity of $\l$, there exists $a\in K(A)$ such that $0=\l(a)$. Thus, by $(1)$, $0\leq x $ impies $\l(0)\leq \l(x)=0 $, so $\l(0)=0$.\\
$(5)$ Assume $\l(a)=0$, so by $(4)$ we have $\l(a)\leq \l(0)$. According to $(iii)$ there exists an integer $n\geq 1$ such that $a^{n}\leq 0$, hence $a^{n}=0$.\\
$(6)$ By the axiom $(iii)$.     
\end{proof} 
\begin{proposition}\label{unicRetic}(the unicity of reticulation). If $(L,\l:A\to L)$ and $(L',\l':A\to L')$ are two reticulations of the quantale $A$ then there exists an isomorphism of bounded distributive  lattices $f:L\to L'$ such that the following diagram is commutative:
\begin{center}
\begin{picture}(100,70)
\put(-10,50){$K(A)$}
\put(75,50){$L$}
\put(75,0){$L'$}
\put(15,55){\vector(1,0){60}}
\put(10,45){\vector(2,-1){65}}
\put(80,45){\vector(0,-1){30}}
\put(40,60){$\l$}
\put(25,20){$\l'$}
\put(85,30){$f$}
\end{picture}
\end{center}
\end{proposition}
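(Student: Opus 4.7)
The plan is to define $f$ on elements of $L$ by $f(\lambda(a)) = \lambda'(a)$ for $a \in K(A)$, using the surjectivity of $\lambda$, and then to verify in turn that $f$ is well-defined, injective, surjective, and a morphism of bounded distributive lattices. The key observation that makes everything work is that axiom (iii) gives a characterization of the preorder $\lambda(a) \leq \lambda(b)$ purely in terms of data in $K(A)$, namely the existence of some integer $n \geq 1$ with $a^n \leq b$, and the same characterization holds for $\lambda'$. So both reticulations ``see'' the same preorder on $K(A)$.

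First I would establish well-definedness together with injectivity in a single step: for $a, b \in K(A)$, axiom (iii) applied to $\lambda$ yields
\[
\lambda(a) = \lambda(b) \iff \exists\, n, m \geq 1 \text{ with } a^n \leq b \text{ and } b^m \leq a,
\]
and the same equivalence holds with $\lambda'$ in place of $\lambda$. Therefore $\lambda(a) = \lambda(b) \iff \lambda'(a) = \lambda'(b)$, which shows simultaneously that $f$ is well-defined and injective. Surjectivity of $f$ is then immediate from the surjectivity of $\lambda'$, and the commutativity $f \circ \lambda = \lambda'$ holds by construction.

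Next I would check that $f$ is a morphism of bounded distributive lattices. For joins, using Lemma \ref{l(compact)}(2) for both reticulations,
\[
f(\lambda(a) \lor \lambda(b)) = f(\lambda(a \lor b)) = \lambda'(a \lor b) = \lambda'(a) \lor \lambda'(b) = f(\lambda(a)) \lor f(\lambda(b)).
\]
For meets, axiom (ii) (the identity $\lambda(a \cdot b) = \lambda(a) \land \lambda(b)$) gives the analogous computation
\[
f(\lambda(a) \land \lambda(b)) = f(\lambda(a \cdot b)) = \lambda'(a \cdot b) = \lambda'(a) \land \lambda'(b) = f(\lambda(a)) \land f(\lambda(b)).
\]
Preservation of $0$ and $1$ follows from Lemma \ref{l(compact)}(3)(4), which characterize $\lambda(a) = 0$ and $\lambda(a) = 1$ in terms of $A$ alone (by $a^k = 0$ for some $k$, and by $a = 1$ respectively), so these conditions are shared by $\lambda'$.

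There is no real obstacle here: the proof is essentially a diagram-chase and the only point that requires a moment's care is the well-definedness step, where one must go through the antisymmetric pairing of inequalities rather than directly transporting equalities, because axiom (iii) is phrased as an inequality. Everything else is a formal consequence of the three axioms and Lemma \ref{l(compact)}.
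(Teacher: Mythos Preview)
Your proof is correct and follows essentially the same approach as the paper: both exploit axiom (iii) to show that $\lambda(a)=\lambda(b)\iff\lambda'(a)=\lambda'(b)$, then define $f$ by $f(\lambda(a))=\lambda'(a)$ and verify it is a lattice isomorphism. The only cosmetic difference is that the paper constructs an explicit inverse $g:L'\to L$ by $g(\lambda'(a))=\lambda(a)$ to obtain bijectivity, whereas you argue injectivity and surjectivity separately; also, the characterization ``$\lambda(a)=0$ iff $a^k=0$ for some $k$'' that you invoke is item (5) of Lemma~\ref{l(compact)}, not (4).
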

\begin{proof}Assume that $a,b\in K(A)$ and $\l(a)\leq \l(b)$, hence , by $(iii)$ there exists an integer $n\geq 1$ such that $a^{n}\leq b$. Thus by Lemma \ref{l(compact)},(1) and (6), one gets $\l'(a)=\l'(a^{n})\leq \l'(b)$. Therefore, for all $a,b \in K(A)$ the following equivalence holds: $\l(a)=\l(b)$ iff $\l'(a)=\l'(b)$.  \\
Therefore one can define two functions $f:L\to L'$, $g:L'\to L$ such that $f(\l(a))=\l'(a)$ and $g(\l'(a))=\l(a)$, for all $a\in A$. By using definition and Lemma \ref{l(compact)}, one can prove that $f$ and $g$ are morphisms of bounded distributive lattices. It is easy to see that $g\circ f=1_{L}$ and $f\circ g=1_{L'}$.
\end{proof}

Let us consider on $K(A)$ the following equivalence relation: $c\equiv d$ iff $\rho(c)=\rho(d)$. 
\begin{lemma}
Let $c, c', d, d'\in K(A)$. If $c\equiv c'$, $d\equiv d'$ then $c\lor d\equiv c'\lor d'$,  $c\cdot d\equiv c'\cdot d'$.
\end{lemma}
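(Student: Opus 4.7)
The plan is to reduce both assertions to direct applications of the radical identities in Lemma \ref{rho}. Recall that $c \equiv d$ means $\rho(c) = \rho(d)$, and that Lemma \ref{rho} supplies both $\rho(a \lor b) = \rho(\rho(a) \lor \rho(b))$ (part 4) and $\rho(a \cdot b) = \rho(a) \land \rho(b)$ (part 2). The hypothesis $c \equiv c'$, $d \equiv d'$ rewrites as $\rho(c) = \rho(c')$ and $\rho(d) = \rho(d')$, so the task is simply to propagate these equalities through the two identities above.

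For the join, I would compute
\[
\rho(c \lor d) = \rho(\rho(c) \lor \rho(d)) = \rho(\rho(c') \lor \rho(d')) = \rho(c' \lor d'),
\]
where the outer equalities use Lemma \ref{rho}(4) and the middle one substitutes the hypothesis. This gives $c \lor d \equiv c' \lor d'$.

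For the product, I would similarly compute
\[
\rho(c \cdot d) = \rho(c) \land \rho(d) = \rho(c') \land \rho(d') = \rho(c' \cdot d'),
\]
using Lemma \ref{rho}(2) on each end, which yields $c \cdot d \equiv c' \cdot d'$.

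There is no real obstacle here: the lemma is essentially a congruence check for $\equiv$ with respect to $\lor$ and $\cdot$, and it follows in two lines once the radical identities are in hand. It is worth noting that compactness of $c, c', d, d'$ plays no role in the argument; the properties of $\rho$ invoked hold for arbitrary elements of $A$, and the compactness hypothesis is only relevant because $\equiv$ is being considered on $K(A)$ in preparation for the construction of the reticulation.
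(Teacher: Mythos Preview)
Your proof is correct. The paper states this lemma without proof, so there is no argument to compare against; your two-line reduction to Lemma~\ref{rho}(2) and (4) is exactly the natural verification one would expect, and your remark that compactness is not used in the argument itself is accurate.
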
 
Consider the quotient set $L(A)=K(A)/\equiv$ ; for any $c\in K(A)$ denote by $\hat{c}$ the equivalence class of $c$. Then $L(A)$ becomes a bounded distributive lattice with respect to the operations $\hat{c}\lor \hat{d}=\widehat{c\lor d}$,  $\hat{c}\land \hat{d}=\widehat{c\cdot d}$ and the constants $\hat{0}$, $\hat{1}$. $L(A)$ will be called the reticulation of the quantale $A$.

One defines a function $\l_{A}:K(A)\to L(A)$ by $\l_{A}(c)=\hat{c}$, for any $c\in K(A)$. Often we shall write $\l$ instead of $\l_{A}$.

\begin{proposition}\cite{Georgescu}(the existence of reticulation). The pair $(L(A),\l_{A})$ is the reticulation of the quantale $A$.
\end{proposition}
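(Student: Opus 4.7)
The plan is to verify surjectivity of $\l_A$ and the three axioms of Definition~3.1, after first confirming that $L(A)$ really is a bounded distributive lattice. Well-definedness of $\hat{c}\lor \hat{d}:=\widehat{c\lor d}$ and $\hat{c}\land \hat{d}:=\widehat{c\cdot d}$ is exactly the content of the preceding congruence lemma. For the lattice structure itself I would embed $L(A)$ into the frame $R(A)$ via $\iota:L(A)\to R(A)$, $\iota(\hat{c}):=\rho(c)$; this is well-defined and injective by the very definition of $\equiv$. Then Lemma~\ref{rho}(2) gives $\iota(\hat{c}\land \hat{d})=\rho(c\cdot d)=\rho(c)\land\rho(d)$, and Lemma~\ref{rho}(4) gives $\iota(\hat{c}\lor \hat{d})=\rho(c\lor d)=\rho(\rho(c)\lor \rho(d))$, which is precisely the join of $\rho(c)$ and $\rho(d)$ in $R(A)$. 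Hence $\iota$ is a lattice embedding, which forces $L(A)$ to be a bounded distributive lattice, with top $\hat{1}$ corresponding to $\rho(1)=1$ via Lemma~\ref{rho}(3).

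Axioms (i) and (ii), together with surjectivity, are then immediate book-keeping. Indeed, $\l_A(a\lor b)=\widehat{a\lor b}=\hat{a}\lor \hat{b}=\l_A(a)\lor\l_A(b)$, so (i) in fact holds as equality, and $\l_A(a\cdot b)=\widehat{a\cdot b}=\hat{a}\land \hat{b}=\l_A(a)\land \l_A(b)$, which is (ii). Every element of $L(A)$ is of the form $\hat c$ for some $c\in K(A)$, so $\l_A$ is surjective by construction.

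The substantive step is axiom (iii). I would translate the order on $L(A)$ back to $A$ in stages: $\l_A(a)\leq \l_A(b)$ iff $\hat{a}\land \hat{b}=\hat{a}$ iff $\widehat{a\cdot b}=\hat{a}$ iff $\rho(a\cdot b)=\rho(a)$, which by Lemma~\ref{rho}(2) rewrites as $\rho(a)\land\rho(b)=\rho(a)$, i.e.\ $\rho(a)\leq \rho(b)$. Using $a\leq \rho(a)$ together with $\rho\circ \rho=\rho$ (Lemma~\ref{rho}(5)), this is in turn equivalent to $a\leq \rho(b)$, and Lemma~\ref{c^{k}}(2), applied to the compact element $a$, converts the latter into the required condition $a^n\leq b$ for some $n\geq 1$. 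I do not foresee a serious obstacle: the argument is essentially a repackaging of properties of $\rho$ already established in Section~2, and the one conceptual choice — whether to verify distributivity of $L(A)$ directly or to import it through the embedding $\iota$ into $R(A)$ — is comfortably settled by the latter route.
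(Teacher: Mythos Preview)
Your argument is correct. The paper itself does not supply a proof of this proposition: it is merely stated with the citation \cite{Georgescu}, so there is no ``paper's own proof'' to compare against. Your verification of axioms (i)--(iii) and of surjectivity is sound, and the embedding $\iota:L(A)\to R(A)$, $\hat c\mapsto\rho(c)$, that you use to transport the distributive-lattice structure is exactly the map the paper later introduces as $\mu_A$ in Remark~3.12; so your route is fully in line with the paper's framework.
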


Let consider the functions $()^{*}:A \to Id(L(A))$ and $()_{*}:Id(L(A))\to A$ defined by the assignments:\\
$a\in A \mapsto a^{*}=\{\l(c)| c\in K(A), c\leq a\} \in Id(L(A))$\\
$I\in Id(L(A)) \mapsto I_{*}=\bigvee \{c\in K(A)|\l(c)\in I \}\in A$.

It is easy to see that these functions are order preserving.
\begin{lemma}\label{_^*}
\begin{list}{(\arabic{nr})}{\usecounter{nr}}
\item If $a\in A$ then $a\leq (a^{*})_{*}$;
\item If $I\in Id(L(A))$ then $(I_{*})^{*}=I$;
\item If $p\in Spec(A)$ then $(p^{*})_{*}=p$ and $p^{*}\in Spec_{Id}(L(A))$;
\item If $P\in Spec_{Id}(L(A))$ then $P_{*}\in Spec(A)$
\end{list}
\end{lemma}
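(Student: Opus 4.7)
The plan is to handle the four parts in order, exploiting (a) the fact that $A$ is algebraic (so every element is a join of compact elements below it), (b) the surjectivity of $\lambda$ onto $L(A)$, and (c) the fact that $m$-prime elements of $A$ are radical, i.e.\ $\rho(p)=p$ for $p\in Spec(A)$ (which follows from $p\leq p$ together with Lemma \ref{rho}(1)).

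For (1), I would observe that $a=\bigvee\{c\in K(A)\mid c\leq a\}$ because $A$ is algebraic. For each such $c$ we have $\lambda(c)\in a^{*}$ by the very definition of $a^{*}$, and therefore $c\leq (a^{*})_{*}$ by the definition of $(\cdot)_{*}$; taking the join yields $a\leq(a^{*})_{*}$.

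For (2), the inclusion $I\subseteq(I_{*})^{*}$ uses surjectivity of $\lambda$: any $x\in I$ can be written $x=\lambda(c)$ for $c\in K(A)$, and then $c\leq I_{*}$ by the definition of $I_{*}$, so $\lambda(c)\in(I_{*})^{*}$. For the reverse inclusion, take $y=\lambda(d)\in(I_{*})^{*}$ with $d\in K(A)$, $d\leq I_{*}$. Since $d$ is compact and $I_{*}$ is a join of compacts whose $\lambda$-images lie in $I$, there exist finitely many $c_{1},\dots,c_{n}\in K(A)$ with $\lambda(c_{i})\in I$ and $d\leq c_{1}\vee\cdots\vee c_{n}$. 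Applying Lemma \ref{l(compact)} gives $\lambda(d)\leq \lambda(c_{1})\vee\cdots\vee \lambda(c_{n})\in I$, so $y\in I$.

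For (3), combine (1) with the radical property of $p$. For any $c\in K(A)$ with $\lambda(c)\in p^{*}$, there is $c'\in K(A)$ with $c'\leq p$ and $\lambda(c)=\lambda(c')$, which by the definition of the equivalence on $K(A)$ means $\rho(c)=\rho(c')$; hence $c\leq \rho(c)=\rho(c')\leq\rho(p)=p$. This gives $(p^{*})_{*}\leq p$, and together with (1) the equality. To see $p^{*}\in Spec_{Id}(L(A))$, note first that $1\notin p^{*}$, for otherwise $\lambda(c)=\lambda(1)=1$ for some $c\leq p$ and Lemma \ref{l(compact)}(3) would give $c=1\leq p$, contradicting $p<1$. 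If $\lambda(a)\wedge\lambda(b)=\lambda(a\cdot b)\in p^{*}$ with $a,b\in K(A)$, then by the same argument $\rho(a\cdot b)\leq p$ and hence $a\cdot b\leq p$; primeness of $p$ gives $a\leq p$ or $b\leq p$, whence $\lambda(a)\in p^{*}$ or $\lambda(b)\in p^{*}$.

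For (4), $P_{*}<1$ because $1\in K(A)$ is compact: if $P_{*}=1$, compactness and the definition of $P_{*}$ produce $c_{1},\dots,c_{n}\in K(A)$ with $\lambda(c_{i})\in P$ and $c_{1}\vee\cdots\vee c_{n}=1$, so $1=\lambda(1)=\lambda(c_{1})\vee\cdots\vee\lambda(c_{n})\in P$, contradicting properness of $P$. To verify primeness of $P_{*}$ I would invoke the compact-element criterion (the Lemma immediately after Definition \ref{to}'s vicinity, stating it suffices to check the implication on $K(A)$): if $c,d\in K(A)$ with $c\cdot d\leq P_{*}$, then $c\cdot d\in K(A)$ (coherence) and so $\lambda(c\cdot d)\in(P_{*})^{*}=P$ by part (2); writing $\lambda(c\cdot d)=\lambda(c)\wedge\lambda(d)$ and using primeness of $P$ yields $\lambda(c)\in P$ or $\lambda(d)\in P$, i.e.\ $c\leq P_{*}$ or $d\leq P_{*}$. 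The only delicate step in the whole argument is the equality $(p^{*})_{*}=p$ in (3), where one must recognise that $p$ being $m$-prime forces $p$ to be radical so that the $\equiv$-classes in $K(A)$ collapse correctly under $p$.
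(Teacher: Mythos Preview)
Your argument is correct in all four parts. Note, however, that the paper does not supply its own proof of this lemma: it is stated without proof, the construction and basic properties of the reticulation having been recalled from \cite{Georgescu}. So there is no ``paper's proof'' to compare against; your write-up fills that gap and does so along the natural lines one would expect --- algebraicity for (1), compactness plus the lattice structure of $I$ for (2), radicality of $m$-prime elements for (3), and the reduction to compact elements for (4). The only place worth a tiny expansion is your parenthetical justification that $\rho(p)=p$ for $p\in Spec(A)$: what you mean is that $p$ itself appears in the meet $\bigwedge\{q\in Spec(A)\mid p\leq q\}$ defining $\rho(p)$, giving $\rho(p)\leq p$, while Lemma~\ref{rho}(1) gives the reverse inequality. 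With that made explicit, the proof is complete and self-contained.
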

According to this lemma, one can consider the functions $u:Spec(A)\to Spec_{Id}(L(A)) $ , $ v:Spec_{Id}(L(A))\to Spec(A) $ ,  defined by $u(p)=p^{*}$ and $v(P)=P_{*}$, for all $p\in Spec(A)$ and $P\in Spec_{Id}(L(A))$.
\begin{proposition}The functions $u$ and $v$ are homeomorphisms, inverse to one another.
\end{proposition}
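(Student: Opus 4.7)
The plan is to first observe that parts (3) and (4) of Lemma \ref{_^*} already establish that $u$ and $v$ are mutually inverse set-theoretic bijections: $v(u(p)) = (p^{*})_{*} = p$ for $p \in Spec(A)$, and $u(v(P)) = (P_{*})^{*} = P$ for $P \in Spec_{Id}(L(A))$. All that remains is continuity in both directions, and since the basic closed sets of the two spaces are $V(a)$ in $Spec(A)$ and $V_{L}(I) = \{P \mid I \subseteq P\}$ in $Spec_{Id}(L(A))$, it suffices to pull these back along $u$ and $v$.

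For the continuity of $u$, I would first establish the key auxiliary fact that for any compact $c\in K(A)$ and any $p\in Spec(A)$, one has $\l(c)\in p^{*}$ iff $c\leq p$. The nontrivial direction uses that $p$ is $m$-prime and hence radical (by definition $\rho(p)$ is an infimum of $m$-prime elements above $p$, one of them being $p$ itself, so $\rho(p)=p$): if $\l(c)=\l(c')$ with $c'\leq p$, then $\rho(c)=\rho(c')\leq \rho(p)=p$, whence $c\leq \rho(c)\leq p$. Once this is in hand, for the subbasic closed set $V_{L}(\l(c))=\{P\mid \l(c)\in P\}$ of $Spec_{Id}(L(A))$ we compute
\[
u^{-1}(V_{L}(\l(c)))=\{p\in Spec(A)\mid \l(c)\in p^{*}\}=\{p\in Spec(A)\mid c\leq p\}=V(c),
\]
so $u$ pulls back subbasic closed sets to closed sets; taking intersections handles a general $V_{L}(I)$.

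For the continuity of $v$, the aim is to show $v^{-1}(V(a))=V_{L}(a^{*})$ for every $a\in A$. The inclusion $V_{L}(a^{*})\subseteq v^{-1}(V(a))$ is direct: if $a^{*}\subseteq P$ then, using coherence of $A$ to write $a=\bigvee\{c\in K(A)\mid c\leq a\}$, every such compact $c$ satisfies $\l(c)\in a^{*}\subseteq P$, so $c\leq P_{*}$ by the definition of $P_{*}$, and hence $a\leq P_{*}$. Conversely, if $a\leq P_{*}$ then any compact $c\leq a$ satisfies $c\leq P_{*}=\bigvee\{d\in K(A)\mid \l(d)\in P\}$; compactness of $c$ gives $c\leq d_{1}\lor\cdots\lor d_{n}$ with $\l(d_{i})\in P$, whence $\l(c)\leq \l(d_{1})\lor\cdots\lor\l(d_{n})\in P$, so $\l(c)\in P$, i.e.\ $a^{*}\subseteq P$. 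Thus $v^{-1}(V(a))$ is closed.

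The only real obstacle is the equivalence $\l(c)\in p^{*}\iff c\leq p$ for $p\in Spec(A)$, since $p^{*}$ is defined through the equivalence class of $c$ under $\equiv$ rather than through $c$ itself; everything else is a matter of unwinding definitions and using coherence to reduce to compact elements. With those two pullback computations, $u$ and $v$ are continuous bijections inverse to one another, hence homeomorphisms.
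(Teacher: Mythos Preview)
Your argument is correct. Note, however, that the paper does not actually supply a proof of this proposition: it appears in Section~3 among the results on the reticulation that are recalled from \cite{Georgescu} without argument, so there is no in-paper proof to compare your approach against.

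Taken on its own merits, your direct verification is the standard route and is carried out cleanly. One small bookkeeping remark: you cite parts (3) and (4) of Lemma~\ref{_^*} for the mutual-inverse claim, but the identity $(P_{*})^{*}=P$ that you use is part~(2); parts (3) and (4) are what guarantee that $u$ and $v$ land in $Spec_{Id}(L(A))$ and $Spec(A)$ respectively. The key equivalence $\l(c)\in p^{*}\iff c\leq p$ for $p\in Spec(A)$, which you justify via $\rho(p)=p$, is exactly the hinge of the argument, and your treatment of it is correct. The reverse computation $v^{-1}(V(a))=V_{L}(a^{*})$ properly invokes algebraicity of $A$ (part of coherence) to write $a$ as a join of compacts, and then uses compactness of $c$ together with the fact that $P$ is an ideal to conclude $\l(c)\in P$. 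Nothing is missing.
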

By using the previous proposition, it follows that the restrictions of u (respective v) to $Max(A)$ (resp $Max(L(A))$) are homeomorphisms, inverse to one another.

\begin{proposition}If $a\in A$ then $\rho(a)=(a^{*})_{*}$.
\end{proposition}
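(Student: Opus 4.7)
The plan is to unfold both sides and compare, converting membership in $a^{*}$ into a condition involving powers of compact elements.

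By definition, $(a^{*})_{*} = \bigvee\{c \in K(A) \mid \lambda(c) \in a^{*}\}$, so I would first establish the key characterization: for $c \in K(A)$,
\[
\lambda(c) \in a^{*} \iff c^{n} \leq a \text{ for some } n \geq 1.
\]
For the forward direction, $\lambda(c) \in a^{*}$ means (at least) $\lambda(c) = \lambda(d)$ for some $d \in K(A)$ with $d \leq a$; axiom (iii) of the reticulation then gives $c^{n} \leq d \leq a$ for some $n \geq 1$. Conversely, if $c^{n} \leq a$, then $c^{n} \in K(A)$ (since $K(A)$ is closed under multiplication in a coherent quantale) and $\lambda(c^{n}) = \lambda(c)$ by Lemma \ref{l(compact)}(6), so $\lambda(c) = \lambda(c^{n}) \in a^{*}$.

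Next, I would apply Lemma \ref{c^{k}}(2): for $c \in K(A)$, the condition $c^{n} \leq a$ for some $n \geq 1$ is equivalent to $c \leq \rho(a)$. Substituting, this yields
\[
(a^{*})_{*} = \bigvee\{c \in K(A) \mid c \leq \rho(a)\}.
\]
Finally, since a coherent quantale is algebraic, every element of $A$ is the join of the compact elements below it; applied to $\rho(a) \in A$ this gives $\bigvee\{c \in K(A) \mid c \leq \rho(a)\} = \rho(a)$, completing the proof.

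The only real bookkeeping concerns step one: one should check that $a^{*}$ is actually an ideal of $L(A)$ (so that the ``forward direction'' characterization is the right one to prove). Downward closure follows from axiom (iii) together with Lemma \ref{l(compact)}(6) and closure of $K(A)$ under $\cdot$, while closure under joins follows from Lemma \ref{l(compact)}(2) and the fact that $c \leq a$ and $d \leq a$ imply $c \lor d \leq a$ with $c \lor d \in K(A)$. This is the only mildly delicate point; the rest is a direct chain of equivalences built from the reticulation axioms and Lemma \ref{c^{k}}.
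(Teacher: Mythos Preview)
Your proof is correct. The paper itself does not supply a proof for this proposition: it is among the results in Section~3 that are recalled from \cite{Georgescu} without argument, so there is no in-paper proof to compare against. Your route---characterizing $\lambda(c)\in a^{*}$ as $c^{n}\leq a$ for some $n\geq 1$, then invoking Lemma~\ref{c^{k}}(2) to rewrite this as $c\leq\rho(a)$, and finally using algebraicity of $A$---is the natural direct argument and goes through cleanly. The closing remark on verifying that $a^{*}$ is an ideal is sound but redundant here, since the paper already asserts $(\,)^{*}:A\to Id(L(A))$ before stating the proposition.
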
 
\begin{corollary}
If $a\in A$ and $I\in Id(L(A))$ then $a^{*}=(\rho(a))^{*} $ and $\rho(I_{*})=I_{*} $.
\end{corollary}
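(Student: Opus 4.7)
The plan is to derive both equalities directly from the previous proposition, $\rho(a) = (a^{*})_{*}$, combined with part (2) of Lemma \ref{_^*}, which says $(I_{*})^{*} = I$ for any $I \in Id(L(A))$.

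For the first equality, I would apply the operation $()^{*}$ to both sides of $\rho(a) = (a^{*})_{*}$. This yields
\[
(\rho(a))^{*} = ((a^{*})_{*})^{*}.
\]
Now $a^{*} \in Id(L(A))$, so Lemma \ref{_^*}(2) applied to $I = a^{*}$ gives $((a^{*})_{*})^{*} = a^{*}$, and the first identity follows.

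For the second equality, I would apply the previous proposition with the element $a := I_{*} \in A$, obtaining
\[
\rho(I_{*}) = ((I_{*})^{*})_{*}.
\]
Another application of Lemma \ref{_^*}(2) gives $(I_{*})^{*} = I$, hence $\rho(I_{*}) = I_{*}$, as required.

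There is no real obstacle here: the corollary is a purely formal consequence of the two input statements, and both steps reduce to substituting one identity into the other. The only thing to be mindful of is the order-of-application of $()^{*}$ and $()_{*}$, since in general $a \leq (a^{*})_{*}$ can be strict (Lemma \ref{_^*}(1)), while $(I_{*})^{*} = I$ is always an equality, and it is precisely this asymmetry that makes the argument go through.
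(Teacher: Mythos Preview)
Your proof is correct and is exactly the intended argument: the paper gives no proof for this corollary, since it follows immediately from the previous proposition $\rho(a)=(a^{*})_{*}$ together with Lemma \ref{_^*}(2), precisely as you wrote.
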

\begin{corollary}\label{Phi,PsiIsomorphism}
The functions $\Phi : R(A) \to Id(L(A)) $ , $\Psi : Id(L(A)) \to R(A) $ defined by $\Phi (a)=a^{*}$ and $\Psi (I)=I_{*} $ for $a\in R(A)$ and $I\in Id(L(A)) $ are frame isomorphisms, inverse to one another .
\end{corollary}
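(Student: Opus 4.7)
The plan is to exploit the fact that almost all the technical work has already been done in Lemma \ref{_^*} and in the proposition $\rho(a)=(a^{*})_{*}$, so that the corollary reduces to assembling these facts and then invoking a standard observation about order isomorphisms of complete lattices.

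First I would check that $\Phi$ and $\Psi$ are mutually inverse. For $a\in R(A)$ one has $\rho(a)=a$, so the proposition just stated gives $\Psi(\Phi(a))=(a^{*})_{*}=\rho(a)=a$, hence $\Psi\circ\Phi=\mathrm{id}_{R(A)}$. Conversely, Lemma \ref{_^*}(2) yields $\Phi(\Psi(I))=(I_{*})^{*}=I$ for every $I\in Id(L(A))$, so $\Phi\circ\Psi=\mathrm{id}_{Id(L(A))}$. In particular both maps are bijections; note that one also needs to remark that $\Psi(I)\in R(A)$, which follows from the corollary stating $\rho(I_{*})=I_{*}$.

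Next I would recall, as already observed in the text right before Lemma \ref{_^*}, that $(\,)^{*}$ and $(\,)_{*}$ are order preserving. Restricting to $R(A)$, this says $\Phi$ and $\Psi$ are order preserving bijections that are inverse to one another, hence order isomorphisms between the complete lattices $R(A)$ and $Id(L(A))$.

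Finally I would invoke the standard fact that an order isomorphism $f$ between two complete lattices automatically preserves arbitrary joins and arbitrary meets: if $S$ is a subset of the domain, then $f(\sup S)$ is an upper bound for $f(S)$, and any upper bound of $f(S)$ has the form $f(u)$ with $u$ an upper bound of $S$; since $\sup S \leq u$ yields $f(\sup S)\leq f(u)$, we conclude $f(\sup S)=\sup f(S)$, and dually for meets. Since $R(A)$ and $Id(L(A))$ are frames, preservation of arbitrary joins together with preservation of finite meets is exactly what is required for $\Phi$ (and $\Psi$) to be a frame morphism, so we obtain the claimed frame isomorphism. I do not anticipate a real obstacle here; the only subtlety is to remember to verify that $\Psi$ actually takes values in $R(A)$ before calling it the inverse of $\Phi$, which is precisely the content of the preceding corollary.
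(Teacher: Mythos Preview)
Your proposal is correct and is exactly the natural justification the paper leaves implicit: the corollary is stated without proof in the paper, and your argument assembles precisely the preceding results (Lemma \ref{_^*}(2), the proposition $\rho(a)=(a^{*})_{*}$, and the corollary $\rho(I_{*})=I_{*}$) in the expected way, then observes that an order isomorphism between complete lattices is automatically a frame isomorphism.
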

\begin{remark}
By using Lemma \ref{_^*}.(2) and Corollary \ref{Phi,PsiIsomorphism}, we can prove that $()_{*}$ is the left adjoint of $()^{*}$, i.e. for all $a\in A$ and $I\in Id(L(A))$, we have $I_{*}\leq a$ iff $I\subseteq a^{*}$. It is well known that $\rho :A\to R(A)$ is the left adjoint of the inclusion $i:R(A)\to A$. The following diagram is commutative:
\begin{center}
\begin{picture}(150,70)
\put(0,50){A}
\put(70,50){Id(L(A))}
\put(10,57){\vector(1,0){55}}
\put(30, 60){$^{()^{*}}$}
\put(65,52){\vector(-1,0){55}}
\put(30,40){$^{()_{*}}$}
\put(0,0){R(A)}
\put(2,45){\vector(0,-1){35}}
\put(10,25){$i$}
\put(-5,25){$\rho$}
\put(7,10){\vector(0,1){35}}
\put(18,10){\vector(2,1){67}}
\put(35,30){$\Psi$}
\put(40,15){$\Phi$}
\put(75,45){\vector(-2,-1){60}}
\end{picture}
\end{center}  
\end{remark}
\begin{remark}
Consider the function $\mu_{A} : L(A)\to R(A)$ defined by $\mu_{A}(\l(c))=\rho(c)$ , for any $c\in K(A)$ (it is easy to see that $\mu_{A}$ is well defined). Thus $\mu_{A}$ is an injective morphism of bounded lattices and the following diagram is commutative.
\begin{center}
\begin{picture}(150,70)
\put(-5,45){$K(A)$}
\put(55,55){$\rho_{A}$}
\put(20,50){ \vector(1,0){80}}
\put(105,45){ $R(A)$}
\put(10,40){\vector(2,-1){45}}
\put(20,20){$\l_{A}$}
\put(65,20){\vector(2,1){45}}
\put(85,20){$\mu_{A}$}
\put(50,5){$L(A)$}
\end{picture}
\end{center}
\end{remark}

\begin{lemma}\label{compacts} Let $u: A\to B $ an unital morphism of quantales  that preserves the compacts. For all $c,d \in K(A)$ the following hold: 
\begin{list}{(\arabic{nr})}{\usecounter{nr}}
\item If $ \rho_{A} (c)=\rho_{A}(d)$ then $ \rho_{B}(u(c))=\rho_{B}(u(d))$;
\item If $\l_{A}(c)=\l_{A}(d)$ then $\l_{B}(u(c))=\l_{B}(u(d))$.
\end{list} 
\end{lemma}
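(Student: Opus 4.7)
The plan is to reduce both assertions to the compact-element characterisation of radicals provided by Lemma~\ref{c^{k}}(2), and then to exploit the fact that $u$ preserves multiplication, joins and compact elements.

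For (1), suppose $\rho_A(c)=\rho_A(d)$ with $c,d\in K(A)$. Then $c\leq \rho_A(c)=\rho_A(d)$ and $d\leq \rho_A(c)$, so Lemma~\ref{c^{k}}(2) yields integers $k,m\geq 1$ with $c^{k}\leq d$ and $d^{m}\leq c$. Since $u$ is an order preserving semigroup morphism, applying $u$ gives $u(c)^{k}=u(c^{k})\leq u(d)$ and $u(d)^{m}\leq u(c)$. Because $u$ preserves compacts, $u(c),u(d)\in K(B)$, so Lemma~\ref{c^{k}}(2) applied in $B$ yields $u(c)\leq \rho_B(u(d))$ and $u(d)\leq \rho_B(u(c))$. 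Applying $\rho_B$ and using monotonicity together with $\rho_B\circ\rho_B=\rho_B$ (Lemma~\ref{rho}(5)) gives $\rho_B(u(c))\leq \rho_B(u(d))$ and, symmetrically, the reverse inequality, hence equality.

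For (2), I would observe that axiom (iii) in the definition of the reticulation gives $\lambda_A(x)\leq \lambda_A(y)$ iff $x^{n}\leq y$ for some $n\geq 1$, so the equality $\lambda_A(c)=\lambda_A(d)$ is equivalent to the conjunction $c^{k}\leq d$ and $d^{m}\leq c$ for some $k,m\geq 1$, which by Lemma~\ref{c^{k}}(2) is exactly $\rho_A(c)=\rho_A(d)$. The same equivalence holds in $B$ for the pair $u(c),u(d)\in K(B)$. Therefore (2) reduces immediately to (1).

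The only delicate point is to check that every ingredient of Lemma~\ref{c^{k}}(2) is genuinely available on the $B$ side: compactness of $u(c),u(d)$ is guaranteed by the hypothesis that $u$ preserves compacts, and the multiplicative transport $u(c^{k})=u(c)^{k}$ relies on $u$ being a semigroup morphism. Unitality is incidental here and is not used in the argument, so I do not foresee any serious obstacle.
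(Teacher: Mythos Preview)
Your argument is correct and follows essentially the same route as the paper: both proofs use Lemma~\ref{c^{k}}(2) to pass from $\rho_A(c)=\rho_A(d)$ to a power inequality $c^{k}\leq d$, push it through $u$ via multiplicativity and monotonicity, and then invoke Lemma~\ref{c^{k}}(2) again in $B$; part~(2) is then reduced to~(1) in both cases. Your additional remarks (spelling out the equivalence $\lambda_A(c)=\lambda_A(d)\Leftrightarrow\rho_A(c)=\rho_A(d)$ via axiom~(iii), and noting that unitality is not actually needed) are accurate refinements but do not change the strategy.
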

\begin{proof}
$(1)$ Assume $\rho_{A}(c)=\rho_{A}(d)$, hence $c \leq \rho_{A}(d)$. According to Lemma  \ref{c^{k}},(2), there exists an integer $k\geq 1 $ such that $c^{k}\leq d $, hence $(u(c))^{k}=u(c^{k})\leq u(d) $. Applying again Lemma \ref{c^{k}},(2), it follows that $u(c)\leq \rho_{B}(u(d))$, so $\rho_{B}(u(c))\leq \rho_{B}(\rho_{B}(u(d)))=\rho_{B}(u(d)).$ The converse inequality $\rho_{B}(u(d))\leq \rho(u(c))$ follows similarly.   \\
$(2)$ By $(1)$.
\end{proof}
Let $ u:A\to B$ a unital morphism of quantales that preserves the compacts. According to lemma \ref{compacts}, one can define a function $L(u):L(A)\to L(B)$ by $L(u)(\l_{A}(c))=\l_{B}(u(c))$, for any $c\in K(A) $.
\begin{proposition}\label{KlLmorphism}
$L(u)$ is a morphism of bounded lattices and the following diagram is commutative.
\begin{center}
\begin{picture}(150,70)
\put(0,50){$K(A)$}
\put(25,55){\vector(1,0){100}}
\put(70,60){$u|_{K(A)}$}
\put(130,50){$K(B)$}
\put(5,45){\vector(0,-1){30}}
\put(-10,30){$\l_{A}$}
\put(0,0){$L(A)$}
\put(25,5){\vector(1,0){100}}
\put(70,10){$L(u)$}
\put(130,0){$L(B)$}
\put(135,45){\vector(0,-1){30}}
\put(140,30){$\l_{B}$}
\end{picture}
\end{center}
\end{proposition}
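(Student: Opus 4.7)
The plan is to verify, one identity at a time, that $L(u)$ respects the lattice operations and the bounds, and that the square commutes, using only (a) the fact that $u$ preserves joins, the product and the unit, and (b) the identities satisfied by $\lambda_B$ which were established in Lemma \ref{l(compact)}. The well-definedness of $L(u)$ has already been secured in Lemma \ref{compacts}(2), so I may start freely from the prescription $L(u)(\hat c)=\lambda_B(u(c))$.

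Commutativity of the diagram is tautological: by definition, for every $c\in K(A)$,
\[
(L(u)\circ\lambda_A)(c)=L(u)(\hat c)=\lambda_B(u(c))=(\lambda_B\circ u|_{K(A)})(c).
\]
Note that this uses $u(c)\in K(B)$, i.e.\ the hypothesis that $u$ preserves compacts, so that $\lambda_B(u(c))$ makes sense.

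Next I would check the four algebraic properties. For joins, the chain is
\[
L(u)(\hat c\vee\hat d)=L(u)(\widehat{c\vee d})=\lambda_B(u(c\vee d))=\lambda_B(u(c)\vee u(d))=\lambda_B(u(c))\vee\lambda_B(u(d))=L(u)(\hat c)\vee L(u)(\hat d),
\]
where the third equality uses that $u$ preserves arbitrary joins (in particular binary ones) and the fourth uses Lemma \ref{l(compact)}(2). For meets the computation is analogous:
\[
L(u)(\hat c\wedge\hat d)=L(u)(\widehat{c\cdot d})=\lambda_B(u(c\cdot d))=\lambda_B(u(c)\cdot u(d))=\lambda_B(u(c))\wedge\lambda_B(u(d))=L(u)(\hat c)\wedge L(u)(\hat d),
\]
where the crucial inputs are that $u$ preserves the multiplication and that, by axiom (ii) of a reticulation, $\lambda_B$ turns $\cdot$ into $\wedge$. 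Preservation of $\hat 0$ follows because $u(0)=0$ (join-preservation on the empty family) and $\lambda_B(0)=0$ by Lemma \ref{l(compact)}(4); preservation of $\hat 1$ follows because $u(1)=1$ (unitality of $u$) and $\lambda_B(1)=1$ by Lemma \ref{l(compact)}(3).

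There is really no hard step here; the proof is a bookkeeping exercise. The only thing worth flagging is that all the equalities above would fail without one of the three explicit hypotheses on $u$ (arbitrary-join preservation, multiplicativity, unitality) and without the hypothesis that $u$ sends $K(A)$ into $K(B)$ so that the axioms of the reticulation $(L(B),\lambda_B)$ can be invoked on $u(c),u(d)$. Once those are in place, the verification is immediate.
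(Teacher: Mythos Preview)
Your proof is correct. The paper itself states this proposition without proof, presumably regarding it as a routine verification; your argument is precisely the natural check one would carry out, using Lemma~\ref{compacts}(2) for well-definedness, the definition of $L(u)$ for commutativity, and Lemma~\ref{l(compact)} together with the hypotheses on $u$ for preservation of $\vee,\wedge,0,1$.
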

\section{The boolean center of a quantale versus reticulation}
 \hspace{0.5cm}Let $A$ be a quantale and $B(A)$ the set of complemented elements of $A$. It is well-known that $B(A)$ is a Boolean algebra \cite{Jipsen}, \cite{Kowalski} that generalizes the Boolean centers of many concrete structures: commutative rings, bounded distributive lattices, residuated lattices, congruence distributive algebras,etc. For this reason, $B(A)$ will be called the Boolean center of the quantale $A$. In this section we shall prove that $B(A)$ is isomorhic with other two Boolean algebras: $B(L(A))$, the Boolean center of the reticulation $L(A)$ and $B(R(A))$, the Boolean center of the frame $R(A)$. This result will be used in proving a characterization theorem for hyperarhimedean quantales.

Let $A$ be a quantale and $B(A)$ the set of complemented elements of $A$.
\begin{lemma}\cite{Jipsen},\cite{GLC}\label{propB(A)} For all $a,b\in A $ and $e\in B(A)$ the following hold:
\begin{list}{(\arabic{nr})}{\usecounter{nr}}
\item $a\in B(A)$ iff $a \lor a^{\bot}=1 $;
\item $a\land e=a\cdot e$;
\item $e\to a=e^{\bot}\lor a$;
\item If $a\lor b=1$ and $a\cdot b=0$ then $a, b \in B(A)$;
\item $(a\land b)\lor e=(a\lor e)\land (b\lor e)$
\end{list}
\end{lemma}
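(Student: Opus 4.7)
The plan is to prove (1) first, extract from it the identity $e \vee e^\perp = 1$ for any $e \in B(A)$, and then combine this with the infinite distributivity of $\cdot$ over $\vee$ together with the immediate fact $e \cdot e^\perp = 0$ (which follows from the definition $e^\perp = \bigvee\{x : e\cdot x = 0\}$ via the distributive law) to obtain the orthogonal decomposition $x = x \cdot e \vee x \cdot e^\perp$, valid for every $x \in A$. All of (2), (3) and (5) then reduce to short calculations built on this decomposition, while (4) drops out of (1) directly.

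For (1), if $b$ is a complement of $a$, then $a \cdot b \leq a \wedge b = 0$ forces $b \leq a^\perp$, whence $1 = a \vee b \leq a \vee a^\perp$. Conversely, $a \vee a^\perp = 1$ combined with Lemma~\ref{propQ}(1) yields $a \wedge a^\perp = a \cdot a^\perp = 0$, so $a^\perp$ complements $a$. For (2), $a\cdot e \leq a \wedge e$ holds by monotonicity of $\cdot$; the reverse inequality follows from $a \wedge e = (a\wedge e)\cdot(e \vee e^\perp) = (a\wedge e)\cdot e \vee (a\wedge e)\cdot e^\perp \leq a\cdot e \vee 0$, since $(a\wedge e)\cdot e^\perp \leq e \cdot e^\perp = 0$. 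For (3), one uses $e \cdot (e^\perp \vee a) = e \cdot e^\perp \vee e\cdot a = e \cdot a \leq a$ together with Lemma~\ref{to} to obtain $e^\perp \vee a \leq e \to a$, and conversely decomposes any $x$ with $e \cdot x \leq a$ as $x = x\cdot e \vee x\cdot e^\perp \leq a \vee e^\perp$. Statement (4) is immediate: $a \cdot b = 0$ gives $b \leq a^\perp$, hence $a \vee a^\perp \geq a \vee b = 1$, and (1) applies to $a$; the argument is symmetric in $b$.

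The main obstacle is (5), because quantales in general lack both $\wedge$-distributivity over arbitrary joins and $\cdot$-distributivity over $\wedge$, so the identity cannot be derived by lattice manipulation alone. My plan is to set $X = (a\vee e)\wedge(b\vee e)$, apply the decomposition $X = X\cdot e \vee X\cdot e^\perp$, and observe that $X \cdot e \leq e$ trivially, while $X \cdot e^\perp \leq (a\vee e)\cdot e^\perp = a\cdot e^\perp \vee 0 \leq a$ and, symmetrically, $\leq b$; hence $X \cdot e^\perp \leq a \wedge b$ and $X \leq e \vee (a\wedge b)$. The reverse inequality holds in any lattice. The recurring device throughout is to exploit the complementedness of $e$ in order to simulate the missing distributive laws via the orthogonal decomposition along $e$ and $e^\perp$.
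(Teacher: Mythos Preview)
Your proof is correct in every part. The paper itself does not prove this lemma; it is stated with citations to \cite{Jipsen} and \cite{GLC} and used as a black box, so there is no ``paper's own proof'' to compare against.

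A brief remark on your argument: the organising idea --- extracting the orthogonal decomposition $x = x\cdot e \vee x\cdot e^{\perp}$ from (1) and then running all remaining items through it --- is exactly the natural one, and it cleanly sidesteps the lack of $\wedge$-distributivity. In (5), your bound $X\cdot e^{\perp} \leq (a\vee e)\cdot e^{\perp} = a\cdot e^{\perp} \leq a$ (and symmetrically $\leq b$) relies only on monotonicity of $\cdot$ and the identity $e\cdot e^{\perp}=0$, both of which you justified; combined with $X\cdot e \leq e$ this gives $X \leq (a\wedge b)\vee e$, while the other inequality is, as you note, valid in every lattice. Nothing is missing.
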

\begin{lemma}\label{unitalMorph}
Let $u:A\to A'$ be a unital morphism of quantales. Then the following hold:
\begin{list}{(\arabic{nr})}{\usecounter{nr}}
\item If $e\in B(A)$ then $u(e)\in B(A')$;
\item $u_{|_{B(A)}}:B(A)\to B(A')$ is a boolean morphism.
\end{list}
\end{lemma}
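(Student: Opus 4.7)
The plan is to reduce both parts to Lemma \ref{propB(A)}(4), which detects complementedness from just $\lor$ and $\cdot$ --- precisely the two operations that a unital quantale morphism is required to preserve. Note first that any such $u$ also preserves $0$ (as the empty join) and $1$ (by the unital hypothesis), so the Boolean constants take care of themselves.

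For part (1), given $e\in B(A)$ with Boolean complement $e^{\bot}$, the crucial move is to rewrite the meet identity $e\land e^{\bot}=0$ as the product identity $e\cdot e^{\bot}=0$, which is legitimate by Lemma \ref{propB(A)}(2). Applying $u$ to the two identities $e\lor e^{\bot}=1$ and $e\cdot e^{\bot}=0$ then yields $u(e)\lor u(e^{\bot})=1$ and $u(e)\cdot u(e^{\bot})=0$ in $A'$, so Lemma \ref{propB(A)}(4) simultaneously places $u(e)$ in $B(A')$ and identifies $u(e^{\bot})$ as its Boolean complement.

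For part (2), preservation of $0$, $1$, and finite joins is immediate from $u$ being a unital morphism of quantales. Preservation of complements is already built into the last sentence of part (1), using uniqueness of complements in a Boolean algebra. For meets I would apply Lemma \ref{propB(A)}(2) twice, once inside $A$ and once inside $A'$ (the latter legal because $u(e)\in B(A')$ by part (1)):
\[
u(e\land f)=u(e\cdot f)=u(e)\cdot u(f)=u(e)\land u(f).
\]

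There is no serious obstacle here; the one point worth being careful about is that a quantale morphism is \emph{not} required to preserve binary meets in general, so throughout the argument any meet involving Boolean elements must be re-expressed as a product before $u$ is applied. Once this conversion is in place, everything is a direct bookkeeping application of Lemma \ref{propB(A)}.
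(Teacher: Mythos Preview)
Your proposal is correct and follows essentially the same route as the paper: for (1) the paper also takes a complement $f$ with $e\lor f=1$, $e\cdot f=0$, applies $u$, and invokes Lemma \ref{propB(A)}(4); for (2) the paper simply writes ``Follows by applying Lemma \ref{propB(A)}'', which your argument spells out in detail (including the meet-to-product conversion via Lemma \ref{propB(A)}(2)). Your version is in fact more explicit than the paper's on part (2), but the underlying idea is identical.
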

\begin{proof}
(1) Assume $e\in B(A)$. So $e\lor f=1$, $e\cdot f=0$, for some $f\in A$. Thus $u(e)\lor u(f)=1$ and $u(e)\cdot u(f)=0$. By Lemma \ref{propB(A)},(4), one gets that $e\in B(A')$.\\
(2) Follows by applying Lemma \ref{propB(A)}.   
\end{proof}
Throughout this section we shall assume that $A$ is a coherent quantale.
\begin{lemma}
$ B(A)\subseteq K(A)$.
\end{lemma}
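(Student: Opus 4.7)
The plan is to use directly the coherence hypothesis that $1\in K(A)$, combined with the fact that every $e\in B(A)$ has a complement $f$ satisfying $e\vee f=1$ and $e\cdot f=0$ (by Lemma \ref{propB(A)}(2), $e\wedge f=e\cdot f$, so the usual lattice-theoretic complement gives this), together with the infinite distributivity of $\cdot$ over arbitrary joins.

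Concretely, I would fix $e\in B(A)$ with complement $f\in A$ and verify the compactness condition. Suppose $e\leq\bigvee_{i\in I}x_i$ for some family $(x_i)_{i\in I}\subseteq A$. Joining $f$ to both sides and using $e\vee f=1$ gives $1=f\vee\bigvee_{i\in I}x_i$. Since $A$ is coherent we have $1\in K(A)$, so there exists a finite $J\subseteq I$ with $1=f\vee\bigvee_{i\in J}x_i$.

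Now I would multiply this equality by $e$ on the left. Using the infinite distributive law from Definition 2.1 and the relation $e\cdot f=0$, one obtains
\[
e=e\cdot 1=e\cdot\Bigl(f\vee\bigvee_{i\in J}x_i\Bigr)=(e\cdot f)\vee\bigvee_{i\in J}(e\cdot x_i)=\bigvee_{i\in J}(e\cdot x_i)\leq\bigvee_{i\in J}x_i,
\]
which is exactly the compactness of $e$. Thus $B(A)\subseteq K(A)$.

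I do not expect any genuine obstacle here: the entire argument reduces to the observation that complementation transfers the compactness of $1$ to any element of $B(A)$. The only point worth double-checking is that the complement in $B(A)$ satisfies $e\cdot f=0$ (not just $e\wedge f=0$), which is immediate from Lemma \ref{propB(A)}(2).
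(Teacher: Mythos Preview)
Your proof is correct and follows essentially the same approach as the paper: both join the complement of $e$ to the given cover, invoke $1\in K(A)$ to extract a finite subcover, and then recover $e\leq\bigvee_{i\in J}x_i$. The only cosmetic difference is in the last step, where the paper uses Lemma~\ref{propB(A)}(3) (i.e.\ $e\to\bigvee_{i\in J}x_i=e^{\perp}\vee\bigvee_{i\in J}x_i=1$) while you multiply through by $e$ and use $e\cdot f=0$ directly; these are equivalent unwindings of the same idea.
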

\begin{proof}
Let $e\in B(A)$ and $(a_{t})_{t\in T}\subseteq A$ such that $e\leq \bigvee _{t\in T}a_{t}$. Thus $ \bigvee_{t\in T} a_{t}\lor e^{\perp}=1$, hence there exists a finite subset $T_{0} $ of $T$ such that $ \bigvee_{t\in T_{0}} a_{t}\lor e^{\perp}=1$. According to Lemma \ref{propB(A)},  $ e\to \bigvee_{t\in T_{0}} a_{t}=1$, hence $e\leq \bigvee_{t\in T_{0}} a_{t}$. It follows that $ e\in K(A)$. 
\end{proof}
\begin{lemma}
If $e\in B(A) $ then $\l_{A}(e)\in B(L(A))$ and $\rho_{A}(e)\in B(R(A))$.
\end{lemma}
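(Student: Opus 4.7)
The plan is to exploit the fact that $B(A)$ is itself a Boolean algebra, so the complement $f$ of $e$ lies in $B(A)$ as well, and therefore in $K(A)$ by the previous lemma. Once both $e$ and its complement live in $K(A)$, I can push the identities $e\vee f=1$ and $e\cdot f=0$ through the reticulation map $\lambda_A$ and through $\rho_A$ using only the axioms of the reticulation together with Lemma \ref{rho}.

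First I would fix $e\in B(A)$ and pick $f\in A$ with $e\vee f=1$ and $e\wedge f=0$. By Lemma \ref{propB(A)}(2), $e\wedge f=e\cdot f=0$, so in fact $e\cdot f=0$. Since the complement of $e$ in $B(A)$ is itself an element of $B(A)$, we have $f\in B(A)$ and therefore $e,f\in K(A)$ by the preceding lemma. This is the key set-up step; it is what makes the computations possible inside $K(A)$ and legalizes the use of the reticulation axioms.

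Next, for $\lambda_A(e)\in B(L(A))$ I would simply compute, using Lemma \ref{l(compact)}(2),(3),(4) and axiom (ii) of the reticulation,
\[
\lambda_A(e)\vee \lambda_A(f)=\lambda_A(e\vee f)=\lambda_A(1)=1,\qquad
\lambda_A(e)\wedge \lambda_A(f)=\lambda_A(e\cdot f)=\lambda_A(0)=0,
\]
which exhibits $\lambda_A(f)$ as a complement of $\lambda_A(e)$ in $L(A)$. For $\rho_A(e)\in B(R(A))$ I would apply Lemma \ref{rho}(2),(3) (and the definition of $\dot{\bigvee}$ in $R(A)$) to get
\[
\rho_A(e)\,\dot{\vee}\,\rho_A(f)=\rho(e\vee f)=\rho(1)=1,\qquad
\rho_A(e)\wedge \rho_A(f)=\rho(e\cdot f)=\rho(0),
\]
and since $\rho(0)$ and $1$ are the bottom and top of the frame $R(A)$, this shows $\rho_A(f)$ is the complement of $\rho_A(e)$ in $R(A)$.

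I do not expect any genuine obstacle: the whole argument is the transport of the pair of equalities $(e\vee f=1,\; e\cdot f=0)$ along the two maps $\lambda_A$ and $\rho_A$. The only mild subtlety is being careful that joins in $R(A)$ are computed with $\dot{\bigvee}$ rather than $\bigvee$, and that the bottom of $R(A)$ is $\rho(0)$, not $0$; both are handled by a single appeal to Lemma \ref{rho}.
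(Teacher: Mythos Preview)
Your proposal is correct and follows essentially the same argument as the paper: both proofs transport the identities $e\lor f=1$ and $e\cdot f=0$ through $\lambda_A$ and $\rho_A$ using Lemma~\ref{l(compact)} and Lemma~\ref{rho}, the only cosmetic difference being that the paper writes the complement as $e^{\perp}$ rather than an abstract $f$. Your version is in fact slightly more explicit in noting that $e,f\in K(A)$ (needed for $\lambda_A$ to apply) and in handling the $\dot{\lor}$ and $\rho(0)$ conventions in $R(A)$.
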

\begin{proof}
By using Lemma \ref{l(compact)} $ \l_{A}(e)\land \l_{A}(e^{\perp})=\l_{A}(e\cdot e^{\perp})=\l(0)=0$ and  $\l_{A}(e)\lor \l_{A}(e^{\perp})=\l(e\lor e^{\perp})=\l_{A}(1)=1$. Therefore $\l_{A}(e)\in B(L(A))$. Similarly, by using Lemma \ref{rho}, we can prove that $\rho_{A}(e)\in B(R(A))$.
\end{proof}
 According to previous Lemma, one can consider the functions:\\
$\l_{A}|_{B(A)}:B(A)\to B(L(A))$ and $\rho_{A}|_{B(A)}:B(A)\to B(R(A))$. 

Let us denote $B(\l_{A})=\l_{A}|_{B(A)}$ and $B(\rho_{A})=\rho_{A}|_{B(A)}$.
\begin{proposition}\label{l,rho,isomorphism}
$B(\l_{A})$ and $B(\rho_{A})$ are boolean isomorphisms.
\end{proposition}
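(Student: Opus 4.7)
The plan is to verify first that both restrictions are boolean morphisms, then to establish injectivity by exploiting the idempotency of Boolean elements under multiplication, and finally to prove surjectivity by lifting the Boolean data of $L(A)$ and $R(A)$ back to compact elements of $A$. The main obstacle is the surjectivity of $B(\rho_A)$: there one must first show that every element of $B(R(A))$ is compact in $R(A)$ before one can produce a compact preimage in $A$.

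Well-definedness of both maps is the content of the previous lemma. Preservation of $\lor$, $\land$, $0$, $1$ follows from axioms (i)--(ii) of the reticulation together with Lemma \ref{l(compact)}(2)--(4), once one observes that $e\land f=e\cdot f$ for $e,f\in B(A)$ (Lemma \ref{propB(A)}(2)); complements are preserved because $\l(e)\lor\l(e^{\perp})=\l(1)=1$ and $\l(e)\land\l(e^{\perp})=\l(e\cdot e^{\perp})=\l(0)=0$. An analogous check based on Lemma \ref{rho} handles $B(\rho_A)$. For injectivity, suppose $\l(e)=\l(f)$ with $e,f\in B(A)$; axiom (iii) yields $n,m\geq 1$ with $e^{n}\leq f$ and $f^{m}\leq e$. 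Since $e\cdot e=e\land e=e$ by Lemma \ref{propB(A)}(2), one has $e^{n}=e$ and similarly $f^{m}=f$, whence $e=f$. The same argument with Lemma \ref{c^{k}}(2) replacing axiom (iii) handles $B(\rho_A)$.

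For surjectivity of $B(\l_A)$, take $\alpha\in B(L(A))$ with complement $\beta$. Surjectivity of $\l_A$ furnishes $c,d\in K(A)$ with $\l(c)=\alpha$ and $\l(d)=\beta$. The identities $\alpha\lor\beta=1$ and $\alpha\land\beta=0$ translate by Lemma \ref{l(compact)}(3),(5) into $c\lor d=1$ and $(c\cdot d)^{k}=0$ for some $k\geq 1$. Lemma \ref{propQ}(3) then yields $c^{k}\lor d^{k}=1$, and commutativity gives $c^{k}\cdot d^{k}=(c\cdot d)^{k}=0$; hence $c^{k}\in B(A)$ by Lemma \ref{propB(A)}(4), and $\l(c^{k})=\l(c)=\alpha$ by Lemma \ref{l(compact)}(6).

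For surjectivity of $B(\rho_A)$, one first shows $B(R(A))\subseteq K(R(A))$. Given $\gamma\in B(R(A))$ with complement $\delta$ in $R(A)$, any cover $\gamma\leq\dot{\bigvee}_{i}x_{i}$ yields (by complementation) $\dot{\bigvee}_{i}x_{i}\,\dot{\vee}\,\delta=1$; coherence of $R(A)$ makes $1$ compact there, furnishing a finite $J\subseteq I$ with $\dot{\bigvee}_{i\in J}x_{i}\,\dot{\vee}\,\delta=1$, and frame distributivity forces $\gamma\leq\dot{\bigvee}_{i\in J}x_{i}$. Since $K(R(A))=\rho(K(A))$, we write $\gamma=\rho(c)$ and $\delta=\rho(d)$ with $c,d\in K(A)$. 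The relations $\rho(c)\,\dot{\vee}\,\rho(d)=1$ and $\rho(c)\land\rho(d)=\rho(c\cdot d)=\rho(0)$ give, via Lemma \ref{rho}(3) and Lemma \ref{c^{k}}(2), the identities $c\lor d=1$ and $(c\cdot d)^{k}=0$. The closing step of the previous paragraph then produces $c^{k}\in B(A)$, and Lemma \ref{rho}(7) yields $\rho(c^{k})=\rho(c)=\gamma$.
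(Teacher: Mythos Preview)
Your proof is correct and follows essentially the same strategy as the paper's: reduce surjectivity to finding $c,d\in K(A)$ with $c\lor d=1$ and $(c\cdot d)^{k}=0$, then observe $c^{k}\lor d^{k}=1$, $c^{k}\cdot d^{k}=0$, so $c^{k}\in B(A)$ with $\l(c^{k})=\l(c)$ (resp.\ $\rho(c^{k})=\rho(c)$). Your treatment is in fact more careful than the paper's, which handles injectivity via the kernel criterion for boolean morphisms (citing Lemmas \ref{l(compact)}(3) and \ref{rho}(3)) and dismisses surjectivity of $B(\rho_A)$ with a bare ``similarly'', whereas you argue injectivity directly from idempotency of boolean elements and explicitly establish $B(R(A))\subseteq K(R(A))=\rho(K(A))$ before lifting.
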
 
\begin{proof}
It is clear that $B(\l_{A})$ and $B(\rho_{A})$ are boolean morphisms. Thus their injectivity follows according to Lemma \ref{l(compact)},(3) and Lemma \ref{rho},(3). We shall prove that $B(\l_{A})$ is surjective. Let $\l(c)\in B(L(A))$, with $c\in K(A)$. Thus there exists $d\in K(A)$ such that $\l_{A}(c)\lor \l_{A}(d)=1$ and $\l_{A}(c)\land \l_{A}(d)=0$, hence by Lemma \ref{l(compact)}, (1) and (2), $\l_{A}(c\lor d)=1$ and $\l_{A}(c\cdot d)=0$. According to Lemma \ref{l(compact)},(3), is obtained $c\lor d=1$, therefore, by Lemma \ref{propQ},(3), $c^{n}\lor d^{n}=1$ for all integer $n \geq 1$. From $\l_{A}(c\cdot d)=0=\l(0)$ it results that $\rho_{A}(c\cdot d)=\rho_{A}(0)$, hence, by Lemma \ref{l(compact)},(4), there exists an integer $k\geq 1$ such that $c^{k}\cdot d^{k}=0$. According to Lemma \ref{propB(A)},(4), from $c^{k}\lor d^{k}=1$ and $c^{k}\cdot d^{k}=0$ it follows that $c^{k}, d^{k}\in B(A)$. By Lemma \ref{l(compact)},(5), $\l_{A}(c^{k})=\l(c)$, so $B(\l_{A})$ is surjective. Similarly, 
it follows that $B(\rho_{A})$ is surjective.
\end{proof}
If $\l_{A}(c)\in B(L(A))$ then $\mu_{A}(\l_{A}(c))=\rho_{A}(c)\in B(R(A))$. Hence one can consider the function 
$B(\mu_{A}):B(L(A))\to B(R(A))$ defined as $B(\mu_{A})=\mu_{A}|_{B(L(A))}$.
\begin{proposition}
$B(\mu_{A})$ is a boolean isomorphism and the following diagram is commutative:
\begin{center}
\begin{picture}(120,70)
\put(-5,45){$B(A)$}
\put(45,55){$B(\rho_{A})$}
\put(20,50){\vector(1,0){80}}
\put(105,45){$B(R(A))$}
\put(10,40){\vector(2,-1){45}}
\put(0,25){$B(\l_{A})$}
\put(60,20){\vector(2,1){45}}
\put(85,25){$B(\mu_{A})$}
\put(45,5){$B(L(A))$}
\end{picture}
\end{center}
\end{proposition}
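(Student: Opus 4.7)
The strategy is to derive this proposition directly from Proposition~\ref{l,rho,isomorphism} and the commutative triangle $\mu_A\circ\l_A=\rho_A$ established in the preceding remark. Since everything of substance (injectivity of $\mu_A$, the identity $\mu_A(\l_A(c))=\rho_A(c)$, the fact that $B(\l_A)$ and $B(\rho_A)$ are boolean isomorphisms) is already available, the proof reduces to an abstract diagram chase.

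First I would verify that $B(\mu_A)$ is well-defined and is a boolean morphism. Well-definedness is exactly the observation made just before the statement: if $\l_A(c)\in B(L(A))$ then $\mu_A(\l_A(c))=\rho_A(c)$, and Proposition~\ref{l,rho,isomorphism} (applied via $B(\rho_A)$) shows $\rho_A(c)\in B(R(A))$. Since $\mu_A$ is a morphism of bounded lattices, its restriction preserves $\vee,\wedge,0,1$; and because $R(A)$ is a (distributive) frame, complements in $B(R(A))$ are unique, which forces $B(\mu_A)$ to preserve complements. So $B(\mu_A)$ is a boolean morphism.

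Next, commutativity of the triangle: by the very definition of $\mu_A$ we have $\mu_A(\l_A(c))=\rho_A(c)$ for every $c\in K(A)$, and in particular for every $c\in B(A)\subseteq K(A)$. Restricting, this reads
\[
B(\mu_A)\bigl(B(\l_A)(e)\bigr)=B(\rho_A)(e)\qquad\text{for all }e\in B(A),
\]
which is the required identity $B(\mu_A)\circ B(\l_A)=B(\rho_A)$.

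Finally, for the isomorphism part: by Proposition~\ref{l,rho,isomorphism}, both $B(\l_A)$ and $B(\rho_A)$ are boolean isomorphisms, so the relation above yields $B(\mu_A)=B(\rho_A)\circ B(\l_A)^{-1}$. As a composition of two boolean isomorphisms, $B(\mu_A)$ is itself a boolean isomorphism. There is no real obstacle here; the only point that deserves care is the verification that $B(\mu_A)$ preserves complements (handled above via uniqueness of complements in a distributive lattice), since $\mu_A$ is stated only as a bounded lattice morphism.
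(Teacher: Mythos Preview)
Your proof is correct. The paper states this proposition without proof, evidently regarding it as an immediate consequence of Proposition~\ref{l,rho,isomorphism} together with the identity $\mu_A\circ\l_A=\rho_A$ from the preceding remark; your argument makes explicit exactly this deduction, and there is nothing to add.
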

A quantale $A$ is said to be {\em hyperarhimedean} if for any $c\in K(A)$ there exists an integer $n\geq 1$ such that $c^{n}\in B(A)$. Recall from \cite{Banaschewski} that a frame $L$ is zero-dimensional if any $a\in L$ is a join of complemented elements.
\begin{theorem}
\label{hyperarhimedian}
For a coherent quantale $A$, the following are equivalent:
\begin{list}{(\arabic{nr})}{\usecounter{nr}}
\item $A$ is hyperarhimedean.
\item $L(A)$ is a Boolean algebra.
\item $Max(A)=Spec(A)$.  \\
Moreover, if $A$ is semiprime then (1)-(3) are equivalent with the following assertion 
\item $R(A)$ is a zero-dimensional frame.
\end{list}
\end{theorem}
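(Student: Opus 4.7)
The plan is to establish $(1) \Leftrightarrow (2) \Leftrightarrow (3)$ in full generality, then add $(4)$ under the semiprime hypothesis. All three translations rest on the machinery already developed: the surjection $\lambda_A : K(A) \to L(A)$, the homeomorphism $u : Spec(A) \to Spec_{Id}(L(A))$, and the embedding $\mu_A : L(A) \to R(A)$ with image $K(R(A))$.

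For $(1) \Rightarrow (2)$, given $c \in K(A)$ pick $n$ with $c^n \in B(A)$; then $\lambda_A(c) = \lambda_A(c^n)$ by Lemma~\ref{l(compact)}(6), while $\lambda_A(c^n)$ is complemented in $L(A)$ by the preceding lemma, so $L(A) = B(L(A))$. For $(2) \Rightarrow (1)$, I replay the surjectivity argument inside the proof of Proposition~\ref{l,rho,isomorphism}: if $\lambda_A(c)$ and $\lambda_A(d)$ are complementary in $L(A)$, then $c \lor d = 1$ by Lemma~\ref{l(compact)}(3) and $(cd)^k = 0$ for some $k$ by Lemma~\ref{l(compact)}(5); combining $c^k \lor d^k = 1$ from Lemma~\ref{propQ}(3) with $c^k d^k = 0$ gives $c^k \in B(A)$ via Lemma~\ref{propB(A)}(4).

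For $(2) \Leftrightarrow (3)$, I invoke the Nachbin-type theorem recalled in the introduction: a bounded distributive lattice is Boolean iff every prime ideal is maximal. The homeomorphism $u$ restricts to a bijection between $Max(A)$ and $Max_{Id}(L(A))$, so $Spec(A) = Max(A)$ is equivalent to $Spec_{Id}(L(A)) = Max_{Id}(L(A))$, and hence by Nachbin to $L(A)$ being Boolean.

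Finally, assume $A$ is semiprime. Since $\mu_A$ provides a bounded-lattice isomorphism $L(A) \cong K(R(A))$, the lattice $L(A)$ is Boolean exactly when every compact element of $R(A)$ admits a complement inside $K(R(A))$. Such a sublattice complement $\rho(d)$ of $\rho(c)$ satisfies $\rho(c) \dot{\lor} \rho(d) = 1$ and $\rho(c) \land \rho(d) = \rho(0)$; the semiprime hypothesis turns the latter equality into $\rho(c) \land \rho(d) = 0$, so the sublattice complement is automatically a complement inside the ambient frame $R(A)$, giving $K(R(A)) \subseteq B(R(A))$. Coherence of $R(A)$, combined with the fact that every element is a join of its compact predecessors, then forces zero-dimensionality. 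Conversely, in a zero-dimensional frame every compact element is a finite join of complemented elements, hence itself complemented, so $K(R(A))$ and therefore $L(A)$ is Boolean. The main obstacle is precisely this last half: without semiprimality the bottoms of $K(R(A))$ and $R(A)$ disagree, making complementation in the sublattice strictly weaker than in the ambient frame, and the equivalence $(2) \Leftrightarrow (4)$ would break down.
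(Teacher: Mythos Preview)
Your arguments for $(1)\Leftrightarrow(2)$ and $(2)\Leftrightarrow(3)$ are correct and match the paper's: the paper packages $(1)\Leftrightarrow(2)$ as a chain of equivalences via Proposition~\ref{l,rho,isomorphism}, but the content is the same surjectivity computation you carry out, and both of you invoke Nachbin together with the spectrum homeomorphism for $(2)\Leftrightarrow(3)$.

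For part $(4)$ your route differs slightly from the paper's (you pass through $(2)$, the paper argues $(1)\Rightarrow(4)$ and $(4)\Rightarrow(1)$ directly), and your proof of the theorem as stated is valid, but your closing diagnosis contains a misconception. The bottom of the frame $R(A)$ is $\rho(0)$, not the $0$ of $A$: the frame structure is $(R(A),\dot\bigvee,\land,\rho(0),1)$, as recorded in Section~2. Since $K(R(A))=\rho(K(A))$ also has bottom $\rho(0)$, the two bottoms \emph{always} agree, and a complement inside the sublattice $K(R(A))$ is automatically a complement in $R(A)$ with no semiprime hypothesis required. Your $(2)\Rightarrow(4)$ therefore works unconditionally, as does the paper's $(1)\Rightarrow(4)$. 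The paper's only use of semiprimality is in $(4)\Rightarrow(1)$: from $\rho(c)\in B(R(A))$ it produces $e\in B(A)$ with $c^n\leq e\leq c$, obtains $(c\cdot e^{\perp})^n=0$, and then uses $\rho(0)=0$ to force $c\cdot e^{\perp}=0$, whence $c=e\in B(A)$ --- a conclusion strictly stronger than hyperarchimedean. Your own $(4)\Rightarrow(2)$ (compact $=$ finite join of complemented, hence complemented, hence $K(R(A))\cong L(A)$ is Boolean) sidesteps this and needs no semiprimality either. So the proof stands, but the sentence about ``the main obstacle'' misidentifies where the hypothesis enters.
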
 
\begin{proof}
(1) $\iff$ (2) According to Proposition \ref{l,rho,isomorphism}, the following properties are equivalent:
$L(A)$ is a Boolean algebra iff $L(A)=B(L(A))$ iff for any $c\in K(A)$, $\l(c)\in B(L(A))$ iff for any $c\in K(A)$, there exists $e\in B(A)$ such that $\l(c)=\l(e)$ iff for any $c\in K(A)$, there exists $e\in B(A)$ such that $\rho(c)=\rho(e)$ iff for any $c\in K(A)$, there exists $e\in B(A)$ and $n\geq 1$ such that $c^{n}\leq e\leq c$ iff for any $c\in K(A)$ there exists $n\geq 1$ such that $c^{n}\in B(A)$ iff $A$ is hyperarhimedian.\\
(1) $\iff$ (3) By Proposition \ref{l,rho,isomorphism} and the Nachbin Theorem \cite{BalbesDwinger}\\
(1) $\iff$ (4) Let $a$ be an arbitrary element in $A$, so $a=\lor _{i\in I}c_{i}$, with $c_{i}\in K(A)$ for any $i\in I$. Then $\rho(a)=\dot{\lor}_{i\in I}\rho(c_{i})$. According to the hypothesis (1), for each $i\in I$ there exists an integer $n_{i}\geq 1$ such that $c_{i}^{n_{i}}\in B(A_{i})$. Then $\rho(a)=\dot{\lor}_{i\in I}\rho(c_{i}^{n_{i}})$, with $\rho(c_{i}^{n_{i}})\in B(R(A))$ for each $i\in I$, so $R(A)$ is a zero-dimensional frame.\\
(4) $\iff$ (1) Let $c\in K(A)$ be such that $\rho(c)\in B(R(A))$. By Proposition \ref{l,rho,isomorphism}, there exists $e\in B(A)$, such that $\rho(c)=\rho(e)$, hence $c^{n}\leq e\leq c$ for some integer $n\geq 1$, therefore $c^{n}=e$. It follows that $(c\cdot e^{\perp})^{n}=0$. Since $A$ is semiprime it follows that $c\cdot e^{\perp}=0$, hence $c\leq e$, so $c=e\in B(A)$.
Let $c$ be an arbitrary element of $K(A)$. Then $\rho(c)\in K(R(A))$, hence, according to the hypothesis (4) and Proposition \ref{l,rho,isomorphism}, there exists a family $(e_{i})_{i\in I}$ in $B(A)$ such that $\rho(c)=\dot{\lor}_{i\in I}\rho(e_{i})$. Since $\rho(c)\in K(R(A))$, there exists a finite subset $I_{0}$ of $I$ such that $\rho(c)=\dot{\lor}_{i\in I}\rho(e_{i})\in B(R(A))$. By the previous remark, $c\in B(A)$.    
\end{proof}
\begin{remark}

According to Theorem \ref{hyperarhimedian} and the Nachbin theorem it follows that the hyperarhimedian bounded distributive lattices are exactly the Boolean algebras.
\end{remark}
\section{A lifting property}
 \hspace{0.5cm}Let $A$ be a commutative ring, $Idp(A)$ the Boolean algebra of its idempotents, $R(Id(A))$ the frame of radical ideals in $A$ and $B(R(Id(A)))$ the Boolean algebra of complemented elements of $R(Id(A))$. The two Boolean algebras $Idp(A)$ and $B(R(Id(A)))$ are isomorphic and the condition LIP can be expressed in terms of the frame $R(Id(A))$ (see \cite{Banaschewski}). Similar observations can be made in the case of the lifting properties for other concrete structures \cite{h}, \cite{e}, \cite{GeorgescuMuresan}, \cite{GeorgescuMuresanJMVL}, \cite{g}. Thus the lifting property (LP) in a coherent quantale $A$ will use the Boolean center $B(A)$ of $A$. The main tool for the study of this LP will be the Boolean isomorphisms established in Proposition \ref{l,rho,isomorphism}.
 
We fix a coherent quantale $A$. For any $a\in A$, consider the interval $[a)=\{x\in A|a\leq x\}$ and for all $x,y\in [a)_{A}$, denote $x\cdot_{a}y=x\cdot y\lor a$. It is easy to see that $[a)_{A}$ is closed under the new multiplication $\cdot_{a}$.
\begin{lemma}\label{[a)}
\begin{list}{(\arabic{nr})}{\usecounter{nr}}
\item $([a)_{A}, \lor,\land,\cdot_{a}, a, 1)$ is a coherent quantale such that $K([a)_{A})=K(A)\cap [a)_{A}$.
\item $\rho_{[a)}(x)=\rho_{A}(x)$, for any $x\in [a)_{A}$.
\end{list}
\end{lemma}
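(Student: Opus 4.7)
The strategy is to verify each quantale axiom for $[a)_A$ under $\cdot_a$, then identify its compact elements and its spectrum with suitable subsets of $K(A)$ and $\mathrm{Spec}(A)$. The key ambient fact I would use repeatedly is that $a \cdot z \leq a \cdot 1 = a$ for every $z \in A$, so any time $a \cdot (\text{stuff})$ appears inside a $\lor a$, it is absorbed.

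For (1), I would first note that $[a)_A$ is an up-set in a complete lattice, hence itself a complete lattice with bottom $a$ and top $1$ under the induced joins and meets. Commutativity of $\cdot_a$ is inherited from $\cdot$. For associativity I expand $(x \cdot_a y) \cdot_a z = (x y \lor a) \cdot z \lor a = x y z \lor a z \lor a = x y z \lor a$, using $az \leq a$, and the symmetric bracketing gives the same. The unit law is $x \cdot_a 1 = x \lor a = x$ since $a \leq x$. Infinite distributivity follows from distributivity in $A$ and the identity $\bigvee_i (x y_i \lor a) = (\bigvee_i x y_i) \lor a$, yielding $x \cdot_a \bigvee_i y_i = \bigvee_i (x \cdot_a y_i)$.

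For the compactness claim, the easy direction is that any $c \in K(A)$ with $c \geq a$ is compact in $[a)_A$: a cover $c \leq \bigvee_i x_i$ in $[a)_A$ is the same as in $A$, so compactness in $A$ gives a finite subcover. For the reverse inclusion I would take $b \in K([a)_A)$, use algebraicity of $A$ to write $b = \bigvee\{c \in K(A) : c \leq b\}$, join $a$ into each summand to obtain $b = \bigvee_c (c \lor a)$ inside $[a)_A$, then compactness in $[a)_A$ extracts a finite subfamily; since $K(A)$ is closed under finite joins this exhibits $b$ as $c' \lor a$ for a single compact $c' \in K(A)$, and the argument produces a representative in $K(A) \cap [a)_A$. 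Coherence of $[a)_A$ then follows because $c \cdot_a d = (cd) \lor a$ with $cd \in K(A)$ by coherence of $A$, and $1 \in K(A) \cap [a)_A$.

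For (2) I would first identify $\mathrm{Spec}([a)_A) = \{p \in \mathrm{Spec}(A) : a \leq p\}$. If $p \in \mathrm{Spec}(A)$ with $a \leq p$, then for $x, y \in [a)_A$ the inequality $x \cdot_a y \leq p$ is equivalent to $xy \leq p$ (since $a \leq p$), so primality in $A$ gives $x \leq p$ or $y \leq p$. Conversely, if $p \in \mathrm{Spec}([a)_A)$ and $x, y \in A$ satisfy $xy \leq p$, then $(x \lor a)(y \lor a) \leq xy \lor xa \lor ay \lor a^2 \leq xy \lor a \leq p$, so $(x \lor a) \cdot_a (y \lor a) \leq p$ and primality in $[a)_A$ gives $x \lor a \leq p$ or $y \lor a \leq p$, whence $x \leq p$ or $y \leq p$. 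With this identification, for $x \in [a)_A$ the condition $a \leq p$ on primes $p \geq x$ is automatic (since $a \leq x$), so
\[
\rho_{[a)}(x) \;=\; \bigwedge\{p \in \mathrm{Spec}([a)_A) : x \leq p\} \;=\; \bigwedge\{p \in \mathrm{Spec}(A) : x \leq p\} \;=\; \rho_A(x).
\]

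The main obstacle is the nontrivial half of the compactness identification: matching the intrinsic compacts of $[a)_A$ with $K(A) \cap [a)_A$ requires the algebraicity argument above and some care in exhibiting the witnessing compact as an element of $A$ that already dominates $a$, rather than merely a compact of the form $c \lor a$ with $c$ not a priori above $a$.
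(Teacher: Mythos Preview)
The paper states this lemma without proof, so there is no argument in the paper to compare against; I assess your attempt directly.

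Your verification of the quantale axioms on $[a)_A$ and your proof of part~(2) via the identification $\mathrm{Spec}([a)_A)=\{p\in\mathrm{Spec}(A):a\leq p\}$ are both correct.

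The gap lies exactly where you flag it: the inclusion $K([a)_A)\subseteq K(A)\cap[a)_A$. Your algebraicity argument yields only $b=c'\lor a$ with $c'\in K(A)$, and the passage from this to $c'\lor a\in K(A)$ is not merely delicate --- it cannot be completed in general. The bottom element $a$ of $[a)_A$ is always compact there (the empty subfamily witnesses any cover of the bottom), yet $a$ need not lie in $K(A)$; for a concrete instance take $A=Id(L)$ for a bounded distributive lattice $L$ admitting a non-principal ideal $I$, and set $a=I$. Hence the equality $K([a)_A)=K(A)\cap[a)_A$ as written in the lemma holds only under the additional hypothesis $a\in K(A)$. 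What your argument actually proves is the correct description $K([a)_A)=\{c\lor a:c\in K(A)\}$, and this is already enough: algebraicity of $[a)_A$ follows, closure of this set under $\cdot_a$ reduces (after the absorption you note) to closure of $K(A)$ under $\cdot$, and $1=1\lor a$ lies in it. So $[a)_A$ is coherent, and every subsequent use in the paper (preservation of compacts by $u_a^A$, the construction of $L([a)_A)$, etc.) goes through with this description. The defect is therefore in the lemma's stated formula for $K([a)_A)$, not in your strategy.
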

We denote $\rho=\rho_{A}$ and $\rho_{a}=\rho_{[a)_{A}}$. Let us consider the function $u_{a}^{A}:A\to [a)_{A}$ defined by $u_{a}^{A}(x)=x\lor a$, for any $x\in A$.
\begin{lemma}\label{uaA}
\begin{list}{(\arabic{nr})}{\usecounter{nr}}
\item $u_{a}^{A}$ is a unital quantale morphism.
\item If $c\in K(A)$ then $u_{a}^{A}(c)\in K([a)_{A})$.
\end{list}
\end{lemma}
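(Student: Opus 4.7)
The plan is to verify both parts by direct computation against the quantale axioms, with Lemma \ref{[a)} in hand to identify joins of nonempty families in $[a)_A$ with the corresponding joins in $A$.

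For part (1) I will check the three morphism requirements in turn. Unitality is immediate: $u_a^A(1) = 1\vee a = 1$, since $1$ is the top of $A$. Preservation of arbitrary joins reduces to the lattice identity $(\bigvee_{i} x_i)\vee a = \bigvee_i(x_i\vee a)$, combined with the remark that for a nonempty family in $[a)_A$ the join inside $[a)_A$ agrees with the one taken in $A$ (the empty case is trivial, since $a$ is the bottom of $[a)_A$). The substantive check is preservation of multiplication, i.e.\ $u_a^A(xy) = u_a^A(x)\cdot_a u_a^A(y)$, which amounts to showing $xy \vee a = (x\vee a)(y\vee a) \vee a$. Expanding the right-hand side via the infinite distributive law of the quantale gives
\[
(x\vee a)(y\vee a) \;=\; xy \,\vee\, xa \,\vee\, ay \,\vee\, a^{2}.
\]
The key observation is that every element of $A$ is $\leq 1$ and multiplication is monotone (a direct consequence of the distributive law), so $xa\leq 1\cdot a = a$, $ay\leq a\cdot 1 = a$, and $a^{2}\leq a$. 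Joining with $a$ therefore absorbs these three terms, leaving $xy\vee a$, as required.

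For part (2), let $c\in K(A)$ and suppose $c\vee a \leq \bigvee^{[a)_A}_{i\in I} x_i$ for some family $(x_i)_{i\in I}\subseteq [a)_A$. If $I=\emptyset$ then $c\vee a\leq a$, so $c\leq a$ and the empty subcover already works in $[a)_A$. Otherwise the join in $[a)_A$ coincides with the corresponding join in $A$, so $c\leq \bigvee_{i\in I} x_i$ in $A$; compactness of $c$ in $A$ then supplies a finite nonempty $J\subseteq I$ with $c\leq \bigvee_{i\in J} x_i$, and hence
\[
c\vee a \;\leq\; \bigvee_{i\in J} x_i \vee a \;=\; \bigvee_{i\in J} x_i \;=\; \bigvee\nolimits^{[a)_A}_{i\in J} x_i,
\]
since each $x_i\geq a$ for $i\in J$. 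This exhibits the finite subcover.

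The only delicate point is the multiplicative identity in part (1): it is essential to recognise that $x\leq 1$ forces $xa\leq a$ through monotonicity of the product. Everything else is routine manipulation once one has identified how joins in $[a)_A$ sit inside those of $A$.
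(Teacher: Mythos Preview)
Your proof is correct and follows essentially the same approach as the paper: the multiplicative check via the expansion $(x\lor a)(y\lor a)\lor a = xy\lor xa\lor ay\lor a^{2}\lor a = xy\lor a$ is exactly the computation the authors give, and they dismiss preservation of joins and part~(2) as immediate. You simply supply more detail than the paper does, particularly the explicit compactness argument for~(2), where the paper writes only ``Immediatelly.''
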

\begin{proof}
(1) For all $x,y\in A$ the following hold: $u_{a}^{A}(x)\cdot_{a}u_{a}^{A}(y)=(x\lor a)\cdot (y\lor a)\lor a=xy \lor ax \lor ay\lor a^{2}\lor a=xy\lor a= u_{a}^{A}(xy)$. It is easy to see that $u_{a}^{A}$ preserves the arbitrary joins, therefore $u_{a}^{A}$ is a unital quantale morphism.\\
(2) Immediatelly. 
\end{proof}
\begin{lemma}\label{u comut rho}
The following diagram is commutative:
\begin{center}
\begin{picture}(150,70)
\put(70,60){$\rho$}
\put(0,50){$A$}
\put(20,55){\vector(1,0){105}}
\put(130,50){$R(A)$}
\put(5,45){\vector(0,-1){30}}
\put(0,0){$[a)_{A}$}
\put(-10,30){$u_{a}^{A}$}
\put(25,5){\vector(1,0){100}}
\put(70,10){$\rho_{a}$}
\put(130,0){$[\rho(a))_{R(A)}$}
\put(135,45){\vector(0,-1){30}}
\put(140,30){$u_{\rho(a)}^{R(A)}$}
\end{picture}
\end{center}
\end{lemma}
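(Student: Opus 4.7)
The plan is to fix an arbitrary $x \in A$ and verify that
$$u_{\rho(a)}^{R(A)}(\rho(x)) \; = \; \rho_a(u_a^A(x))$$
by computing both sides and reducing each to the common value $\rho(x \lor a)$.

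First I would unwind the right-hand side. By definition of $u_a^A$ we have $u_a^A(x) = x \lor a$, which lies in $[a)_A$. Since $x \lor a \in [a)_A$, Lemma \ref{[a)}(2) gives $\rho_a(x \lor a) = \rho_A(x \lor a) = \rho(x \lor a)$. Thus the right-hand side equals $\rho(x \lor a)$.

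Next I would unwind the left-hand side. The morphism $u_{\rho(a)}^{R(A)}$ is defined using the join in the frame $R(A)$, which for elements of $R(A)$ is $\dot{\lor}$; hence
$$u_{\rho(a)}^{R(A)}(\rho(x)) = \rho(x) \,\dot{\lor}\, \rho(a) = \rho\bigl(\rho(x) \lor \rho(a)\bigr).$$
Now I invoke Lemma \ref{rho}(4) to rewrite $\rho\bigl(\rho(x) \lor \rho(a)\bigr) = \rho(x \lor a)$, so the left-hand side is also $\rho(x \lor a)$.

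Comparing the two expressions yields the commutativity of the diagram. There is really no serious obstacle here; the only point that requires care is remembering that the join in the frame $R(A)$ is not the join inherited from $A$ but the reflected operation $\dot{\lor}$, so that Lemma \ref{rho}(4) is exactly the tool needed to absorb the outer $\rho$ applied to $\rho(x) \lor \rho(a)$.
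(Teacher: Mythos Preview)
Your proof is correct and follows essentially the same route as the paper: the paper writes the single chain $u_{\rho(a)}^{R(A)}(\rho(x))=\rho(a)\dot{\lor}\rho(x)=\rho(a\lor x)=\rho_{a}(a\lor x)=\rho_{a}(u_{a}^{A}(x))$, invoking Lemma~\ref{[a)}(2) for the identification $\rho_a=\rho$ on $[a)_A$ and (implicitly) Lemma~\ref{rho}(4) for $\rho(a)\dot{\lor}\rho(x)=\rho(a\lor x)$, exactly as you do.
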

\begin{proof}
Consider an arbitrary element $x\in A$. Then, by Lemma \ref{[a)} (3), the following equalities hold:
$u_{\rho(a)}^{R(A)}(\rho(x))=\rho(a)\dot{\lor}\rho(x)=\rho(a\lor x)=\rho_{a}(a\lor x)=\rho_{a}(u_{a}^{A}(x))$.  
\end{proof}
We shall denote $\l=\l_{A}$ and $\l_{a}=\l_{[a)_{A}}$.
\begin{remark} According to Lemma \ref{uaA}(2), the quantale morphism $u_{a}^{A}$ preserves the compacts, so applying Proposition \ref{KlLmorphism}, the following diagram is commutative.
\begin{center}
\begin{picture}(150,70)
\put(70,60){$u_{a}^{A}$}
\put(0,50){$K(A)$}
\put(25,55){\vector(1,0){100}}
\put(130,50){$K([a)_{A})$}
\put(5,45){\vector(0,-1){30}}
\put(0,0){$L(A)$}
\put(-10,30){$\l$}
\put(25,5){\vector(1,0){100}}
\put(70,10){$L(u_{a}^{A})$}
\put(130,0){$L([a)_{A})$}
\put(135,45){\vector(0,-1){30}}
\put(140,30){$\l_{a}$}
\end{picture}
\end{center}
\end{remark}
\begin{proposition}\label{L([a))iso L(A)/a^{*}}
For any $a\in A$, the bounded distributive lattices $L([a)_{A})$ and $L(A)/a^{*}$ are isomorphic.
\end{proposition}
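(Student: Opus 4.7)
The plan is to exhibit a lattice isomorphism $\phi: L(A)/a^* \to L([a)_A)$ induced by the canonical morphism $L(u_a^A): L(A) \to L([a)_A)$ supplied by the preceding diagram. Concretely, for $c \in K(A)$, set
\[
\phi(\l(c)/a^*) = L(u_a^A)(\l(c)) = \l_a(c \lor a).
\]

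First, I would show that $L(u_a^A)$ annihilates $a^*$: if $\l(c) \in a^*$, then $c \in K(A)$ with $c \leq a$, so $u_a^A(c) = c \lor a = a$ and hence $L(u_a^A)(\l(c)) = \l_a(a) = 0$ in $L([a)_A)$. By the universal property of the quotient of a bounded distributive lattice by an ideal, $L(u_a^A)$ factors through $L(A)/a^*$, yielding the lattice morphism $\phi$. That $\phi$ is a morphism of bounded distributive lattices is immediate from Proposition~\ref{KlLmorphism}.

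Surjectivity of $\phi$ is straightforward using $K([a)_A) = K(A) \cap [a)_A$ from Lemma~\ref{[a)}(1): any element of $L([a)_A)$ has the form $\l_a(d)$ with $d \in K(A)$ and $a \leq d$, whence $d \lor a = d$ and $\phi(\l(d)/a^*) = \l_a(d \lor a) = \l_a(d)$.

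The main obstacle is injectivity. Suppose $\phi(\l(c)/a^*) = \phi(\l(d)/a^*)$, i.e.\ $\l_a(c \lor a) = \l_a(d \lor a)$. Translated via Lemma~\ref{[a)}(3) and the definition of $\l_a$, this yields $\rho(c \lor a) = \rho(d \lor a)$, hence $\rho(c) \lor \rho(a) = \rho(d) \lor \rho(a)$. From $\rho(c) \leq \rho(d \lor a)$ and Lemma~\ref{c^{k}}(2) applied to the compact element $c$, there exists $k \geq 1$ with $c^k \leq d \lor a$. Writing $a = \bigvee\{c' \in K(A) : c' \leq a\}$ (which holds since $A$ is algebraic) and using compactness of $c^k$, I extract a finite $e_1 \in K(A)$ with $e_1 \leq a$ and $c^k \leq d \lor e_1$. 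Symmetrically there exist $m \geq 1$ and $e_2 \in K(A)$ with $e_2 \leq a$ and $d^m \leq c \lor e_2$. Setting $e = e_1 \lor e_2 \in K(A)$, $e \leq a$, I then verify
\[
\rho(c \lor e) = \rho(c) \lor \rho(e) = \rho(d) \lor \rho(e) = \rho(d \lor e),
\]
using that $\rho(c) = \rho(c^k) \leq \rho(d \lor e)$ and symmetrically. This gives $\l(c \lor e) = \l(d \lor e)$, i.e.\ $\l(c) \lor \l(e) = \l(d) \lor \l(e)$, which together with $\l(e) \in a^*$ is precisely the ideal-congruence witness showing $\l(c)/a^* = \l(d)/a^*$. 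The delicate point throughout is converting the purely radical equality $\rho(c \lor a) = \rho(d \lor a)$ into a finite compact witness inside $a^*$; this is exactly where the coherence (algebraicity) of $A$ and the characterization of $\rho$ via Lemma~\ref{c^{k}} become indispensable.
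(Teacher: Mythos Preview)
Your proof is correct and follows the same overall strategy as the paper: both use the surjective lattice morphism $L(u_a^A):L(A)\to L([a)_A)$ from the preceding diagram and pass to the quotient $L(A)/a^*$. The paper's version is terser: it computes directly that $\operatorname{Ker}(L(u_a^A))=\{\l(x):L(u_a^A)(\l(x))=0\}=a^*$ via a chain of equivalences ($\l_a(a\lor x)=0\iff\rho(a\lor x)=\rho(a)\iff\l(x)\in a^*$), then invokes surjectivity to conclude. Your argument instead establishes well-definedness, surjectivity, and injectivity separately, and your injectivity step---extracting a compact witness $e\leq a$ with $\l(c)\lor\l(e)=\l(d)\lor\l(e)$---is in fact more careful: for bounded distributive lattices the congruence kernel of a surjective morphism is not in general determined by the ideal $f^{-1}(0)$ alone, so the paper's passage from ``$\operatorname{Ker}=a^*$'' to ``$L(A)/a^*\cong L([a)_A)$'' tacitly uses exactly the compactness argument you spell out. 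Two minor points: your reference to Lemma~\ref{[a)}(3) should be Lemma~\ref{[a)}(2), and the line ``$\rho(c)\lor\rho(a)=\rho(d)\lor\rho(a)$'' should read $\rho(c\lor a)=\rho(d\lor a)$ (or $\rho(c)\dot\lor\rho(a)=\rho(d)\dot\lor\rho(a)$); neither affects the subsequent reasoning.
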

\begin{proof}
Let $x\in K(A)$. According to the diagram from previous remark, $L(u_{a}^{A})(\l(x))=\l_{a}(u_{a}^{A}(x))=\l_{a}(a\lor x)$. Therefore, by Lemma   \ref{l(compact)}(1), the following equivalences hold:
$L(u_{a}^{A})(\l(x))=0$ iff $\l_{a}(a\lor x)=\l_{a}(a)$ iff $\rho(a\lor x)=\rho_{a}(a)$ iff $\rho(a\lor x)=\rho(a)$ iff $\l(a\lor x)=\l(a)$ iff $\l(a)\lor \l(x)=\l(a)$ iff $\l(x)\leq \l(a)$ iff $\l(x)\in (\l(a)]_{A}=a^{*}$. Thus $Ker(L(u_{a}^{A}))=\{\l(x)|x\in K(A), L(u_{a}^{A})(\l(x))=0\}= a^{*}$, hence because $L(u_{a}^{A})$ is surjective, it follows that $L([a)_{A})$ and $L(A)/a^{*}$ are isomorphic.
\end{proof}
By Lemma \ref{unitalMorph}, we can consider the boolean morphism $B(u_{a}^{A})=u_{a}^{A}|_{_{B(A)}}: B(A)\to B([a)_{A})$. The following diagram is commutative: 
\begin{center}
\begin{picture}(150,70)
\put(0,50){$B(A)$}
\put(25,55){\vector(1,0){100}}
\put(70,60){$B(u_{a}^{A})$}
\put(130,50){$B([a)_{A})$}
\put(3,44){\oval(4,4)[t]}
\put(5,45){\vector(0,-1){30}}
\put(0,0){$K(A)$}
\put(25,5){\vector(1,0){100}}
\put(70,10){$u_{a}^{A}|_{K(A)}$}
\put(130,0){$K([a)_{A})$}
\put(133,44){\oval(4,4)[t]}
\put(135,45){\vector(0,-1){30}}
\end{picture}
\end{center}
The following definition is inspired by the condition (3) of Lemma 4 in \cite{Banaschewski}.
\begin{definition}
\item An element $a\in A$ has the lifting property (LP) if the boolean morphism $B(u_{a}^{A})$ is surjective.
\item The quantale  $A$ has LP if every element $a\in A$ has LP.
\end{definition}
The lifting property introduced by previous definition generalizes the condition LIP from ring theory \cite{a}, as well as the other boolean lifting properties existing in literature \cite{Banaschewski}, \cite{h}, \cite{e}, \cite{GCM},\cite{GLC},\cite{GeorgescuMuresan},\cite{GeorgescuMuresanJMVL},\cite{f}.
\begin{remark}
\begin{list}{(\arabic{nr})}{\usecounter{nr}}
\item  Any element a with the property $B([a)_{A})=\{0,1\}$ has LP.
\item If $p$ is an m-prime element of $A$ then one can prove that $B([p)_{A})=\{0,1\}$, therefore, by (1), $p$ has LP. Particularly, any maximal element of $A$ has LP.
\end{list}
\end{remark}  
\begin{theorem}
\label{LPequiv}
The following assertions are equivalent:
\begin{list}{(\arabic{nr})}{\usecounter{nr}}
\item The quantale $A$ has LP;
\item The frame $R(A)$ has LP;
\item The lattice $L(A)$ has Id-BLP;
\end{list}
\end{theorem}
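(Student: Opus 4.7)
The plan is to transport the surjectivity question for $B(u_a^A)$ into two parallel settings, the frame $R(A)$ and the bounded distributive lattice $L(A)$, by applying the boolean isomorphisms of Proposition \ref{l,rho,isomorphism} both to $A$ itself and to the coherent quantale $[a)_A$. The latter is coherent by Lemma \ref{[a)}(1), so Proposition \ref{l,rho,isomorphism} genuinely applies to it. Fixing $a \in A$ throughout, each half of the argument reduces to a diagram chase at the level of boolean centres.

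For $(1) \iff (2)$, I would apply $B(-)$ to the commutative square of Lemma \ref{u comut rho}. Its vertical arrows become $B(\rho_A) : B(A) \to B(R(A))$ and $B(\rho_{[a)_A}) : B([a)_A) \to B([\rho(a))_{R(A)})$, both boolean isomorphisms; here I use the identification $R([a)_A) = [\rho(a))_{R(A)}$, which is immediate from Lemma \ref{[a)}(2). Consequently $B(u_a^A)$ is surjective iff $B(u_{\rho(a)}^{R(A)})$ is surjective. As $a$ varies over $A$, $\rho(a)$ ranges over all of $R(A)$ (each radical element being its own radical), so (1) is equivalent to (2).

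For $(1) \iff (3)$, I would apply $B(-)$ to the commutative square in the remark following Lemma \ref{u comut rho}, which relates $u_a^A$ to $L(u_a^A)$ via $\l$ and $\l_a$. Its vertical arrows become the boolean isomorphisms $B(\l_A)$ and $B(\l_{[a)_A})$, so $B(u_a^A)$ is surjective iff $B(L(u_a^A)) : B(L(A)) \to B(L([a)_A))$ is surjective. Next, the proof of Proposition \ref{L([a))iso L(A)/a^{*}} shows that $L(u_a^A)$ is a surjection of lattices with kernel exactly $a^*$, so under the induced isomorphism $L(A)/a^* \cong L([a)_A)$ it is conjugate to the canonical projection $p_{a^*} : L(A) \to L(A)/a^*$. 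Thus $B(L(u_a^A))$ is surjective iff $B(p_{a^*})$ is surjective, and since every ideal $I \in Id(L(A))$ has the form $a^*$ with $a = I_*$ (Lemma \ref{_^*}(2)), this yields $(1) \iff (3)$.

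The main obstacle is bookkeeping: one has to verify carefully that the isomorphism of Proposition \ref{L([a))iso L(A)/a^{*}} genuinely conjugates $L(u_a^A)$ to $p_{a^*}$, and that this conjugation restricts compatibly to the boolean centres via the inclusion $B(A) \subseteq K(A)$ of Lemma 4.4. Beyond that, the proof is a chain of diagram chases assembled from results already established.
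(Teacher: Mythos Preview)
Your proposal is correct and follows essentially the same approach as the paper: both arguments apply the functor $B(-)$ to the commutative diagrams of Lemma~\ref{u comut rho} and Remark~5.4, invoke the boolean isomorphisms of Proposition~\ref{l,rho,isomorphism} at the levels of $A$ and $[a)_A$, and use Proposition~\ref{L([a))iso L(A)/a^{*}} together with Lemma~\ref{_^*}(2) to pass between $L(u_a^A)$ and $p_{a^*}$. The only differences are organizational: the paper splits $(1)\Leftrightarrow(3)$ into two implications starting respectively from $I\in Id(L(A))$ and from $a\in A$, whereas you run a single biconditional chain; you are also slightly more explicit than the paper about why $\rho(a)$ exhausts $R(A)$ and why every ideal arises as some $a^*$. (Minor note: the inclusion $B(A)\subseteq K(A)$ is Lemma~4.3, not 4.4.)
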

\begin{proof}
(1) $\iff$ (2) According to Lemma\ref{u comut rho}, for any $a\in A$ the following diagram is commutative:
\begin{center}
\begin{picture}(150,75)
\put(70,61){$B(\rho)$}
\put(70,55){$\sim$}
\put(0,50){$B(A)$}
\put(25,55){\vector(1,0){100}}
\put(130,50){$B(R(A))$}
\put(5,45){\vector(0,-1){30}}
\put(-10,0){$B([a)_{A})$}
\put(-25,30){$B(u_{a}^{A})$}
\put(25,5){\vector(1,0){100}}
\put(70,5){$\sim$}
\put(70,11){$B(\rho_{a})$}
\put(130,0){$B([\rho_{a})_{R(A)})$}
\put(135,45){\vector(0,-1){30}}
\put(140,30){$B(u_{\rho(a)}^{R(A)})$}
\end{picture}
\end{center}
According to Proposition \ref{l,rho,isomorphism}, $B(\rho)$ and $B(\rho_{a})$ are boolean isomorphism, hence $B(u_{a}^{A})$ is surjective iff $B(u_{\rho(a)}^{R(A)})$ is surjective.\\
(1) $\imp$ (3) Let $I\in Id(L(A))$. According to the hypothesis (1), $B(u_{I_{*}}):B(A)\to B([I_{*}))$ is a surjective boolean morphism. According to the Lemma \ref{_^*},(2) $(I_{*})^{*}=I$ so by Proposition \ref{L([a))iso L(A)/a^{*}} we obtain a lattice isomorphism between $L([I_{*}))$ and $L(A)/I$. Consider the commutative diagrame: 
\begin{center}
\begin{picture}(250,70)
\put(70,60){$\l_{A}$}
\put(25,55){\vector(1,0){100}}
\put(0,50){$K(A)$}
\put(130,50){$L(A)$}
\put(5,45){\vector(0,-1){30}}
\put(-10,30){$u_{I_{*}}$}
\put(-5,0){$K([I_{*})_{A})$}
\put(70,10){$\l_{[I_{*})_{A}}$}
\put(35,5){\vector(1,0){90}}
\put(125,0){$L([I_{*})_{A})$}
\put(135,45){\vector(0,-1){30}}
\put(140,30){$L(u_{I_{*}})$}
\put(155,50){\vector(2,-1){80}}
\put(200,30){$p_{I}$}
\put(165,5){\vector(1,0){70}}
\put(190,5){$\sim $}
\put(240,0){$L(A)/I$}
\end{picture}
\end{center}
where $p_{I}$ is the lattice morphism associated to the ideal $I$.

It follows that in the categorie of Boolean algebras the following diagram is commutative:
\begin{center}
\begin{picture}(250,70)
\put(70,62){$B(\l_{A})$}
\put(70,55){$\sim$}
\put(25,55){\vector(1,0){100}}
\put(0,50){$B(A)$}
\put(130,50){$B(L(A))$}
\put(5,45){\vector(0,-1){30}}
\put(-25,30){$B(u_{I_{*}})$}
\put(-5,0){$B([I_{*})_{A})$}
\put(70,12){$B(\l_{[I_{*})_{A}})$}
\put(70,5){$\sim$}
\put(35,5){\vector(1,0){90}}
\put(125,0){$B(L([I_{*})_{A}))$}
\put(135,45){\vector(0,-1){30}}
\put(135,30){$B(L(u_{I_{*}}))$}
\put(165,45){\vector(2,-1){75}}
\put(200,30){$B(p_{I})$}
\put(180,5){\vector(1,0){60}}
\put(190,5){$\sim $}
\put(240,0){$B(L(A)/I)$}
\end{picture}
\end{center}
According to Proposition \ref{l,rho,isomorphism} $B(\l_{A})$ and $B(\l_{I_{*}})$ are boolean isomorphisms, hence, by the previous diagram the following implication holds: $B(u_{I_{*}})$ is surjective $\imp$ $B(p_{I})$ is surjective. We have proven that $B(p_{I})$ is surjective for every ideal $I$ of $L(A)$, therefore the lattice $L(A)$ has Id-BLP.\\
(3)$\imp$ (1) Let $a\in A$ so $a^{*}$ is an ideal of $L(A)$. By the hypothesis (3), the boolean morphism $B(p_{a^{*}}):B(L(A))\to B(L(A)/a^{*})$ is surjective. Let us consider the following commutative diagram:
\begin{center}
\begin{picture}(250,70)
\put(70,60){$\l_{A}$}
\put(25,55){\vector(1,0){100}}
\put(0,50){$K(A)$}
\put(130,50){$L(A)$}
\put(5,45){\vector(0,-1){30}}
\put(-10,30){$u_{a}$}
\put(-5,0){$K([a)_{A})$}
\put(70,10){$\l_{[a)_{A}}$}
\put(30,5){\vector(1,0){90}}
\put(125,0){$L([a)_{A})$}
\put(135,45){\vector(0,-1){30}}
\put(140,30){$L(u_{a})$}
\put(155,50){\vector(2,-1){80}}
\put(205,30){$p_{a^{*}}$}
\put(165,5){\vector(1,0){70}}
\put(190,5){$\sim $}
\put(240,0){$L(A)/a^{*}$}
\end{picture}
\end{center}
where $p_{a^{*}}:L(A)\to L(A)/a^{*}$ is the lattice morphism associated to the ideal $a^{*}$ and the lattice isomorphism between $L([a)_{A})$ and $L(A)/a^{*}$ is due to Proposition \ref{L([a))iso L(A)/a^{*}}. Thus is obtained the following commutative diagram in the category of Boolean algebras:
\begin{center}
\begin{picture}(250,70)
\put(70,60){$B(\l_{A})$}
\put(25,55){\vector(1,0){100}}
\put(0,50){$B(A)$}
\put(130,50){$B(L(A))$}
\put(5,45){\vector(0,-1){30}}
\put(-25,30){$B(u_{a})$}
\put(-5,0){$B([a)_{A})$}
\put(70,10){$B(\l_{a})$}
\put(30,5){\vector(1,0){90}}
\put(120,0){$B(L([a)_{A}))$}
\put(135,45){\vector(0,-1){30}}
\put(135,30){$B(L(u_{a}))$}
\put(160,45){\vector(2,-1){75}}
\put(205,30){$B(p_{a^{*}})$}
\put(170,5){\vector(1,0){65}}
\put(190,5){$\sim $}
\put(235,0){$B(L(A)/a^{*})$}
\end{picture}
\end{center}
By the hypothesis (3), $B(p_{a^{*}})$ is surjective , therefore, from previous commutative diagram, we get that $B(u_{a})$ is surjective. Conclude that the quantale $A$ has LP.       
\end{proof}
A quantale $A$ is local iff $|Max(A)|=1$; $A$ is semilocal iff has a finite number of maximal elements.
A bounded distributive lattice $L$ is said to be Id-local iff $|Max_{Id}(L)|=1$; A is Id-semilocal iff $Max_{Id}(L)$ is a finite set.
\begin{corollary}
The following assertions are equivalents:
\begin{list}{(\arabic{nr})}{\usecounter{nr}}
\item $A$ is a local quantale;
\item $R(A)$ is a local frame:
\item $L(A)$ is Id-local.
\end{list}
\end{corollary}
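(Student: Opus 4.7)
The plan is to derive the three-way equivalence directly from the structural correspondences between $Max(A)$, $Max(R(A))$ and $Max_{Id}(L(A))$ that have already been established earlier in the paper, rather than argue about local behavior from scratch. Since each of the three conditions is a statement purely about the cardinality (being exactly one) of a space of maximal elements/ideals, it suffices to exhibit bijections among these three sets.

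First I would recall Lemma \ref{coherent=>MaxA=MaxRA}, which states that for a coherent quantale $A$ we have $Max(A) = Max(R(A))$. This gives the equivalence (1) $\iff$ (2) immediately, because $A$ is local iff $|Max(A)| = 1$ iff $|Max(R(A))| = 1$ iff $R(A)$ is a local frame. (The fact that $R(A)$ is a coherent frame, shown earlier, ensures that ``local'' has the analogous meaning in the frame setting.)

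Next I would invoke the remark following the proposition that $u : Spec(A) \to Spec_{Id}(L(A))$ and $v : Spec_{Id}(L(A)) \to Spec(A)$ are mutually inverse homeomorphisms, whose restrictions to maximal spectra are also mutually inverse homeomorphisms between $Max(A)$ and $Max_{Id}(L(A))$. In particular $|Max(A)| = |Max_{Id}(L(A))|$, so $A$ is local iff $L(A)$ is Id-local, giving (1) $\iff$ (3).

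The statement then follows by transitivity. There is no real obstacle here: the content of the corollary is entirely absorbed by the homeomorphisms $Max(A) \cong Max(R(A))$ (Lemma \ref{coherent=>MaxA=MaxRA}) and $Max(A) \cong Max_{Id}(L(A))$ (restriction of $u, v$), since localness is a cardinality condition preserved under bijection. If anything, the only point worth checking is that the definition of ``local frame'' via $|Max(R(A))| = 1$ agrees with the intended one, which is the standard notion for coherent frames and matches the characterization $a \vee b = 1 \Rightarrow a = 1 \text{ or } b = 1$ in the frame $R(A)$.
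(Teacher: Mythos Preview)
Your proposal is correct and is precisely the argument the paper implicitly relies on. The paper states this corollary without proof; the content follows immediately from Lemma~\ref{coherent=>MaxA=MaxRA} (giving $Max(A)=Max(R(A))$) together with the remark after Proposition~3.7 that the restrictions of $u,v$ yield mutually inverse homeomorphisms between $Max(A)$ and $Max_{Id}(L(A))$, exactly as you invoke.
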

\begin{corollary}
 The following assertions are equivalent:
\begin{list}{(\arabic{nr})}{\usecounter{nr}}
\item $A$ is a semilocal quantale;
\item $R(A)$ is a semilocal frame:
\item $L(A)$ is Id-semilocal.
\end{list}
\end{corollary}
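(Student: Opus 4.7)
The plan is to observe that all three conditions reduce to the cardinality of the maximal spectrum of the respective structure being finite, and then to transport this finiteness statement through the identifications and homeomorphisms already established in the earlier sections. Concretely, the strategy is identical to the one implicit in the previous corollary on local quantales; we only need to replace the singleton condition by finiteness.

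First I would dispose of the equivalence $(1) \iff (2)$ by invoking Lemma \ref{coherent=>MaxA=MaxRA}, which asserts the set-theoretic equality $Max(A) = Max(R(A))$ (the coherence hypothesis on $A$ guarantees this; in particular $R(A)$ is a coherent frame). Since the two sets literally coincide, their cardinalities do, so one is finite precisely when the other is.

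Next, for $(1) \iff (3)$, I would appeal to the proposition stating that the assignments $u: Spec(A) \to Spec_{Id}(L(A))$ and $v: Spec_{Id}(L(A)) \to Spec(A)$ defined by $u(p) = p^{*}$ and $v(P) = P_{*}$ are mutually inverse homeomorphisms, together with the remark following it: their restrictions to $Max(A)$ and $Max_{Id}(L(A))$ are again mutually inverse homeomorphisms. This yields a bijection $Max(A) \leftrightarrow Max_{Id}(L(A))$, and therefore $|Max(A)|$ is finite iff $|Max_{Id}(L(A))|$ is finite, which is exactly $(1) \iff (3)$.

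There is no genuine obstacle here; the only thing to double-check is that the bijection between maximal elements actually uses Lemma \ref{_^*} (items (3) and (4)) to guarantee that $u$ and $v$ send maximal elements to maximal elements, which is already contained in the quoted remark following the Spec-homeomorphism. Once that identification is in place, the corollary follows immediately by transitivity of the three equivalences.
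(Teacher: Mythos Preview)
Your proposal is correct and follows exactly the route the paper has in mind: the corollary is stated without proof in the paper, as it is an immediate consequence of the bijections $Max(A)=Max(R(A))$ (Lemma~\ref{coherent=>MaxA=MaxRA}) and $Max(A)\cong Max_{Id}(L(A))$ (the remark after the $Spec$-homeomorphism proposition), applied with ``finite'' in place of ``singleton''. There is nothing to add.
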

\begin{corollary}
Any local quantale has LP
\end{corollary}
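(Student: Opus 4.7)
My plan is to route the argument through Theorem \ref{LPequiv}: since $A$ local is equivalent to $L(A)$ being Id-local (by the preceding corollary), it suffices to prove that every Id-local bounded distributive lattice has Id-BLP, and then read off LP for $A$ from the implication (3)$\imp$(1).

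The key preliminary observation is that any Id-local bounded distributive lattice $L$ has trivial Boolean center. Indeed, if $e\in B(L)$ has complement $f$, then $e\lor f=1$, and the Id-local property forces $e=1$ or $f=1$; in the second case $e=e\land f=0$. Hence $B(L)=\{0,1\}$. Applied to $L(A)$, this gives $B(L(A))=\{0,1\}$.

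The next step is to verify that $L(A)$ has Id-BLP. Fix an ideal $I$ of $L(A)$; the case $I=L(A)$ is trivial because $L(A)/I$ is one-element. If $I$ is proper, I would show that $L(A)/I$ is again Id-local: using the standard description of the lattice congruence modulo $I$ (``$a\equiv b$ iff $a\lor i=b\lor i$ for some $i\in I$''), the hypothesis $a/I\lor b/I=1/I$ translates into $a\lor b\lor i=1$ in $L(A)$ for some $i\in I$; since $I$ is proper, $i<1$, so one application of the Id-local property gives $a\lor b=1$, and a second gives $a=1$ or $b=1$, i.e.\ $a/I=1/I$ or $b/I=1/I$. Then by the previous paragraph $B(L(A)/I)=\{0,1\}$ too, and the morphism $B(p_I):B(L(A))\to B(L(A)/I)$ is trivially surjective, sending $0\mapsto 0$ and $1\mapsto 1$.

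I expect the only delicate point to be the quotient computation, specifically identifying the lattice congruence modulo $I$ with the Stone-type relation and invoking $I\neq L(A)$ to guarantee $i<1$, which is what makes the second use of Id-local legal. Once Id-BLP for $L(A)$ is established, Theorem \ref{LPequiv} immediately yields that $A$ has LP.
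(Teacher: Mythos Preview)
Your proof is correct and follows essentially the same route as the paper: pass to the reticulation, use that $A$ local implies $L(A)$ Id-local, show $L(A)$ has Id-BLP, and read off LP for $A$ via Theorem~\ref{LPequiv}. The only difference is that the paper outsources the step ``Id-local $\Rightarrow$ Id-BLP'' to Proposition~18 of \cite{h}, whereas you supply a direct argument (trivial Boolean center of $L(A)$, then $L(A)/I$ again Id-local for proper $I$, hence $B(p_I)$ surjective). Your quotient computation is fine: the congruence description you use is the standard one for ideals in a distributive lattice, and the two-step application of the Id-local property is exactly right.
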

\begin{proof}
 If $A$ is local then $L(A)$ is an Id-local bounded distributive lattice.According to Proposition 18, \cite{h}, the bounded distributive lattice $L(A)$ has Id-BLP. Thus, by Proposition \ref{LPequiv}, the quantale $A$ has LP.
\end{proof}
\begin{corollary}
If the reticulation $L(A)$ is a chain then the quantale $A$ has LP.
\end{corollary}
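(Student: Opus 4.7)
The plan is to reduce to the lattice-theoretic fact that a chain has only trivial complemented elements, then invoke Theorem \ref{LPequiv}.

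First I would verify that in any bounded chain $C$, the Boolean center is trivial, i.e.\ $B(C)=\{0,1\}$. Indeed, if $a\in C$ admits a complement $b\in C$ with $a\lor b=1$ and $a\land b=0$, then since $C$ is totally ordered one of $a,b$ equals $a\lor b=1$ and the other equals $a\land b=0$.

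Next I would observe that, for any ideal $I$ of a chain $C$, the quotient $C/I$ is again a chain: the order on $C/I$ is the image of the order on $C$, which remains total. Applying the previous remark to $L(A)$ (a chain by hypothesis) and to $L(A)/I$, we obtain $B(L(A))=\{0,1\}$ and $B(L(A)/I)=\{0,1\}$ for every $I\in Id(L(A))$. Since any lattice morphism preserves $0$ and $1$, the induced boolean morphism $B(p_{I}):B(L(A))\to B(L(A)/I)$ must send $0\mapsto 0$ and $1\mapsto 1$, hence is surjective. This says precisely that $L(A)$ has Id-BLP.

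Finally, I would invoke Theorem \ref{LPequiv}, whose equivalence (1)$\iff$(3) gives that the quantale $A$ has LP. There is no real obstacle here: the argument is essentially the remark that a chain has a degenerate Boolean center stable under quotients, which makes Id-BLP automatic, and the heavy lifting has already been done in Theorem \ref{LPequiv}.
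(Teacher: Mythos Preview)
Your argument is correct and follows the same route as the paper: show that $L(A)$ has Id-BLP and then apply Theorem \ref{LPequiv}. The only difference is that the paper obtains Id-BLP for chains by citing Corollary~4 of \cite{h}, whereas you supply the short direct argument (trivial Boolean center in a chain, stability under quotients); this is a harmless and arguably preferable elaboration.
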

\begin{proof}
By Corollary 4, \cite{h} , the chain $L(A)$ has Id-BLP, hence, by Proposition \ref{LPequiv}, it follows that $A$ has BLP.
\end{proof}
\begin{corollary}
Any hyperarhimedian quantale $A$ has LP.
\end{corollary}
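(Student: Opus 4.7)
The plan is to chain together the two main structural results already proved in the paper: the characterization of hyperarhimedian quantales via the reticulation (Theorem \ref{hyperarhimedian}) and the equivalence of LP in $A$ with Id-BLP in $L(A)$ (Theorem \ref{LPequiv}). The intermediate step is the observation that every Boolean algebra trivially has Id-BLP.

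Concretely, assume $A$ is a hyperarhimedean coherent quantale. By the equivalence (1)$\iff$(2) of Theorem \ref{hyperarhimedian}, the reticulation $L(A)$ is a Boolean algebra. Thus every element of $L(A)$ is complemented, i.e.\ $B(L(A))=L(A)$. For any ideal $I$ of $L(A)$, the quotient lattice $L(A)/I$ is again a Boolean algebra (a homomorphic image of a Boolean algebra is Boolean), so $B(L(A)/I)=L(A)/I$. Consequently the boolean morphism $B(p_I):B(L(A))\to B(L(A)/I)$ coincides with the canonical surjection $p_I:L(A)\to L(A)/I$, and in particular $B(p_I)$ is surjective.

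Since this surjectivity holds for every $I\in Id(L(A))$, the lattice $L(A)$ has Id-BLP. Invoking the implication (3)$\imp$(1) of Theorem \ref{LPequiv}, we conclude that the quantale $A$ has LP.

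There is no substantive obstacle: both heavy lifting theorems (Theorem \ref{hyperarhimedian} and Theorem \ref{LPequiv}) are already available, and the only genuinely new remark needed is the triviality that a Boolean algebra has Id-BLP. If one wished to avoid even that remark, one could instead cite Proposition 18 of \cite{h} (already used in the proof that local quantales have LP), since Id-local lattices are a special case of Boolean algebras being handled uniformly there; but the direct argument above is simpler and entirely self-contained within this paper.
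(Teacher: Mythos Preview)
Your proof is correct and follows exactly the same route as the paper: use Theorem~\ref{hyperarhimedian} to get that $L(A)$ is Boolean, observe that a Boolean algebra has Id-BLP, then apply Theorem~\ref{LPequiv}. The paper's version is terser (it simply asserts ``hence it has Id-BLP''), whereas you spell out why that holds; your final parenthetical about Id-local lattices being a special case is muddled and should be dropped, but it is irrelevant to the argument.
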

\begin{proof}
By Proposition \ref{hyperarhimedian}, $L(A)$ is a Boolean algebra, hence it has Id-BLP. Then we apply Proposition \ref{LPequiv}.  
\end{proof}
\begin{proposition}\label{A_lp=>[a)_lp}
Let $a\in A$. If $A$ has LP then $[a)_{A}$ has LP.
\end{proposition}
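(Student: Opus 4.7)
The plan is to use a simple functoriality/factorization argument. Fix $b \in [a)_A$; we must show that $b$ has LP inside $[a)_A$, i.e.\ that $B(u_b^{[a)_A}): B([a)_A) \to B([b)_{[a)_A})$ is surjective.

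First I would verify the identification $[b)_{[a)_A} = [b)_A$ as quantales. As posets they coincide since $b \geq a$ already forces every $x \geq b$ to satisfy $x \geq a$; and the induced multiplication agrees because for $x,y \geq b$ we have $(x\cdot_a y)\vee b = (xy \vee a) \vee b = xy \vee b = x \cdot_b y$, using $a \leq b$. So the target of $u_b^{[a)_A}$ is literally the coherent quantale $[b)_A$ in which LP for $A$ at $b$ is formulated.

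Next I would establish the factorization $u_b^A = u_b^{[a)_A} \circ u_a^A$. This is the calculation $(y \vee a) \vee b = y \vee b$ for every $y \in A$, which holds because $a \leq b$. By Lemma \ref{unitalMorph} each of these unital quantale morphisms restricts to a boolean morphism on Boolean centers, and the composition relation is preserved, giving
\[
B(u_b^A) \;=\; B(u_b^{[a)_A}) \circ B(u_a^A): B(A) \to B([b)_A).
\]

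Finally, the hypothesis that $A$ has LP says $B(u_b^A)$ is surjective. Since the surjectivity of a composition $g \circ f$ implies the surjectivity of $g$, we conclude that $B(u_b^{[a)_A})$ is surjective, which is precisely LP for $b$ in $[a)_A$. As $b \in [a)_A$ was arbitrary, $[a)_A$ has LP. There is no real obstacle here; the only point to double-check is the equality $[b)_{[a)_A} = [b)_A$ of quantale structures, after which the argument is a one-line diagram chase.
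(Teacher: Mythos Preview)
Your proof is correct and follows essentially the same approach as the paper: identify $[b)_{[a)_A}=[b)_A$, use the factorization $u_b^A = u_b^{[a)_A}\circ u_a^A$, pass to Boolean centers, and deduce surjectivity of $B(u_b^{[a)_A})$. Your version is in fact slightly tidier, since you observe that only the surjectivity of $B(u_b^A)$ is needed (the paper invokes surjectivity of both $B(u_a^A)$ and $B(u_b^A)$, though the former is unnecessary).
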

\begin{proof}
Let $b\in [a)_{A}$.We remark that $[b)_{[a)_{A}}=[b)_{A}$ and the following diagram is commutative
\begin{center}
\begin{picture}(100,70)
\put(0,50){$A$}
\put(75,50){$[a)_{A}$}
\put(75,0){$[b)_{A}$}
\put(15,55){\vector(1,0){60}}
\put(10,45){\vector(2,-1){60}}
\put(80,45){\vector(0,-1){30}}
\put(40,60){$u_{a}^{A}$}
\put(20,20){$u_{b}^{A}$}
\put(85,30){$u_{b}^{[a)_{A}}$}
\end{picture}
\end{center}
therefore the following diagram of Boolean algebras is commutative
\begin{center}
\begin{picture}(100,70)
\put(-15,50){$B(A)$}
\put(75,50){$B([a)_{A})$}
\put(75,0){$B([b)_{A})$}
\put(15,55){\vector(1,0){60}}
\put(10,45){\vector(2,-1){60}}
\put(80,45){\vector(0,-1){30}}
\put(35,60){$B(u_{a}^{A})$}
\put(20,20){$B(u_{b}^{A})$}
\put(85,30){$B(u_{b}^{[a)_{A}})$}
\end{picture}
\end{center}
Since $A$ has LP, the boolean mophisms $B(u_{a}^{A})$ and $B(u_{b}^{A})$ are surjective. From the previous commutative diagram it follows that $B(u_{b}^{[a)_{A}})$ is also surjective , hence $[a)_{A}$ has LP.
\end{proof}
If $u:A\to B$ is a quantale morphism then we denote $Ker(u)=\bigvee\{a\in A|u(a)=0\}$.
\begin{lemma}\label{u inj iff Ker(u)=0}
Let $u:A\to B$ a unital quantale morphism. Then $u$ is injective iff $Ker(u)=0$ 
\end{lemma}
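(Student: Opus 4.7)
The equivalence splits into an immediate direction and a more delicate one.

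For the direction $(\Rightarrow)$, if $u$ is injective then $u(a) = 0 = u(0)$ forces $a = 0$, so the set $\{a \in A : u(a) = 0\}$ reduces to $\{0\}$ and $Ker(u) = \bigvee \{0\} = 0$.

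For the direction $(\Leftarrow)$, I would first observe that since $u$ preserves arbitrary joins, the preimage $u^{-1}(\{0\}) = \{a \in A : u(a) = 0\}$ is itself closed under arbitrary joins. Therefore $Ker(u)$ belongs to this set and is its maximum element, so the hypothesis $Ker(u) = 0$ is equivalent to $u^{-1}(\{0\}) = \{0\}$. The task is then to promote this ``injectivity at zero'' to full injectivity.

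Suppose $u(a) = u(b)$. Using the preservation of joins, $u(a \vee b) = u(a) \vee u(b) = u(a) = u(b)$, so we may replace $a$ and $b$ with $a \vee b$ and reduce to showing: if $a \leq b$ and $u(a) = u(b)$, then $a = b$. The strategy is to produce an element of the kernel that records the discrepancy between $a$ and $b$. Using residuation and multiplicativity, from $a \cdot a^{\perp} = 0$ and $u(a) = u(b)$ one computes
$u(b \cdot a^{\perp}) = u(b) \cdot u(a^{\perp}) = u(a) \cdot u(a^{\perp}) = u(a \cdot a^{\perp}) = u(0) = 0$,
hence $b \cdot a^{\perp} \leq Ker(u) = 0$, i.e., $b \cdot a^{\perp} = 0$, equivalently $b \leq a^{\perp\perp}$.

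The main obstacle I expect is turning the estimate $b \leq a^{\perp\perp}$ into the desired equality $b = a$. This step is where a more refined use of the quantale structure is needed: one may have to iterate the previous construction with other residuated expressions, or invoke the unital and commutative hypotheses in a more essential way, in order to squeeze $b$ between $a$ and itself. I would expect this to be the only genuinely non-routine part of the proof, and would attack it by examining small examples to pinpoint the right residuated identity.
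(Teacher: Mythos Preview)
Your approach is essentially the paper's own: from $u(x)\leq u(y)$ one computes $u(x\cdot y^{\perp})=0$, deduces $x\cdot y^{\perp}=0$ from $Ker(u)=0$, and then wants to conclude $x\leq y$. You correctly flag that last implication as the sticking point, since $x\cdot y^{\perp}=0$ only gives $x\leq y^{\perp\perp}$, and in a general commutative unital quantale one need not have $y=y^{\perp\perp}$. The paper simply writes ``$x\cdot y^{\perp}=0\Rightarrow x\leq y$'' without further argument, so your instinct that something is missing is exactly right; you have not overlooked a trick.

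In fact the obstacle is fatal: the lemma is false as stated. Take the three-element chain $A=\{0,a,1\}$ viewed as a frame (so the multiplication is $\wedge$), let $B=\{0,1\}$, and define $u:A\to B$ by $u(0)=0$, $u(a)=u(1)=1$. One checks directly that $u$ preserves arbitrary joins and the multiplication and is unital; moreover $u^{-1}(0)=\{0\}$, so $Ker(u)=0$. Yet $u$ is not injective. Thus the step you were unable to complete cannot be completed in general, and no iteration of residuated identities will repair it without an extra hypothesis (for instance that every element of $A$ satisfies $y=y^{\perp\perp}$, which fails already for $y=a$ in the example above since $a^{\perp}=0$ and $a^{\perp\perp}=1$).
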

\begin{proof}
Assume that $Ker(u)=0$ and consider two elements $x,y\in A$, $u(x)\leq u(y)$. Since $u(y^{\perp})\leq u(y)^{\perp}$ the following implications hold:
$u(x)\leq u(y) \imp u(x\cdot y^{\perp})=u(x)\cdot u(y^{\perp})\leq u(x)\cdot u(y)^{\perp}\leq 0 \imp u(x\cdot y^{\perp})=0\imp x\cdot y^{\perp}=0 \imp x\leq y$. Thus u is injective. The converse implication is immediate.
\end{proof}
\begin{corollary}\label{u surj=>u iso pe Ker(u)}
If the morphism $u:A\to B$ is surjective then $u|_{[Ker(u))_{A}}:[Ker(u))_{A}\to B$ is a quantale isomorphism.
\end{corollary}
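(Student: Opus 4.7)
My plan is to verify in turn that the restriction $v := u|_{[K)_A}$, where $K := \mathrm{Ker}(u)$, is (i) a well-defined unital quantale morphism from $([K)_A,\lor,\land,\cdot_K,K,1)$ to $B$, (ii) surjective, and (iii) injective, the last step using Lemma \ref{u inj iff Ker(u)=0}. Bijectivity together with preservation of arbitrary joins then automatically upgrades $v$ to a quantale isomorphism, since the inverse of a join-preserving bijection between complete lattices is itself join-preserving (it is an order-isomorphism).

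The first observation is that $u(K)=0$: because $u$ preserves arbitrary joins, $u(K)=u\bigl(\bigvee\{a\in A\mid u(a)=0\}\bigr)=\bigvee\{u(a)\mid u(a)=0\}=0$. In particular $K\in[K)_A$ and $v(K)=0$. Next, for $x,y\in[K)_A$,
\[
v(x\cdot_K y)=u(x\cdot y\lor K)=u(x)u(y)\lor u(K)=u(x)u(y)=v(x)v(y),
\]
so $v$ is compatible with the multiplication of $[K)_A$ (recall from Lemma \ref{[a)} that $[K)_A$ is a quantale with this multiplication); it manifestly preserves arbitrary joins (joins in $[K)_A$ coincide with those in $A$ for families from $[K)_A$), and $v(1)=1$. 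Hence $v$ is a unital morphism of quantales.

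For surjectivity, pick $b\in B$; by hypothesis there is $a\in A$ with $u(a)=b$, and then $a\lor K\in[K)_A$ satisfies $v(a\lor K)=u(a)\lor u(K)=b$. For injectivity I compute the kernel of $v$ inside $[K)_A$:
\[
\mathrm{Ker}(v)=\bigvee\{x\in[K)_A\mid v(x)=0\}=\bigvee\{x\in A\mid x\geq K,\,u(x)=0\}.
\]
Any $x\in A$ with $u(x)=0$ satisfies $x\leq K$ by the definition of $K$, so if additionally $x\geq K$ then $x=K$. Thus the set on the right is $\{K\}$, and $\mathrm{Ker}(v)=K$, which is precisely the bottom element of $[K)_A$. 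By Lemma \ref{u inj iff Ker(u)=0} applied to $v$, the morphism $v$ is injective.

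The only slightly delicate point is matching the two meanings of ``kernel'': the zero of the codomain quantale $[K)_A$ is $K$, not $0_A$, so Lemma \ref{u inj iff Ker(u)=0} is to be applied in $[K)_A$, and the computation above confirms that the kernel as an element of $[K)_A$ coincides with $K=0_{[K)_A}$. The rest is routine, and since $v$ is a bijective join-preserving map between complete lattices that also preserves the multiplication and the unit, its inverse automatically preserves joins, multiplication, and the unit, giving the desired quantale isomorphism.
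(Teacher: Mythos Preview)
Your proof is correct and follows essentially the same approach as the paper: compute that $\mathrm{Ker}(v)=K$ is the zero of $[K)_A$ and then invoke Lemma~\ref{u inj iff Ker(u)=0} for injectivity. The paper's proof is far terser (it merely asserts $\mathrm{Ker}(u|_{[Ker(u))_A})=\mathrm{Ker}(u)$ and applies the lemma), while you additionally verify that $v$ is a well-defined unital quantale morphism, that it is surjective, and that its inverse is automatically a morphism; these are details the paper leaves implicit.
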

\begin{proof}
We remark that $Ker(u|_{[Ker(u))_{A}})=Ker(u)$ so by previous Lemma,  $u_{[Ker(u))_{A}}$ is injective. 
\end{proof}
\begin{corollary}
If the unital quantale morphism $u:A \to B$ is surjective and $A$ has LP, then $B$ has LP.
\end{corollary}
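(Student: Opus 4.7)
The plan is to reduce the corollary to the two immediately preceding results by factoring $u$ through the interval quantale $[Ker(u))_A$. First I would observe that since $u$ is a unital quantale morphism, $Ker(u) = \bigvee\{a \in A \mid u(a)=0\}$ is a well-defined element of $A$, and by Corollary \ref{u surj=>u iso pe Ker(u)} (the ``$u$ surj $\Rightarrow$ $u$ iso pe $Ker(u)$'' statement), the restriction $u|_{[Ker(u))_A} : [Ker(u))_A \to B$ is a quantale isomorphism.

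Next I would invoke Proposition \ref{A_lp=>[a)_lp} with $a := Ker(u)$: since $A$ has LP, the interval quantale $[Ker(u))_A$ also has LP. The conclusion of the corollary then follows provided LP is preserved under quantale isomorphism, which is the remaining step to justify.

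To close this gap I would argue as follows. Let $\varphi : [Ker(u))_A \to B$ denote the isomorphism above. For every $b \in B$ set $b' := \varphi^{-1}(b) \in [Ker(u))_A$; then $\varphi$ restricts to a lattice isomorphism between the intervals $[b')_{[Ker(u))_A}$ and $[b)_B$ (with the induced multiplications $\cdot_{b'}$ and $\cdot_{b}$), so it induces a Boolean isomorphism $B([b')_{[Ker(u))_A}) \cong B([b)_B)$ which is moreover compatible with the canonical maps $B(u^{[Ker(u))_A}_{b'})$ and $B(u^{B}_{b})$ via the isomorphism $B(\varphi) : B([Ker(u))_A) \to B(B)$ coming from Lemma \ref{unitalMorph}. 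Surjectivity of $B(u^{[Ker(u))_A}_{b'})$, granted by the LP of $[Ker(u))_A$, thus transfers to surjectivity of $B(u^{B}_{b})$, as required.

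The only mildly delicate point is this last transfer step: one must check that the isomorphism $\varphi$ really does identify the constructions $[\,-\,)$, $B(-)$ and $B(u^{(-)}_{(-)})$ on both sides, i.e.\ that these constructions are functorial under quantale isomorphism. This is a routine verification that ultimately rests on the fact that isomorphisms preserve all the structure used in the definition of LP (joins, multiplication, the unit, and hence complementation), so no new ideas are needed beyond what is already in Lemma \ref{unitalMorph} and the definition of LP.
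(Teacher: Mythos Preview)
Your proof is correct and follows exactly the same approach as the paper's own proof: factor $u$ through $[Ker(u))_A$ via Corollary~\ref{u surj=>u iso pe Ker(u)}, then invoke Proposition~\ref{A_lp=>[a)_lp} to obtain LP for $[Ker(u))_A$, and transfer LP along the isomorphism to $B$. The only difference is that the paper treats the preservation of LP under isomorphism as obvious and does not spell out the commutative-square argument you give in your final two paragraphs.
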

\begin{proof}
Since $u|_{[Ker(u))_{A}}:[Ker(u))_{A}\to B $ is a quantale isomorphism  and by Proposition \ref{A_lp=>[a)_lp}, $[Ker(u))_{A}$ has LP, it follows that $B$ has LP.   
\end{proof} 
\section{Normal and B-normal quantales}
 \hspace{0.5cm}In this section we shall establish some connections between LP and two important classes of quantales: normal and B-normal quantales.
Recall from \cite{Rosenthal} that a quantale $A$ is normal if for any $a,b\in A$ with $a\lor b=1$ there exist $e,f\in A$ such that $a\lor e=b\lor f=1$ and $e\cdot f=0$. If $A$ is a frame then is obtained the notion of normal frame \cite{Johnstone}. The main exemple of normal quantale is the set of ideals in a Gelfand ring \cite{Johnstone} and the main exemple of normal frame is the set of ideals in a normal lattice \cite{Johnstone}, \cite{Cornish}, \cite{Pawar}, \cite{GeorgescuVoiculescu}. It is well-known that a bounded distributive lattice $L$ is normal iff $Id(L)$ is a normal frame.\\
Let us fix a coherent quantale $A$.
\begin{lemma}\label{normalIffK}
The following are equivalent:
\begin{list}{(\arabic{nr})}{\usecounter{nr}}
\item $A$ is normal;
\item For all $c,d\in K(A)$, with $c\lor d=1$ there exist $e,f\in K(A)$ such that $c\lor e=d\lor f=1$ and $e\cdot f =0$.
\end{list}
\end{lemma}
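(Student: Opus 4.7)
The plan is to show the two implications directly, the nontrivial part being the passage between arbitrary elements and compact ones, which is handled by repeated use of the compactness of $1$ (here $A$ is coherent, so $1 \in K(A)$ and the compacts are closed under $\lor$ and $\cdot$) together with the fact that every element of $A$ is a join of compact elements.

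\smallskip
\noindent\emph{(1) $\Rightarrow$ (2).} Take $c,d\in K(A)$ with $c\lor d=1$. By (1) there exist $e,f\in A$ (not necessarily compact) with $c\lor e = d\lor f = 1$ and $e\cdot f = 0$. Write $e = \bigvee\{e_i\in K(A): e_i\leq e\}$; then $c \lor \bigvee_i e_i = 1$, and since $1\in K(A)$ is compact, there is a finite subfamily whose join $e'\in K(A)$ satisfies $e'\leq e$ and $c\lor e' = 1$. An identical argument produces $f'\in K(A)$ with $f'\leq f$ and $d\lor f' = 1$. Then $e'\cdot f'\leq e\cdot f = 0$, so $e'\cdot f' = 0$, giving (2).

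\smallskip
\noindent\emph{(2) $\Rightarrow$ (1).} Take $a,b\in A$ with $a\lor b = 1$. Writing $a = \bigvee\{c\in K(A): c\leq a\}$ and $b = \bigvee\{d\in K(A): d\leq b\}$, we obtain $1 = \bigvee\{c\lor d : c,d\in K(A),\ c\leq a,\ d\leq b\}$. Again using that $1\in K(A)$ is compact, we may extract compact $c_0\leq a$ and $d_0\leq b$ with $c_0\lor d_0 = 1$ (take finite joins of the covering pairs, which remain compact since $K(A)$ is closed under $\lor$). Applying (2) to $c_0,d_0$ yields $e,f\in K(A)\subseteq A$ with $c_0\lor e = d_0\lor f = 1$ and $e\cdot f = 0$. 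Since $c_0\leq a$ and $d_0\leq b$, this forces $a\lor e = b\lor f = 1$, establishing (1).

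\smallskip
The only delicate step is the extraction of compact ``witnesses'' inside $a,b,e,f$; the coherence hypothesis (compactness of $1$ plus closure of $K(A)$ under finite joins and under $\cdot$) makes both extractions routine, and no further structural property of $A$ is needed.
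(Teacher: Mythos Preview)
Your proof is correct and follows essentially the same approach as the paper: both directions use the compactness of $1$ to extract compact witnesses below given elements, with the key observations being $e'\cdot f'\leq e\cdot f=0$ in (1)$\Rightarrow$(2) and $c_0\leq a$, $d_0\leq b$ forcing $a\lor e=b\lor f=1$ in (2)$\Rightarrow$(1). The paper's version is terser but structurally identical.
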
  
\begin{proof}
(1) $\imp$ (2) Let $c,d \in K(A)$ such that $c\lor d=1$. By (1) there exist $c_{1}, d_{1}\in A , c\lor c_{1}=d\lor d_{1}=1$ and $c_{1}\cdot d_{1}=0$. Since $1$ is compact there exist $e,f\in K(A) , e\leq c_{1}, f\leq d_{1}$, such that $c\lor e=d\lor f=1$ and obviously $e\cdot f =0$.\\
(2)$\imp$ (1) Let $a,b\in A$ with $a\lor b=1$. Since $1\in K(A)$, it follows the existence of two elements $c,d \in K(A)$ such that $c\leq a , d\leq b$ and $c\lor d=1$. By (2), $c\lor e=d\lor f=1$ and $e\cdot f =0$ for some elements $e,f\in K(A)$. It is obvious that $a\lor e= b\lor f=1$.
\end{proof}
\begin{proposition}
The following are equivalent:
\begin{list}{(\arabic{nr})}{\usecounter{nr}}
\item $A$ is a normal quantale;
\item $R(A)$ is a normal frame;
\item $L(A)$ is a normal lattice.
\end{list}
\end{proposition}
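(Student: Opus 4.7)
The plan is to split the three-way equivalence into two pieces: (1)$\iff$(3) will be handled directly by pushing normality witnesses through the surjective reticulation map $\l_{A}:K(A)\to L(A)$, while (2)$\iff$(3) will be obtained essentially for free from the frame isomorphism $R(A)\cong Id(L(A))$ of Corollary \ref{Phi,PsiIsomorphism} combined with the classical folk result (recalled just before Lemma \ref{normalIffK}) that a bounded distributive lattice $L$ is normal iff the frame $Id(L)$ is normal.

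For the implication (1)$\imp$(3), I start with $\l(c)\lor\l(d)=1$ in $L(A)$ for some $c,d\in K(A)$. Since $\l(c\lor d)=\l(c)\lor\l(d)$, Lemma \ref{l(compact)}(3) yields $c\lor d=1$ in $A$. Lemma \ref{normalIffK} then supplies compact $e,f$ with $c\lor e=d\lor f=1$ and $e\cdot f=0$; the images $\l(e),\l(f)$ are the desired witnesses, since $\l(c)\lor\l(e)=\l(c\lor e)=\l(1)=1$ and $\l(e)\land\l(f)=\l(e\cdot f)=\l(0)=0$ by Lemma \ref{l(compact)}.

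The converse (3)$\imp$(1) is the delicate step. Given compact $c,d\in K(A)$ with $c\lor d=1$, the normality hypothesis on $L(A)$ produces, via the surjectivity of $\l$ on $K(A)$, elements $e,f\in K(A)$ with $c\lor e=d\lor f=1$ and $\l(e)\land\l(f)=\l(e\cdot f)=0$. The difficulty is that Lemma \ref{l(compact)}(5) only gives $(e\cdot f)^{n}=0$ for some integer $n\geq 1$, not $e\cdot f=0$ on the nose. To remedy this, I replace $e,f$ by $e^{n},f^{n}$: commutativity yields $e^{n}\cdot f^{n}=(e\cdot f)^{n}=0$; Lemma \ref{propQ}(3) applied to $c\lor e=1$ gives $c^{n}\lor e^{n}=1$, and since $c^{n}\leq c$ this upgrades to $c\lor e^{n}=1$, and symmetrically $d\lor f^{n}=1$. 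Since $K(A)$ is closed under multiplication in a coherent quantale, $e^{n},f^{n}\in K(A)$, so Lemma \ref{normalIffK} concludes that $A$ is normal.

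The equivalence (2)$\iff$(3) now falls out immediately: the frame isomorphism $\Phi:R(A)\to Id(L(A))$ of Corollary \ref{Phi,PsiIsomorphism} transfers normality between the two frames, and the recalled classical equivalence ``$L$ is a normal lattice iff $Id(L)$ is a normal frame'' closes the chain. The main obstacle is clearly the (3)$\imp$(1) direction: the reticulation identifies elements with the same radical, so witnesses coming out of $L(A)$ can only be lifted to elements whose product is nilpotent in $A$, not zero. The $n$-th power trick combined with Lemma \ref{propQ}(3) is the essential ingredient that converts this nilpotence into the exact identity $e\cdot f=0$ that normality of $A$ demands.
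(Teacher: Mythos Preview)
Your proof is correct. The only difference from the paper is the routing of the three-way equivalence: the paper proves (1)$\iff$(2) directly (working with $\rho$ and the frame $R(A)$) and then invokes the frame isomorphism $R(A)\cong Id(L(A))$ for (2)$\iff$(3), whereas you prove (1)$\iff$(3) directly (working with $\l$ and the lattice $L(A)$) and use the same frame isomorphism for (2)$\iff$(3). The essential ingredient---lifting witnesses whose product is only nilpotent and then passing to $n$-th powers via Lemma~\ref{propQ}(3)---is identical in both arguments. Your version has the small practical advantage that, by invoking Lemma~\ref{normalIffK}, you work with compact elements throughout, so the appeal to Lemma~\ref{l(compact)}(5) for nilpotence is immediate; the paper's (2)$\Rightarrow$(1) argument obtains witnesses $c,d$ from $R(A)$ that are not a priori compact and the cited ``Lemma~\ref{c^{k}}(3)'' does not exist, so an extra (routine) reduction to compact witnesses is implicitly needed there.
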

\begin{proof}
(1) $\imp$ (2) Let $a,b\in A$ such that $\rho(a)\dot \lor \rho(b)=1$, hence, by Lemma \ref{rho}(6), $a\lor b=1$. Thus, by hypothesis that $A$ is normal, there exist $c,d\in A$ such that $a\lor c =b\lor d=1 $ and $c\cdot d =0$. It follows that $\rho(a)\dot \lor \rho(c)=\rho(b) \dot \lor \rho (d) =1$ and $\rho(c)\land \rho(d)=\rho(c \cdot d)=\rho(0)$ , therefore $R(A)$ is a normal frame.\\
(2)$\imp$ (1) Let $a,b$ be two elements of $A$ such that $a\lor b=1$, hence $\rho(a)\dot \lor \rho(b)=1$. According to the normality of the frame $R(A)$,  there exist $c,d\in A$ such that $\rho(a)\dot\lor \rho(c)=\rho(b)\dot\lor \rho(d)$ and $\rho(c)\land \rho(d)=\rho(0)$. By using Lemma \ref{rho},(6) and (2), it follows that $a\lor c =b\lor d=1$ and $\rho(c\cdot d)=\rho(0).$ Thus by Lemma \ref{c^{k}},(3), there exists an integer $n\geq 1$ such that $c^{n}\cdot d^{n}=0$. Finally we have $a\lor c^{n}=b\lor d^{n}=1$ and $c^{n}\cdot d^{n}=0$, so $A$ is a normal quantale.\\
(2) $\iff$ (3) According to Corollary \ref{Phi,PsiIsomorphism}, the frames $R(A)$ and $Id(L(A))$ are isomorphic, hence the following equivalences hold: $R(A)$ is a normal frame iff $Id(L(A))$ is a normal frame iff $L(A)$ is a normal lattice.

Recall from \cite{Cignoli} that a bounded distributive lattice $L$ is $B-normal$ if for all $a,b\in L$ such that $a\lor b=1$ there exist $e,f\in B(L)$ such that $a\lor e=b\lor f=1$ and $e\land f=0$.
\end{proof}
\begin{definition}
A quantale $A$ is B-normal if for any $a,b\in A$ such that $a\lor b=1$ there exist $e,f\in B(A)$ such that $a\lor e=b\lor f =1$ and $e\cdot f=0$.

 If the B-normal  quantale $A$ is a frame then say that $A$ is a B-normal frame. 
\end{definition}
We remark that weakly zero-dimensional frames of \cite{Banaschewski} coincide to the B-normal frames.

Of course a B-normal quantale is normal. Similar to Lemma \ref{normalIffK} one can prove the following result.
\begin{lemma}
The following assertions are equivalent:
\begin{list}{(\arabic{nr})}{\usecounter{nr}}
\item $A$ is a B-normal quantale;
\item For all $c,d\in K(A)$ with $c\lor d=1$ there exist $e,f\in B(A)$ such that $c\lor d=e\lor f=1$ and $e\cdot f=0$.
\end{list}
\end{lemma}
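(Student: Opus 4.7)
The plan is to mirror the proof of Lemma \ref{normalIffK} almost verbatim, since the only difference between normality and B-normality is the requirement that the witnessing elements $e,f$ lie in $B(A)$ rather than merely in $A$ (respectively $K(A)$). Note that condition (2) should be read as ``$c\lor e=d\lor f=1$ and $e\cdot f=0$''; the typographical slip ``$c\lor d=e\lor f=1$'' is evidently a misprint, since otherwise the equivalence is trivial.

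For $(1)\Rightarrow(2)$, take compact $c,d\in K(A)$ with $c\lor d=1$. Since $c,d\in A$, B-normality of $A$ applied directly to the pair $(c,d)$ yields $e,f\in B(A)$ with $c\lor e=d\lor f=1$ and $e\cdot f=0$, and there is nothing more to do. (Unlike the analogous step in Lemma \ref{normalIffK}, one need not pass to compact approximants of the witnesses, because $B(A)\subseteq K(A)$ is already guaranteed by the lemma asserting $B(A)\subseteq K(A)$.)

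For $(2)\Rightarrow(1)$, let $a,b\in A$ with $a\lor b=1$. Because $1\in K(A)$ and $a\lor b$ is the join of all compact elements below $a$ together with all compact elements below $b$, compactness of $1$ produces a finite subfamily whose join is $1$; bundling those below $a$ into a single compact $c\leq a$ and those below $b$ into a single compact $d\leq b$, we obtain $c,d\in K(A)$ with $c\lor d=1$. Apply hypothesis (2) to get $e,f\in B(A)$ with $c\lor e=d\lor f=1$ and $e\cdot f=0$. Then $a\lor e\geq c\lor e=1$ and $b\lor f\geq d\lor f=1$, so $a\lor e=b\lor f=1$ and $e\cdot f=0$, witnessing B-normality of $A$.

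There is no real obstacle: the argument is a direct transcription of Lemma \ref{normalIffK}, with the single observation that we are now allowed to keep the witnesses $e,f$ as they come (already Boolean) rather than needing to shrink them to compacts. The only point worth mentioning carefully is the use of coherence ($1\in K(A)$) in $(2)\Rightarrow(1)$ to extract compact under-approximants of $a$ and $b$ with join $1$.
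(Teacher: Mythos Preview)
Your proof is correct and matches the paper's intended approach: the paper omits the argument entirely, saying only that it is proved ``similar to Lemma~\ref{normalIffK}'', and your writeup is precisely that adaptation. Your observation that in the direction $(1)\Rightarrow(2)$ no shrinking of witnesses is needed (since $B(A)\subseteq K(A)$ already) is a nice clarification of why the B-normal case is even simpler than the normal case treated in Lemma~\ref{normalIffK}.
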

One can prove that a bounded distributive lattice $L$ is B-normal  iff the frame $Id(L)$ is B-normal.
\begin{proposition}\label{BnormalEquiv}
The following assertions are equivalent:
\begin{list}{(\arabic{nr})}{\usecounter{nr}}
\item $A$ is a B-normal quantale;
\item $R(A)$ is a B-normal frame;
\item $L(A)$ is a B-normal lattice.
\end{list}\end{proposition}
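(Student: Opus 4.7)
The plan is to mimic the structure of the preceding proposition on normal quantales, but replace the ``radical transfer'' step with the \emph{Boolean} isomorphisms $B(\rho_A):B(A)\to B(R(A))$ and $B(\lambda_A):B(A)\to B(L(A))$ supplied by Proposition \ref{l,rho,isomorphism}. The three implications I will prove are $(1)\Rightarrow(2)$, $(2)\Rightarrow(1)$, and $(2)\Leftrightarrow(3)$; the last is essentially free from Corollary \ref{Phi,PsiIsomorphism} together with the remark (stated just before the proposition) that a bounded distributive lattice $L$ is B-normal iff the frame $Id(L)$ is B-normal.

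For $(1)\Rightarrow(2)$, I would start with radical elements $\rho(a),\rho(b)\in R(A)$ with $\rho(a)\dot{\vee}\rho(b)=1$. By Lemma \ref{rho}(6) this collapses to $a\vee b=1$ in $A$, so the hypothesis delivers $e,f\in B(A)$ with $a\vee e=b\vee f=1$ and $e\cdot f=0$. Applying $\rho$ and using Lemma \ref{rho}(2) and (4), the elements $\rho(e),\rho(f)$ satisfy $\rho(a)\dot{\vee}\rho(e)=\rho(b)\dot{\vee}\rho(f)=1$ and $\rho(e)\wedge\rho(f)=\rho(e\cdot f)=\rho(0)$; moreover Proposition \ref{l,rho,isomorphism} guarantees $\rho(e),\rho(f)\in B(R(A))$. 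This witnesses B-normality of $R(A)$.

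The direction $(2)\Rightarrow(1)$ is where the only real care is required. Given $a,b\in A$ with $a\vee b=1$, one has $\rho(a)\dot{\vee}\rho(b)=1$, so B-normality of $R(A)$ yields $E,F\in B(R(A))$ with $\rho(a)\dot{\vee}E=\rho(b)\dot{\vee}F=1$ and $E\wedge F=\rho(0)$. Since $B(\rho_A)$ is surjective by Proposition \ref{l,rho,isomorphism}, write $E=\rho(e)$, $F=\rho(f)$ with $e,f\in B(A)$. Then $\rho(a\vee e)=1=\rho(1)$ and $\rho(b\vee f)=1=\rho(1)$, so Lemma \ref{rho}(3) gives $a\vee e=b\vee f=1$. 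The obstacle is recovering $e\cdot f=0$ from $\rho(e\cdot f)=\rho(e)\wedge\rho(f)=\rho(0)$: this is exactly where I use that $e\cdot f=e\wedge f\in B(A)$ (Lemma \ref{propB(A)}(2), together with the fact that $B(A)$ is closed under $\wedge$), so both $e\cdot f$ and $0$ lie in $B(A)$; by injectivity of $B(\rho_A)$ the equality $\rho(e\cdot f)=\rho(0)$ forces $e\cdot f=0$.

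For $(2)\Leftrightarrow(3)$, Corollary \ref{Phi,PsiIsomorphism} identifies the frames $R(A)$ and $Id(L(A))$, hence $R(A)$ is a B-normal frame iff $Id(L(A))$ is; the cited remark on distributive lattices then converts this to B-normality of $L(A)$, completing the cycle.
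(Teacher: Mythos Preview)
Your proof is correct and follows essentially the same route as the paper: both directions $(1)\Leftrightarrow(2)$ proceed via the Boolean isomorphism $B(\rho_A)$ of Proposition~\ref{l,rho,isomorphism}, and $(2)\Leftrightarrow(3)$ via the frame isomorphism $R(A)\cong Id(L(A))$ of Corollary~\ref{Phi,PsiIsomorphism}. In fact you are slightly more explicit than the paper at the one delicate point in $(2)\Rightarrow(1)$---the paper simply asserts that $\rho(e\cdot f)=\rho(0)$ implies $e\cdot f=0$, whereas you correctly identify this as an application of the injectivity of $B(\rho_A)$ (using that $e\cdot f\in B(A)$).
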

\begin{proof}
(1) $\imp$ (2) Let $a,b\in A$ with $\rho(a)\dot \lor \rho(b)=1$, so $a\lor b=1$. By the hypothesis, there exist $e,f\in B(A)$ such that $a\lor e=b\lor f=1$ and $e\cdot f=0$, therefore $\rho(a)\dot\lor \rho(e)=\rho(b)\dot\lor\rho(f)=1$ and $\rho(e)\land \rho(f)=\rho(e\cdot f)=\rho(0)$. Since $\rho(e), \rho(f)\in B(R(A))$, it follows that $R(A)$ is a B-normal frame.\\
(2) $\imp$ (1) Let $a,b\in A$ with $a\lor b=1$, hence $\rho(a)\dot\lor \rho(b)=1$. Thus there exist $e_{1}, f_{1}\in B(R(A))$ such that $\rho(a)\dot\lor e_{1}=\rho(b)\dot\lor f_{1}=1$ and $e_{1}\land f_{1}=\rho(0)$. According to Proposition \ref{l,rho,isomorphism} there exist $e,f\in B(A)$ such that $e_{1}=\rho(e)$ and $f_{1}=\rho(f)$, hence $\rho(a)\dot\lor \rho(e)=\rho(b)\dot \lor \rho(f)=1$ and $\rho(e\cdot f)=e_{1}\land f_{1}=\rho(0)$. These properties imply $a\lor e=b\lor f =1$ and $e\cdot f =0$, hence $A$ is B-normal. \\
(2) $\iff$ (3) The frames $R(A)$ and $Id(L(A))$ are isomorphic (by Corollary \ref{Phi,PsiIsomorphism}), therefore $R(A)$ is a B-normal frame iff $L(A)$ is a B-normal lattice. 
\end{proof}
The following result connects the properties of Propositions \ref{LPequiv} and \ref{BnormalEquiv}.
\begin{theorem}\label{LP equiv Bnormal}
The following properties are equivalent:
\begin{list}{(\arabic{nr})}{\usecounter{nr}}
\item the quantale $A$ has LP;
\item the frame $R(A)$ has LP;
\item the lattice $L(A)$ has Id-BLP;
\item the quantale $A$ is B-normal;
\item the frame $R(A)$ is B-normal;
\item the lattice $L(A)$ is B-normal.
\end{list}
\end{theorem}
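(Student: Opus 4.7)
The plan is to close the loop: Theorem \ref{LPequiv} already gives $(1) \iff (2) \iff (3)$ and Proposition \ref{BnormalEquiv} already gives $(4) \iff (5) \iff (6)$, so it suffices to supply a single bridging equivalence between the two blocks. I will work entirely inside the bounded distributive lattice $L(A)$ and prove $(3) \iff (6)$: a bounded distributive lattice has Id-BLP if and only if it is B-normal. This is the lattice-theoretic reformulation announced in the introduction as the adaptation of Lemma 4 of \cite{Banaschewski} to bounded distributive lattices, and it is where the genuine work sits.

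For $(6) \imp (3)$, I would fix an ideal $I$ of $L = L(A)$ and a complemented class $\bar{x} \in B(L/I)$ with complement $\bar{y}$. After replacing $x$ by $x \vee i_{0}$ for a suitable $i_{0} \in I$, one may assume $x \vee y = 1$ already in $L$. Applying B-normality to the pair $(x,y)$ produces $e, f \in B(L)$ with $x \vee e = y \vee f = 1$ and $e \wedge f = 0$. Passing to $L/I$ and using that $\bar{y}$ is the Boolean complement of $\bar{x}$, the relation $\bar{y} \vee \bar{f} = 1$ forces $\bar{x} \leq \bar{f}$, while $\bar{x} \vee \bar{e} = 1$ combined with $\bar{e} \wedge \bar{f} = 0$ forces $\bar{f} \leq \bar{e}^{c} \leq \bar{x}$. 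Hence $\bar{f} = \bar{x}$, so $f \in B(L)$ lifts $\bar{x}$ and $B(p_{I})$ is surjective.

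For $(3) \imp (6)$, given $a \vee b = 1$ in $L$, I would pass to the principal ideal $I = (a \wedge b]$. In $L/I$ the classes $\bar{a}$ and $\bar{b}$ are mutually complementary, so by Id-BLP there exists $e \in B(L)$ with $\bar{e} = \bar{a}$; let $f$ be the Boolean complement of $e$ in $L$, so that $\bar{f} = \bar{b}$. Unwinding $e \equiv a \pmod{I}$ as $e \vee i = a \vee i$ with $i \leq a \wedge b$ yields $e \vee i = a$, hence $a \leq e \vee (a \wedge b)$, and taking join with $b$ gives $1 = a \vee b \leq e \vee b$. A symmetric argument gives $a \vee f = 1$, and $e \wedge f = 0$ holds by construction; the pair $(f, e)$ then witnesses B-normality for $(a, b)$.

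The main obstacle is the bookkeeping required to convert congruences modulo $I$ into concrete inequalities in $L$. In $(6) \imp (3)$ one must extract the two Boolean inequalities inside $B(L/I)$ that pin down the lift; in $(3) \imp (6)$ the judicious choice $I = (a \wedge b]$ is what forces the lifted idempotent to sit below $a$ with its complement sitting below $b$. Once $(3) \iff (6)$ is secured, the six-way equivalence follows by concatenation with Theorem \ref{LPequiv} and Proposition \ref{BnormalEquiv}.
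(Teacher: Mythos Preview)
Your argument is correct. You bridge the two blocks at a different point than the paper does: the paper invokes Lemma~4 of \cite{Banaschewski} to obtain $(2)\iff(5)$ at the frame level and treats it as a black box, whereas you prove $(3)\iff(6)$ directly in the bounded distributive lattice $L(A)$, supplying the explicit two-way argument (lifting via B-normality using the pair $(e,f)$, and conversely B-normality via the principal ideal $(a\wedge b]$). Your route is more self-contained and keeps everything in elementary lattice theory without appealing to the external frame-theoretic result; the paper's route is shorter on the page but imports the real content from Banaschewski. Either bridge closes the loop with Theorem~\ref{LPequiv} and Proposition~\ref{BnormalEquiv}, so the two proofs are equivalent in strength.
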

\begin{proof}
(1) $\iff$ (2) $\iff$ (3) By Proposition \ref{LPequiv}\\
(4) $\iff$ (5) $\iff$ (6) By Proposition \ref{BnormalEquiv}\\
(2) $\iff$ (5) By Lemma 4 of \cite{Banaschewski}
\end{proof}

By definition, the Jacobson radical of the quantale $A$ is $r(A)=\land Max(A)$. If $A$ is the quantale $Id(R)$ of the ideals of a commutative ring $R$ then $r(A)$ is exactly the Jacobson radical of $R$.
\begin{lemma}\label{a lor r(A)=1 then a=1}
Let $a$ be an element of the quantale $A$. If $a\lor r(A)=1$ then $a=1$.  
\end{lemma}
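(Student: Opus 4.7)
The plan is to proceed by contraposition: assume $a \neq 1$ and derive that $a \lor r(A) \neq 1$. Since we are working in a coherent quantale, $1 \in K(A)$, so Lemma \ref{SpecA inclus MaxA}(1) applies directly: from $a < 1$ we obtain some maximal element $m \in \text{Max}(A)$ with $a \leq m$.

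Next I would use the definition $r(A) = \bigwedge \text{Max}(A)$ to note that $r(A) \leq m$ as well, since $m$ itself is one of the meetands. Combining these two inequalities gives $a \lor r(A) \leq m$. Since $m \in \text{Max}(A)$ we have in particular $m < 1$, so $a \lor r(A) \leq m < 1$, i.e. $a \lor r(A) \neq 1$. This contradicts the hypothesis and forces $a = 1$.

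The argument is essentially immediate once Lemma \ref{SpecA inclus MaxA}(1) is invoked, so there is no real obstacle beyond recognizing that coherence (specifically $1 \in K(A)$) is what makes the existence of an enclosing maximal element available. The only thing to double-check is that the definition of a maximal element in a quantale really requires $m < 1$ (so that $a \lor r(A) < 1$ strictly), which follows from the standing convention that $\text{Max}(A) \subseteq A \setminus \{1\}$ used throughout the paper.
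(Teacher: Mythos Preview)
Your proof is correct and follows essentially the same route as the paper: assume $a\neq 1$, pick $m\in Max(A)$ with $a\leq m$ (using $1\in K(A)$), then use $r(A)\leq m$ to bound $a\lor r(A)\leq m<1$. The paper phrases the contradiction as $m\geq a\lor r(A)=1$, but the substance is identical.
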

\begin{proof}
If $a\not= 1$ then $a\leq m$ for some $m\in Max(A)$. Then $m=a\lor m \geq a\lor r(A)=1$, contradicting that $m\in Max(A)$.
\end{proof}
\begin{proposition}
If $A$ is a normal quantale then $r(A)$ has LP.
\end{proposition}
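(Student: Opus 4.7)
The plan is to verify directly that the boolean morphism $B(u_{r(A)}^{A}):B(A)\to B([r(A))_{A})$ is surjective. So I would start by fixing an arbitrary $\bar e\in B([r(A))_{A})$ and unpacking what this means: there exists $\bar f\in[r(A))_{A}$ with $\bar e\lor\bar f=1$ and $\bar e\cdot_{r(A)}\bar f=r(A)$. The second equality translates to $\bar e\cdot\bar f\leq r(A)$ in $A$, and of course $r(A)\leq\bar e$, $r(A)\leq\bar f$. The goal is to produce $e\in B(A)$ with $e\lor r(A)=\bar e$.

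The construction will come from normality of $A$ applied to the identity $\bar e\lor\bar f=1$: one obtains $c,d\in A$ satisfying $\bar e\lor c=\bar f\lor d=1$ and $c\cdot d=0$. Multiplying the two equalities gives $(\bar e\lor c)(\bar f\lor d)=1$, and expanding using the infinite distributive law together with $cd=0$ yields
\[
\bar e\bar f\lor\bar e d\lor c\bar f=1.
\]
Since $\bar e\bar f\leq r(A)$, this gives $\bar e d\lor c\bar f\lor r(A)=1$, and then by Lemma~\ref{a lor r(A)=1 then a=1} we conclude $\bar e d\lor c\bar f=1$.

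Now set $e:=\bar e d$ and $f:=c\bar f$. From the previous display $e\lor f=1$, and $e\cdot f=\bar e\bar f\cdot cd=0$, so Lemma~\ref{propB(A)}(4) places $e,f$ in $B(A)$. It remains to check that $e$ lifts $\bar e$, i.e.\ that $u_{r(A)}^{A}(e)=e\lor r(A)=\bar e$. The inequality $e\lor r(A)\leq\bar e$ is immediate since both $e=\bar e d\leq\bar e$ and $r(A)\leq\bar e$. For the reverse, using unitality and the distributive law,
\[
\bar e=\bar e\cdot 1=\bar e(\bar f\lor d)=\bar e\bar f\lor\bar e d\leq r(A)\lor e.
\]
Thus $B(u_{r(A)}^{A})(e)=\bar e$, which is the required surjectivity.

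The only nontrivial ingredient beyond normality is the Nakayama-type Lemma~\ref{a lor r(A)=1 then a=1}, which is the place where the choice $a=r(A)$ is essential (it is what allows us to absorb the error term $\bar e\bar f\leq r(A)$). The rest is a careful but routine manipulation of the normality witnesses $c,d$; the main pitfall to avoid is mixing up which of $\bar e d$ and $c\bar f$ lifts $\bar e$ and which lifts $\bar f$, since swapping them would fail the verification in the last display.
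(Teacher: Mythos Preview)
Your proof is correct and follows the same overall strategy as the paper: apply normality to the pair $\bar e,\bar f$, then invoke Lemma~\ref{a lor r(A)=1 then a=1} to kill the $r(A)$ error term and obtain a genuine complemented pair in $A$.

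The one noteworthy difference is in how the lift is constructed. You form the products $e=\bar e d$ and $f=c\bar f$ and show these are a complementary pair in $B(A)$. The paper instead shows that the normality witnesses themselves (your $c,d$) already satisfy $c\lor d=1$: from $\bar e\lor c=1$ and $\bar f\lor d=1$ one gets $(c\lor d)\lor\bar e=(c\lor d)\lor\bar f=1$, hence $(c\lor d)\lor(\bar e\land\bar f)=1$ by Lemma~\ref{propQ}(2), and then Lemma~\ref{a lor r(A)=1 then a=1} gives $c\lor d=1$; with $cd=0$ this puts $c,d\in B(A)$, and one checks $d\lor r(A)=\bar e$. The two lifts in fact coincide a posteriori, since once $d\in B(A)$ and $d\leq\bar e$ one has $d=d\land\bar e=d\cdot\bar e$. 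Your route via expanding $(\bar e\lor c)(\bar f\lor d)=1$ is arguably a bit more transparent about why the absorption into $r(A)$ works; the paper's route avoids introducing the auxiliary products and keeps the cast of elements smaller.
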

\begin{proof}
We have to prove that the boolean morphism $B(u_{r(A)}):B(A)\to B([r(A))_{A})$ is surjective. Let $x\in B([r(A))_{A})$ so there exists $y\geq r(A)$ such that $x\lor y=1$ and $x\land y=r(A)$. Since $A$ is normal there exist $a,b\in A$ such that $x\lor a=y\lor b=1$ and $a\cdot b=0$. Then $a\lor b\lor x=a\lor b\lor y=1$, hence, by Lemma \ref{propQ},(2), $a\lor b\lor r(A)=a\lor b\lor (x\land y)=1$. According to Lemma \ref{a lor r(A)=1 then a=1}, $a\lor b=1$. Applying Lemma \ref{propB(A)}, from $a\lor b=1$ and $a\cdot b=0$ we obtain $a,b\in B(A)$. Due to Lemma \ref{propB(A)}: $B(u_{r(A)}(b))=b\lor r(A)=b\lor (x\land y)=(b\lor x)\land (b\lor y)=b\lor x$. From $x\lor a=1$ we infere that $b=b\cdot(x\lor a)=b\cdot x\lor b\cdot a=b\cdot x=b\land x$, hence $b\leq x$. Thus $b\lor x=x$, so $B(u_{r(A)})(b)=x$. Therefore $B(u_{r(A)})$ is surjective.  
\end{proof}
The notion of property (*) there was defined for residuated lattice in \cite{GeorgescuMuresan}, for bounded distributive lattice in \cite{h} and for congruence distributive universal algebras in \cite{GeorgescuMuresanJMVL}. The following definition proposes a notion of property (*) in the framework of the quantales.
\begin{definition}
A quantale $A$ has the property (*) if for any $a\in A$ there exist $c\in K(A)$, $c\leq r(A)$ and $e\in B(A)$ such that $a=c\lor e$.
\end{definition}
\begin{proposition}\label{* => LP}
If the quantale $A$ has the property (*) then $A$ has LP.
\end{proposition}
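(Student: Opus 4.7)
The plan is to prove surjectivity of $B(u_a^A)\colon B(A)\to B([a)_A)$ for an arbitrary $a\in A$. Fix $x\in B([a)_A)$ and let $y\in[a)_A$ be its complement there, so $x,y\geq a$, $x\lor y=1$, and $x\cdot_a y=a$, which unpacks to $x\cdot y\leq a$. The goal is to produce $e\in B(A)$ with $e\lor a=x$.

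The key move is to apply the property (*) simultaneously to $x$ and to $y$: write
\[
x=c_1\lor e_1,\qquad y=c_2\lor e_2,
\]
where $c_1,c_2\in K(A)$, $c_1,c_2\leq r(A)$, and $e_1,e_2\in B(A)$. The candidate lifting is $e_1$; the task reduces to showing $e_1\lor a=x$. The inequality $e_1\lor a\leq x$ is immediate from $e_1\leq x$ and $a\leq x$, so the content lies in the reverse inclusion, which by $x=c_1\lor e_1$ reduces to $c_1\leq e_1\lor a$.

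To get this, I first use $x\lor y=1$ to write $c_1\lor c_2\lor e_1\lor e_2=1$; since $c_1,c_2\leq r(A)$, this gives $e_1\lor e_2\lor r(A)=1$, whence Lemma \ref{a lor r(A)=1 then a=1} forces $e_1\lor e_2=1$. Multiplying by $c_1$ and using the infinite distributive law,
\[
c_1=c_1\cdot(e_1\lor e_2)=c_1 e_1\lor c_1 e_2.
\]
Now $c_1 e_1\leq e_1$ (either trivially or via Lemma \ref{propB(A)}(2), since $e_1\in B(A)$ gives $c_1 e_1=c_1\land e_1$), while $c_1 e_2\leq x\cdot y\leq a$ because $c_1\leq x$ and $e_2\leq y$. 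Therefore $c_1\leq e_1\lor a$, which combined with the easy inclusion yields $e_1\lor a=x$, i.e.\ $B(u_a^A)(e_1)=x$ with $e_1\in B(A)$.

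The only real obstacle is recognizing this last manipulation: the bound $c_1 e_2\leq a$ coming from $xy\leq a$, together with $c_1(e_1\lor e_2)=c_1$, is precisely what converts the topological/radical information ($c_1\leq r(A)$) into a genuine lift in $B(A)$. Everything else is formal rearrangement and an appeal to Lemma \ref{a lor r(A)=1 then a=1}.
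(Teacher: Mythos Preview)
Your proof is correct. The initial setup mirrors the paper's: apply property~(*) to two elements whose join is $1$ and use Lemma~\ref{a lor r(A)=1 then a=1} to conclude that the Boolean parts $e_1,e_2$ already satisfy $e_1\lor e_2=1$. From that point the arguments diverge. The paper does not attempt to lift a given $x\in B([a)_A)$ at all; instead it takes $a,b$ with $a\lor b=1$, observes that $e_1^{\perp},e_2^{\perp}$ witness B-normality ($a\lor e_1^{\perp}=b\lor e_2^{\perp}=1$, $e_1^{\perp}\land e_2^{\perp}=0$), and then invokes Theorem~\ref{LP equiv Bnormal} (ultimately Banaschewski's Lemma~4) to conclude LP. Your route is more direct and entirely self-contained: the multiplicative decomposition $c_1=c_1e_1\lor c_1e_2$ together with $c_1e_2\leq xy\leq a$ explicitly identifies $e_1$ as a preimage of $x$, with no appeal to the B-normal\,$\Leftrightarrow$\,LP equivalence. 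The paper's approach is shorter and emphasizes the structural link to B-normality; yours gives a concrete lift and avoids the dependence on Theorem~\ref{LP equiv Bnormal} and the external reference~\cite{Banaschewski}.
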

\begin{proof}
Let $a,b\in A$, such that $a\lor b=1$. From the property (*) follows the existence of $c,d\in K(A)$ and $e,f\in B(A)$ such that $c,d\leq r(A)$, $a=c\lor e$, $b=d\lor f$. Then $e\lor f\lor r(A)\geq c\lor d\lor e\lor f=a\lor b=1$, hence by Lemma \ref{a lor r(A)=1 then a=1}, it follows that $e\lor f=1$. We remark that $e^{\perp}\land f^{\perp}=(e\lor f)^{\perp}=0$, $a\lor e^{\perp}=c\lor e\lor e^{\perp}=1$ and $b\lor f^{\perp}=d\lor f\lor f^{\perp}=1$. Thus  $A$ is B-normal, hence, by Proposition  \ref{LP equiv Bnormal}, $A$ has LP.
\end{proof}
\begin{proposition}
If the quantale $A$ has the property (*) then the frame $R(A)$ has the property (*).
\end{proposition}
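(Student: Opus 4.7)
The plan is to lift the $(*)$-decomposition from $A$ to $R(A)$ by applying the radical operation $\rho$ and then verifying that each piece lands in the right subset of $R(A)$. Concretely, for $\alpha \in R(A)$, I would first use property $(*)$ of $A$ applied to $\alpha$ (regarded simply as an element of $A$) to obtain $c \in K(A)$ with $c \leq r(A)$ and $e \in B(A)$ with $\alpha = c \lor e$. Then, applying $\rho$ to both sides and using $\rho(\alpha) = \alpha$ (since $\alpha \in R(A)$) together with Lemma \ref{rho}(4), I get
\[
\alpha \;=\; \rho(c \lor e) \;=\; \rho(c) \,\dot{\lor}\, \rho(e).
\]
The candidate decomposition in $R(A)$ is therefore $\gamma = \rho(c)$ and $\epsilon = \rho(e)$.

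Next I would verify the three membership conditions required by $(*)$ in $R(A)$. First, $\rho(c) \in K(R(A))$ because $c \in K(A)$ and $K(R(A)) = \rho(K(A))$ (the lemma right after Lemma \ref{c^{k}}). Second, $\rho(e) \in B(R(A))$ because $e \in B(A)$ and $B(\rho_A)$ sends $B(A)$ into $B(R(A))$ (this was established in the lemma preceding Proposition \ref{l,rho,isomorphism}). Third, and this is the step requiring a small identification, I need $\rho(c) \leq r(R(A))$; using monotonicity of $\rho$, it suffices to show that $\rho(r(A)) = r(A)$ and that $r(R(A)) = r(A)$.

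For these two identifications, I would argue as follows. By Lemma \ref{coherent=>MaxA=MaxRA}, $Max(A) = Max(R(A))$, and every $m \in Max(A) = Max(R(A))$ satisfies $\rho(m) = m$ (since $m \in R(A)$). Because $R(A)$ is closed under arbitrary meets inherited from $A$, we get
\[
r(R(A)) \;=\; \bigwedge Max(R(A)) \;=\; \bigwedge Max(A) \;=\; r(A),
\]
and this element lies in $R(A)$, so $\rho(r(A)) = r(A)$. Hence $c \leq r(A)$ yields $\rho(c) \leq \rho(r(A)) = r(A) = r(R(A))$, which finishes the verification.

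The argument is essentially bookkeeping once the right lemmas are cited; the only substantive point — and the one I would expect a careless reader to miss — is the coincidence $r(R(A)) = r(A)$, which relies crucially on $Max(A) = Max(R(A))$ and on maximal elements being radical. Everything else reduces to applying $\rho$ to the decomposition given by $(*)$ in $A$ and invoking the previously proved transfer of compacts (via $K(R(A)) = \rho(K(A))$) and of complemented elements (via the isomorphism $B(\rho_A)\colon B(A) \to B(R(A))$).
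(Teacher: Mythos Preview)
Your proof is correct and follows essentially the same route as the paper: apply property $(*)$ in $A$ to the given element, push the decomposition through $\rho$, and check that $\rho(c)\in K(R(A))$, $\rho(e)\in B(R(A))$, and $\rho(c)\leq r(R(A))$. If anything, you are more explicit than the paper in justifying the key inequality $\rho(c)\leq r(R(A))$ via $Max(A)=Max(R(A))$ and the fact that maximal elements are radical, whereas the paper simply writes $\rho(c)\leq \rho(r(A))\leq r(R(A))$ without further comment.
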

\begin{proof}
Assume that $A$ has the property (*). Let $x\in R(A)$ so $x=\rho(a)$ for some $a\in A$. Then there exists $c\in K(A)$ and $e\in B(A)$ such that $c\leq r(a)$ and $a=c\lor e$, hence $\rho(e)\in B(R(A))$, $\rho(c)\in K(R(A)$, $\rho(c)\leq \rho(r(A))\leq \rho(R(A))$ and $x=\rho(a)=\rho(c\lor e)=\rho(c)\dot{\lor} \rho(e)$. 
\end{proof}
\begin{proposition}
Let $a\in A $. If $A$ has the property $(*)$ then the quantale $[a)_{A}$ has the property (*).
\end{proposition}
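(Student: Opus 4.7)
The plan is to transport the decomposition provided by property $(*)$ in $A$ along the unital quantale morphism $u_a^A : A \to [a)_A$, $x \mapsto x \lor a$. Fix $x \in [a)_A$; then in particular $x \in A$, so by hypothesis there exist $c \in K(A)$ with $c \leq r(A)$ and $e \in B(A)$ such that $x = c \lor e$. Set $c' = u_a^A(c) = c \lor a$ and $e' = u_a^A(e) = e \lor a$; these are the candidates to witness property $(*)$ for $x$ in $[a)_A$.

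Three of the four required properties are immediate from results already in the paper. First, $c' \in K([a)_A)$ by Lemma \ref{uaA}(2). Second, since $u_a^A$ is a unital quantale morphism (Lemma \ref{uaA}(1)), Lemma \ref{unitalMorph}(1) gives $e' \in B([a)_A)$. Third, using $x \geq a$, we have $c' \lor e' = (c \lor a) \lor (e \lor a) = (c \lor e) \lor a = x \lor a = x$, where the join in $[a)_A$ agrees with that of $A$.

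The only step requiring a short argument is $c' \leq r([a)_A)$. The key observation is the identification $Max([a)_A) = \{m \in Max(A) : a \leq m\}$: indeed, $[a)_A$ is an upper set with top $1$, so its maximal elements are exactly the elements of $Max(A)$ that lie above $a$. Consequently
\[
r([a)_A) \;=\; \bigwedge\{m \in Max(A) : a \leq m\} \;\geq\; a \lor \bigwedge Max(A) \;=\; a \lor r(A),
\]
so $c' = c \lor a \leq r(A) \lor a \leq r([a)_A)$. (The case $a = 1$ is trivial since $[1)_A = \{1\}$.)

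The main, albeit minor, obstacle is this last computation of $r([a)_A)$; once the maximal elements of $[a)_A$ are correctly pinned down as a subfamily of $Max(A)$, the inequality $r([a)_A) \geq a \lor r(A)$ follows by a one-line meet comparison, and the proposition is complete.
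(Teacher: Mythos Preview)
Your proof is correct and follows essentially the same line as the paper's: both decompose $x$ in $A$ as $c\lor e$, push this through $u_a^A$ to obtain $c\lor a\in K([a)_A)$ and $e\lor a\in B([a)_A)$, and then verify $c\lor a\leq r(A)\lor a\leq \bigwedge\{m\in Max(A)\mid a\leq m\}=r([a)_A)$. Your version is in fact slightly more careful, citing the relevant lemmas explicitly and noting the trivial case $a=1$; the paper simply asserts the identification of $Max([a)_A)$ without comment.
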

\begin{proof}
Let $x\in[a)_{A}$. Since $A$ has the property (*) there exist $c\in K(A)$ and $e\in B(A)$ such that $c\leq r(A)$ and $x=c\lor e$. We remark that $x=(c\lor a)\lor(e\lor a)$, $c\lor a\in K([a)_{A})$ and $e\lor a\in B([a)_{A})$. We also have $c\lor a\leq r(A)\lor a=(\land Max(A))\lor a \leq \land \{m\in Max(A)|a\leq m\}=r([a)_{A})$. Therefore $[a)_{A}$ has the property (*).
\end{proof}
\begin{corollary}\label{u surj,A * =>B*}
If a quantale morphism $u:A\to B$ is surjective and $A$ has the property (*) then $B$ has the property (*).
\end{corollary}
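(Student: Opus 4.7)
The plan is to reduce this corollary to the previous proposition (that $[a)_{A}$ inherits property $(*)$ from $A$) by using the kernel factorization of a surjective morphism. Concretely, set $k = \mathrm{Ker}(u)$. By Corollary \ref{u surj=>u iso pe Ker(u)}, the restriction $u|_{[k)_{A}} : [k)_{A} \to B$ is a quantale isomorphism. By the preceding proposition applied with $a = k$, the quantale $[k)_{A}$ has the property $(*)$. Hence it suffices to show that property $(*)$ is transported along any isomorphism of quantales, and apply this to $u|_{[k)_{A}}$.

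To carry out the transport step, let $\varphi : A' \to B$ be an arbitrary quantale isomorphism (to be specialized to $u|_{[k)_{A}}$) and let $b \in B$. Writing $b = \varphi(a')$ for some $a' \in A'$, the hypothesis gives a decomposition $a' = c \vee e$ with $c \in K(A')$, $c \leq r(A')$, and $e \in B(A')$. Applying $\varphi$ yields $b = \varphi(c) \vee \varphi(e)$. I then verify the three membership conditions separately: $\varphi(c) \in K(B)$ because quantale isomorphisms carry compacts to compacts (their inverses preserve arbitrary joins); $\varphi(e) \in B(B)$ by Lemma \ref{unitalMorph}(1) applied to both $\varphi$ and $\varphi^{-1}$; and $\varphi(c) \leq \varphi(r(A')) = r(B)$ because $\varphi$ is an order isomorphism that restricts to a bijection $\mathrm{Max}(A') \to \mathrm{Max}(B)$, hence preserves the meet defining the Jacobson radical.

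The only non-routine point, and hence the main obstacle, is the identification $\varphi(r(A')) = r(B)$: one must check that $\varphi$ genuinely sends $\mathrm{Max}(A')$ bijectively onto $\mathrm{Max}(B)$ and that $\varphi$ commutes with arbitrary meets. The first is immediate from $\varphi$ being an order iso with $\varphi(1) = 1$, and the second follows because $\varphi$ and $\varphi^{-1}$ both preserve arbitrary joins, hence both preserve arbitrary meets in a complete lattice. Once this is in place, specializing $\varphi$ to $u|_{[k)_{A}}$ and combining with the decomposition produced above yields the required witness for $(*)$ in $B$.
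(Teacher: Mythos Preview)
Your proof is correct and follows exactly the same route as the paper: apply Corollary \ref{u surj=>u iso pe Ker(u)} to identify $B$ with $[\mathrm{Ker}(u))_{A}$, then invoke the preceding proposition to conclude that $[\mathrm{Ker}(u))_{A}$ has property $(*)$. The paper's proof is the one-line ``By previous Proposition and Corollary \ref{u surj=>u iso pe Ker(u)}'', leaving the transport of $(*)$ along an isomorphism implicit; your additional paragraph spelling out that isomorphisms preserve compacts, complemented elements, maximal elements, and arbitrary meets (hence the Jacobson radical) is routine but entirely sound, and simply makes explicit what the paper takes for granted.
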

\begin{proof}
By previous Proposition and Corollary \ref{u surj=>u iso pe Ker(u)}.
\end{proof}
\section{Finite products of quantales and LP}
 \hspace{0.5cm}In this section we shall study the condition LP in finite products of quantales.We shall prove that the semilocal quantales with LP are exactly finite products of local quantales.

\begin{lemma}
Let $A_{1},..,A_{n}$ be coherent quantales and their product $A=\prod _{i=1}^{n}A_{i}$. Then $K(A)=\prod_{i=1}^{n}K(A_{i})$ and $A$ is a coherent quantale.
\end{lemma}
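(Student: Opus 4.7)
The plan is to verify the two assertions in the obvious order: first, check that $A = \prod_{i=1}^{n} A_{i}$, endowed with componentwise operations, is a quantale (which is almost automatic), then identify its compacts, and finally read off coherence.

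First I would note that componentwise joins, meets, and products make $A$ into a complete lattice and a commutative monoid (unit $(1,\dots,1)$, zero $(0,\dots,0)$), and the infinite distributive law of Definition~1.1 holds coordinatewise because it holds in each $A_{i}$. So $A$ is a unital commutative quantale. The only nontrivial content is the equality $K(A)=\prod_{i=1}^{n}K(A_{i})$.

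For the inclusion $\prod K(A_{i})\subseteq K(A)$, let $c=(c_{1},\dots,c_{n})$ with each $c_{i}\in K(A_{i})$, and suppose $c\leq \bigvee_{t\in T}a^{(t)}$ where $a^{(t)}=(a^{(t)}_{1},\dots,a^{(t)}_{n})$. Taking the $i$-th coordinate gives $c_{i}\leq \bigvee_{t\in T}a^{(t)}_{i}$, so by compactness of $c_{i}$ there is a finite $T_{i}\subseteq T$ with $c_{i}\leq \bigvee_{t\in T_{i}}a^{(t)}_{i}$; then $T_{0}=\bigcup_{i=1}^{n}T_{i}$ is finite and $c\leq \bigvee_{t\in T_{0}}a^{(t)}$. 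For the converse, the key trick is to use the decomposition $(c_{1},\dots,c_{n})=\bigvee_{i=1}^{n}\varepsilon_{i}(c_{i})$, where $\varepsilon_{i}(x)$ denotes the tuple with $x$ in slot $i$ and $0$ elsewhere. If $c\in K(A)$ and we want to show the $i$-th coordinate $c_{i}\in K(A_{i})$, assume $c_{i}\leq \bigvee_{t\in T}b_{t}$ in $A_{i}$; then in $A$
$$c\ \leq\ \bigl(c_{1},\dots,c_{i-1},0,c_{i+1},\dots,c_{n}\bigr)\ \vee\ \bigvee_{t\in T}\varepsilon_{i}(b_{t}),$$
and compactness of $c$ in $A$ yields a finite $T_{0}\subseteq T$ with the same inequality restricted to $T_{0}$; inspecting the $i$-th coordinate gives $c_{i}\leq \bigvee_{t\in T_{0}}b_{t}$, as required.

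With $K(A)=\prod K(A_{i})$ in hand, coherence is immediate. The unit $(1,\dots,1)$ lies in $K(A)$ because each $1_{A_{i}}\in K(A_{i})$; the product of two elements of $\prod K(A_{i})$ is again in $\prod K(A_{i})$ since each $K(A_{i})$ is closed under $\cdot$; and algebraicity follows by writing each $a=(a_{1},\dots,a_{n})$ as $\bigvee_{i=1}^{n}\varepsilon_{i}(a_{i})$ and then expanding each $a_{i}$ as a join of compacts in $A_{i}$ (noting that $\varepsilon_{i}$ of a compact is compact by the forward direction just proved, since $0\in K(A_{j})$ for $j\neq i$). I expect the only step requiring genuine care is the converse inclusion $K(A)\subseteq \prod K(A_{i})$, where one must choose the right test family in $A$ to exploit compactness of the whole tuple; the decomposition via the $\varepsilon_{i}$'s makes this transparent.
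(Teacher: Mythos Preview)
The paper states this lemma without proof, so there is nothing to compare against; your argument is correct and is the standard one. The only point worth a small comment is in the converse inclusion $K(A)\subseteq\prod_i K(A_i)$: when you extract a finite subfamily from $\{(c_1,\dots,c_{i-1},0,c_{i+1},\dots,c_n)\}\cup\{\varepsilon_i(b_t):t\in T\}$ using compactness of $c$, you may as well keep the first element (adding it back only enlarges the join), after which the $i$-th coordinate reads $c_i\leq\bigvee_{t\in T_0}b_t$ as desired. Also, your algebraicity step uses that $0\in K(A_j)$ for $j\neq i$, which is true (the empty subfamily witnesses it), so $\varepsilon_i$ of a compact is indeed compact.
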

\begin{proposition}\label{[a) iso prod[ai)}
Let $A$ be a quantale and $a, a_{1}, .., a_{n}\in A$ such that $a=\land _{i=1}^{n}a_{i}$ and $a_{i}\lor a_{j}=1$ for all $i\not= j$. Then  the function $u:[a)_{A}\to \prod_{i=1}^{n}[a_{i})_{A}$, defined by $u(x)=(x\lor a_{1}, ..., x\lor a_{n})$, for any $x\in [a)_{A}$, is a quantale isomorphism. 
\end{proposition}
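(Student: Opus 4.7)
The plan is to verify that $u$ is a unital quantale morphism and then to establish bijectivity via an explicit inverse. Well-definedness is immediate, since $x \geq a$ implies $x \vee a_i \geq a_i$; preservation of arbitrary joins is trivial; and for the multiplications $\cdot_a$ on $[a)_A$ and $\cdot_{a_i}$ on $[a_i)_A$, direct expansion using $xa_i, ya_i, a_i^2 \leq a_i$ gives $(x \vee a_i)(y \vee a_i) \vee a_i = xy \vee a_i$, whence $u(x \cdot_a y) = u(x) \cdot u(y)$. The unit $1$ and the bottom $a$ are visibly preserved.

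The technical core concerns the elements $b_k := \bigwedge_{j \neq k} a_j$. From the coprimalities $a_k \vee a_j = 1$ for $j \neq k$, iterated application of Lemma \ref{propQ}(2) gives $a_k \vee b_k = 1$; then Lemma \ref{propQ}(1) yields $a_k \cdot b_k = a_k \wedge b_k = \bigwedge_j a_j = a$. The crucial identity I want is $\bigvee_{k=1}^{n} b_k = 1$. From $a_k \vee b_k = 1$ one has $(\bigvee_l b_l) \vee a_k = 1$ for each $k$, so iterating Lemma \ref{propQ}(2) yields $(\bigvee_l b_l) \vee \prod_{k} a_k = 1$. A further iteration of Lemma \ref{propQ}(1)--(2) shows $\prod_k a_k = \bigwedge_k a_k = a$, so $(\bigvee_l b_l) \vee a = 1$; since $a \leq b_l$ for every $l$, this collapses to $\bigvee_l b_l = 1$.

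With these in hand, injectivity is direct: if $x \vee a_k = y \vee a_k$ for every $k$, multiplying both sides by $b_k$ and using $b_k a_k = a$ gives $xb_k \vee a = yb_k \vee a$; then, using $\bigvee_k b_k = 1$ and $x, y \geq a$, one computes $x = x \cdot 1 = \bigvee_k xb_k \vee a = \bigvee_k (xb_k \vee a) = \bigvee_k (yb_k \vee a) = y$. For surjectivity, given $(x_1, \ldots, x_n) \in \prod_i [a_i)_A$, I set $x = a \vee \bigvee_k b_k x_k$. Then $x \geq a$, and for each fixed $k$ the terms $b_i x_i$ with $i \neq k$ are bounded by $a_k$ (since $b_i \leq a_k$), while Lemma \ref{propQ}(4) applied to $a_k \vee b_k = 1$ and $a_k \leq x_k$ gives $a_k \vee b_k x_k = x_k$; hence $x \vee a_k = x_k$, and $u(x) = (x_1, \ldots, x_n)$.

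The principal obstacle is the identity $\bigvee_k b_k = 1$, which must be bootstrapped from pairwise coprimality through successive applications of Lemma \ref{propQ}(1)--(2); once this is secured, both injectivity and surjectivity reduce to short manipulations with the same lemma.
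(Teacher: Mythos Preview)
Your proof is correct and takes a genuinely different route from the paper's. The paper handles injectivity by computing $Ker(u)=a$ and invoking the abstract criterion of Lemma~\ref{u inj iff Ker(u)=0}; for surjectivity it shows directly that $\prod_{j=1}^{n} x_{j}$ is a preimage of $(x_{1},\ldots,x_{n})$, using Lemma~\ref{propQ}(2) to get $a_{i}\lor\prod_{j\neq i}x_{j}=1$ and then Lemma~\ref{propQ}(4) with $a_{i}\leq x_{i}$ to conclude $a_{i}\lor\prod_{j}x_{j}=x_{i}$. You instead introduce the auxiliary elements $b_{k}=\bigwedge_{j\neq k}a_{j}$ and establish the partition-of-unity identity $\bigvee_{k}b_{k}=1$; once this is in place, both injectivity and the explicit inverse $x=a\lor\bigvee_{k}b_{k}x_{k}$ fall out by short computations. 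Your argument is the Chinese-Remainder style reconstruction via orthogonal idempotent-like elements: it is more constructive and yields an explicit two-sided inverse, and it is self-contained in that it does not appeal to the kernel lemma. The paper's argument is shorter on the injectivity side and uses a simpler preimage for surjectivity, but is less explicit and depends on Lemma~\ref{u inj iff Ker(u)=0}.
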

\begin{proof}
To prove that $u$ is a quantale morphism is straightforward. For example, if $x,y\in [a)$ then for any $i\in\{1,..,n\}$, the following equalities hold: $x\cdot y\lor a_{i}=(x\lor a_{i})\cdot (y\lor a_{i})\lor a_{i}=(x\lor a_{i})\cdot_{a_{i}}(y\lor a_{i})$, therefore $u(x\cdot y)=u(x)\cdot u(y)$. One remarks that 
$Ker(u)=\lor \{ x\geq a|x\lor a_{i}=a_{i},$ for all $i\in\{1,..,n\}\}
=\lor \{x\geq a |x\leq a_{i},$ for all $i\in\{1,..,n\} \}=
\lor\{x\geq a| x\leq \land_{i=1}^{n}a_{i}\}= a$. By Lemma \ref{u inj iff Ker(u)=0}, $u$ is injective.
Consider the elements $x_{i}\in [a_{i})_{A}, i\in\{1,..,n\}$, so $a_{i}\lor x_{j}=1$ for all $j\not= i$. According to Lemma \ref{propQ},(2), $a_{i}\lor (\prod _{j\not= i}x_{j})=1$, for $i\in\{1,..,n\}$. Since $a_{i}\leq x_{i}$, by using Lemma \ref{propQ},(4), it follows that $a_{i}\lor(\prod_{j\not= i}x_{j}\cdot x_{i})=x_{i}$ for all $i\in\{1,..,n\}$. Therefore $a_{i}\lor \prod_{j=1}^{n}x_{j}=x_{i}$, for $i\in\{1,..,n\}$, hence $u(\prod_{j=1}^{n}x_{j})=(x_{1}, .., x_{n})$, i.e. $u$ is surjective. 
\end{proof}
\begin{corollary}
If $\land_{i=1}^{n}a_{i}=0$ and $a_{i}\lor a_{j}=1$ for $i\not= j$ then the function $u:A\to \prod_{i=1}^{n}[a_{i})$ defined by $u(x)=(x\lor a_{1}, .., x\lor a_{n})$ for $x\in A$ is a quantale isomorphism. 
\end{corollary}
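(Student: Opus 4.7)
The plan is to derive this corollary as an immediate specialization of Proposition \ref{[a) iso prod[ai)}. The hypothesis $\bigwedge_{i=1}^{n}a_{i}=0$ is precisely the statement that the element $a$ appearing in the preceding proposition equals $0$, and the condition $a_{i}\lor a_{j}=1$ for $i\neq j$ matches the other hypothesis verbatim.

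First I would observe that $[0)_{A}=\{x\in A\mid 0\leq x\}=A$, so that the interval quantale $[a)_{A}$ of Proposition \ref{[a) iso prod[ai)} collapses, in this case, to $A$ itself (the induced multiplication $x\cdot_{0}y=x\cdot y\lor 0=x\cdot y$ coincides with the original multiplication of $A$, and the top and bottom are unchanged). Consequently, the map $u:[0)_{A}\to \prod_{i=1}^{n}[a_{i})_{A}$ provided by Proposition \ref{[a) iso prod[ai)}, which sends $x\mapsto (x\lor a_{1},\dots,x\lor a_{n})$, is exactly the function $u:A\to\prod_{i=1}^{n}[a_{i})_{A}$ of the corollary's statement.

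Then I would simply invoke Proposition \ref{[a) iso prod[ai)} with $a=0$ to conclude that this $u$ is a quantale isomorphism. There is no real obstacle here; the only thing worth double-checking is the identification $[0)_{A}=A$ as quantales (not merely as posets), which is immediate since the multiplication $\cdot_{0}$ agrees with $\cdot$ and the constants $0$ and $1$ are the same in both structures. Hence the corollary follows with a one-line argument.
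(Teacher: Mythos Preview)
Your proposal is correct and matches the paper's treatment exactly: the paper states this corollary without proof, as an immediate specialization of Proposition~\ref{[a) iso prod[ai)} with $a=\bigwedge_{i=1}^{n}a_{i}=0$, using the obvious identification $[0)_{A}=A$ that you spell out.
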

\begin{proposition}\label{(A iso prodAi)iff (Ai iso[ei))}
If $A, A_{1}, .., A_{n} $ are coherent quantales then the following are equivalent :
\begin{list}{(\arabic{nr})}{\usecounter{nr}}
\item $A$ and $\prod_{i=1}^{n}A_{i}$ are isomorphic ;
\item There exist $e_{1}, .., e_{n}\in B(A)$ such that $\land_{i=1}^{n}e_{i}=0$, $e_{i}\lor e_{j}=1$ for $i\not= j$ and $A_{i}, [e_{i})_{A}$ are isomorphic for any $i\in \{1,..,n\}$.
\end{list}
\end{proposition}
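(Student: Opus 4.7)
The plan is to establish the two implications by moving freely along the isomorphism between $A$ and the product, using Boolean idempotents that cut the product into its factors.

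For $(2) \Rightarrow (1)$, observe that the hypotheses $\bigwedge_{i=1}^{n} e_i = 0$ and $e_i \lor e_j = 1$ for $i \neq j$ are exactly what is needed to apply the corollary to Proposition~\ref{[a) iso prod[ai)}, yielding a quantale isomorphism $A \cong \prod_{i=1}^{n}[e_i)_{A}$. Composing this componentwise with the given isomorphisms $[e_i)_A \cong A_i$ then produces the desired isomorphism $A \cong \prod_{i=1}^{n} A_i$. So this direction reduces to quoting the previous corollary.

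For $(1) \Rightarrow (2)$, fix a quantale isomorphism $\varphi : A \to \prod_{i=1}^{n} A_i$ and work first inside the product. For each $i \in \{1,\dots,n\}$ define the tuple $\varepsilon_i \in \prod_{i=1}^{n} A_i$ by $(\varepsilon_i)_j = 1$ if $j \neq i$ and $(\varepsilon_i)_i = 0$. Componentwise computation gives $\varepsilon_i \lor \delta_i = 1$ and $\varepsilon_i \cdot \delta_i = 0$, where $\delta_i$ is the tuple with a $1$ in position $i$ and $0$ elsewhere; by Lemma~\ref{propB(A)}(4), $\varepsilon_i$ lies in $B(\prod A_i)$. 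Similarly one checks that $\bigwedge_{i=1}^{n} \varepsilon_i = 0$ and $\varepsilon_i \lor \varepsilon_j = 1$ for $i \neq j$. Now set $e_i = \varphi^{-1}(\varepsilon_i)$. Since $\varphi^{-1}$ is a unital quantale morphism, Lemma~\ref{unitalMorph} gives $e_i \in B(A)$, and because $\varphi^{-1}$ preserves joins and meets the conditions $\bigwedge_{i=1}^{n} e_i = 0$ and $e_i \lor e_j = 1$ for $i \neq j$ transfer immediately from the $\varepsilon_i$'s.

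It remains to identify each interval $[e_i)_{A}$ with $A_i$. The isomorphism $\varphi$ restricts to an order isomorphism $[e_i)_A \to [\varepsilon_i)_{\prod A_j}$ (a unital quantale isomorphism carries principal filters to principal filters, preserving the induced multiplication $\cdot_{e_i}$). On the other hand, an element $(x_1, \dots, x_n)$ of the product lies above $\varepsilon_i$ iff $x_j = 1$ for all $j \neq i$, so the $i$-th projection induces a quantale isomorphism $[\varepsilon_i)_{\prod A_j} \to A_i$ (bijectivity is clear, and the product multiplication on the interval collapses to that of $A_i$ in coordinate $i$). Composing these gives $A_i \cong [e_i)_A$. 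The main technical point is this last identification: one must verify that the interval-multiplication $\cdot_{\varepsilon_i}$ on $[\varepsilon_i)_{\prod A_j}$ is compatible with the multiplication of $A_i$ under projection; but since every coordinate $j \neq i$ is already $1$, the join with $\varepsilon_i$ after multiplication has no effect on the $i$-th component, so this compatibility is automatic.
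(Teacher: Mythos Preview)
Your proof is correct and follows essentially the same approach as the paper. For $(1)\Rightarrow(2)$ the paper likewise takes $e_i=(1,\dots,1,0,1,\dots,1)$ and then obtains $[e_i)_A\cong A_i$ slightly more directly by observing that $e_i=\mathrm{Ker}(\pi_i)$ and invoking Corollary~\ref{u surj=>u iso pe Ker(u)}; your explicit verification via $[e_i)_A\cong[\varepsilon_i)_{\prod A_j}\cong A_i$ unpacks exactly this.
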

\begin{proof}
(1) $\imp$ (2) Assume $A=\prod_{i=1}^{n}A_{i}$ and denote $e_{i}=(1,..,1,0,1,..,1), i\in\{1,..,n\}$. If $\pi_{i}:A\to A_{i}, i\in\{1,..,n\}$ are the projections then we observe that $Ker(\pi_{i})=\lor\{  a\in A|\pi_{i}(a)=0\}=e_{i}$, for $i\in\{1,..,n\}$.
By Corollary \ref{u surj=>u iso pe Ker(u)}, one obtains the isomorphisms $A_{i}\simeq [Ker(\pi_{i}))_{A}=[e_{i})_{A} , i\in\{1,..,n\}$. The elements $e_{1}, .., e_{n}$ satisfy the condition $\land_{i=1}^{n}e_{i}=0$ and $e_{i}\lor e_{j}=1$, for $i\not= j$. \\
(2) $\imp$ (1) By previous Corollary, one gets the isomorphisms $A\simeq \prod_{i=1}^{n}[e_{i})_{A}\simeq \prod_{i=1}^{n}A_{i}$.  
\end{proof}
\begin{proposition}\label{MaxA}
If $A=\prod_{i=1}^{n}A_{i}$ then the following hold:
\begin{list}{(\arabic{nr})}{\usecounter{nr}}
\item $Max(A)=\cup_{i=1}^{n}(\{1\}\times ..\times Max(A_{i})\times ..\times\{1\})$;
\item $r(A)=(r(A_{1}), .., r(A_{n}))$;
\item $|Max(A)|=\sum _{i=1}^{n}|Max(A_{i})|$.
\end{list}
\end{proposition}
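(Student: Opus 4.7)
The plan is to prove (1) first by double inclusion, then derive (2) and (3) as immediate consequences of (1) together with the coordinatewise computation of meets in the product quantale.

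For (1), I would take an arbitrary $m = (m_1, \ldots, m_n) \in Max(A)$. Since $m < 1$, at least one coordinate $m_j$ satisfies $m_j < 1$. Setting $m' = (1, \ldots, 1, m_j, 1, \ldots, 1)$ with $m_j$ in position $j$ and $1$ elsewhere, we have $m \leq m' < 1$, so maximality of $m$ forces $m = m'$; hence every coordinate of $m$ except the $j$-th equals $1$. I then verify $m_j \in Max(A_j)$: if $m_j \leq a < 1$ in $A_j$, then $m \leq (1, \ldots, 1, a, 1, \ldots, 1) < 1$, so $a = m_j$ by maximality of $m$. The reverse inclusion is direct: for $m_j \in Max(A_j)$, the element $(1,\ldots,1, m_j, 1,\ldots,1)$ is strictly below $1$, and any element strictly above it must have $j$-th coordinate strictly above $m_j$, hence equal to $1$, forcing the whole element to be $1$.

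For (2), writing $r(A) = \bigwedge Max(A)$ and decomposing this meet according to (1), the $j$-th coordinate of $r(A)$ is the meet of the $j$-th coordinates of all elements in $Max(A)$; elements drawn from the $i$-th piece $\{1\}\times \cdots \times Max(A_i) \times \cdots \times \{1\}$ contribute $1$ when $i \neq j$ and range over $Max(A_j)$ when $i = j$. Hence the $j$-th coordinate of $r(A)$ equals $\bigwedge Max(A_j) = r(A_j)$, yielding $r(A) = (r(A_1), \ldots, r(A_n))$. For (3), the sets $\{1\}\times \cdots \times Max(A_i) \times \cdots \times \{1\}$ are pairwise disjoint, since every maximal element is strictly less than $1$ and so the unique coordinate differing from $1$ identifies the index $i$; each piece is in bijection with $Max(A_i)$, and summing cardinalities gives (3).

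No serious obstacle is expected. The one step deserving care is the argument in (1) that a maximal element of the product must have exactly one coordinate below $1$; once this is secured by the simple "replace other coordinates by $1$" trick, parts (2) and (3) are routine coordinatewise bookkeeping.
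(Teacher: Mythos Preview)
Your proposal is correct and follows essentially the same approach as the paper: both arguments hinge on the observation that a maximal element of the product can have at most one coordinate strictly below $1$, established via the ``replace the other coordinates by $1$'' trick, and both derive (2) and (3) immediately from (1). Your presentation is in fact slightly cleaner: the paper argues the uniqueness of the sub-$1$ coordinate by contradiction (assuming two such coordinates and building an intermediate element), whereas you go directly to $m' = (1,\ldots,1,m_j,1,\ldots,1)$ and invoke maximality once; you also make explicit the verification that $m_j \in Max(A_j)$, which the paper leaves implicit.
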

\begin{proof}
(1) Let $i\in \{1,.., n\}$ and $m_{i}\in Max(A)_{i}$. Denote $m=(1, .., 1, m_{i}, 1,.., 1)\in A$ and consider an element $a=(a_{1},.., a_{n})$ of $A$ such that $m\leq a < (1,.., 1)$. Then $m_{i}\leq a_{i}< 1$ and $a_{j}=1,$ for all $j\not= i$. Since $m_{i}\in Max(A_{i})$ it follows that $m_{i}=a_{i}$, hence $m=a$. Therefore $m=a$, so $\bigcup_{i=1}^{n}(\{1\}\times, .., \times Max(A_{i})\times, ..\{1\})\subseteq Max(A)$.\\
In order to prove the converse inclusion, let us consider $m=(m_{1},.., m_{n})\in Max(A)$, so $m_{i}< 1$ for some $i\in\{1,.., n\}$. Assume by absurd that there exists an index $j\not=i$, such that $m_{j}< 1$. Define $a=(a_{1},..a_{n})\in A$ by $a_{i}=m_{i}, a_{j}=m_{j}$ and $a_{k}=1$, for all $k\not=i,j$. Then $m\leq a< 1$, hence $m=a$. It follows that $m<(1, .., 1, a_{i}, 1, .., 1)<(1,.., 1)$, contradicting $m\in Max(A)$.
The assertions (2) and (3) follows from (1).
\end{proof}
\begin{proposition}\label{[r(A)) iso prod([mi))}
If $Max(A)=\{m_{1},..,m_{n}\}$ then $[r(A))_{A}\simeq \prod_{i=1}^{n}[m_{i})_{A}$.
\end{proposition}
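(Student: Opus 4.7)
The plan is to apply Proposition \ref{[a) iso prod[ai)} directly, taking $a = r(A)$ and $a_i = m_i$ for $i \in \{1, \ldots, n\}$. By the very definition of the Jacobson radical, $r(A) = \bigwedge_{i=1}^{n} m_i$, so the first hypothesis $a = \bigwedge_{i=1}^{n} a_i$ of Proposition \ref{[a) iso prod[ai)} is immediate. Thus the entire argument reduces to establishing the second hypothesis, namely $m_i \vee m_j = 1$ for all $i \neq j$.

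To verify this, fix $i \neq j$ and consider $m_i \vee m_j$. Since $m_i \leq m_i \vee m_j$ and $m_i \in Max(A)$, the element $m_i \vee m_j$ must equal either $m_i$ or $1$: indeed, if $m_i \vee m_j < 1$, then Lemma \ref{SpecA inclus MaxA}.(1) gives a maximal element above $m_i \vee m_j$, and maximality of $m_i$ forces this maximal element to be $m_i$ itself. Now if $m_i \vee m_j = m_i$, then $m_j \leq m_i < 1$, and the maximality of $m_j$ forces $m_j = m_i$, contradicting $i \neq j$. Hence $m_i \vee m_j = 1$, as required.

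With both hypotheses of Proposition \ref{[a) iso prod[ai)} in hand, the proposition yields directly the quantale isomorphism
\[
u : [r(A))_A \longrightarrow \prod_{i=1}^{n} [m_i)_A, \qquad u(x) = (x \vee m_1, \ldots, x \vee m_n),
\]
which is precisely the conclusion. The only nontrivial point is the pairwise comaximality of distinct maximal elements, but this is a routine consequence of the definition of maximal element together with the compactness of $1$; no further obstacle is anticipated.
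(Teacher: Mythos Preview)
Your proof is correct and follows exactly the same route as the paper's: verify $r(A)=\bigwedge_{i=1}^{n} m_i$ and $m_i\lor m_j=1$ for $i\neq j$, then invoke Proposition~\ref{[a) iso prod[ai)}. The paper simply asserts the pairwise comaximality without argument, whereas you spell out the routine justification.
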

\begin{proof}
We remark that $\land_{i=1}^{n}m_{i}=r(a)$ and $m_{i}\lor m_{j}=1$ for $i\not= j$, then we apply Proposition 
\ref{[a) iso prod[ai)}.
\end{proof}
\begin{lemma}\label{e complem, e<=r(A) => e=0}
If $e\in B(A)$ and $e\leq r(A)$ then $e=0$.
\end{lemma}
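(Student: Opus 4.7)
The plan is to use the maximal-element witness from Lemma \ref{SpecA inclus MaxA} together with the defining identities of the Boolean center.

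First I would unpack the hypothesis that $e\in B(A)$: by Lemma \ref{propB(A)}(1) (or directly from the definition) there exists $e^{\perp}\in A$ with $e\lor e^{\perp}=1$ and $e\cdot e^{\perp}=0$. The goal becomes showing $e^{\perp}=1$, because once this is established, $e=e\cdot 1=e\cdot e^{\perp}=0$, which closes the argument.

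To get $e^{\perp}=1$, I would argue by contradiction: suppose $e^{\perp}<1$. Since $A$ is coherent, in particular $1\in K(A)$, so Lemma \ref{SpecA inclus MaxA}(1) produces some $m\in Max(A)$ with $e^{\perp}\leq m$. On the other hand, $e\leq r(A)=\bigwedge Max(A)\leq m$ by hypothesis. Joining these two inequalities yields $1=e\lor e^{\perp}\leq m$, contradicting $m\in Max(A)$ (which forces $m<1$). Hence the assumption $e^{\perp}<1$ fails, i.e.\ $e^{\perp}=1$.

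There is no real obstacle here; the argument is a short combination of three inputs already available in the paper: the Boolean-center identities $e\lor e^{\perp}=1$ and $e\cdot e^{\perp}=0$, the existence of a maximal element above any proper element (Lemma \ref{SpecA inclus MaxA}(1), which uses $1\in K(A)$), and the definition $r(A)=\bigwedge Max(A)$. The only point that needs a moment of care is that one must exploit the \emph{multiplicative} complement identity $e\cdot e^{\perp}=0$ (rather than a lattice meet), since this is what is available from $B(A)$ in a general quantale; fortunately, unitality gives $e\cdot 1=e$, which is exactly what closes the last step.
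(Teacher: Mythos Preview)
Your argument is correct and is essentially the same as the paper's: both assume $e^{\perp}\neq 1$, pick $m\in Max(A)$ with $e^{\perp}\leq m$ via Lemma~\ref{SpecA inclus MaxA}(1), and derive the contradiction $1=e\lor e^{\perp}\leq m$ from $e\leq r(A)\leq m$. Your extra care in deducing $e=0$ from $e^{\perp}=1$ via $e=e\cdot 1=e\cdot e^{\perp}=0$ is fine (the paper leaves this implicit), though note that by Lemma~\ref{propB(A)}(2) the meet and product agree on $B(A)$, so the distinction you flag is not actually an issue.
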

\begin{proof}
Assume $e^{\perp}\not= 1$ so there exists $m_{0}\in Max(A)$ such that $e^{\perp}\leq m_{0}$. Since $e\leq r(A)\leq m_{0}$ we have 
$1=e\lor e^{\perp}\leq m_{0}$. This contradiction shows that $e=0$.
\end{proof}
\begin{proposition}\label{LP si * (A iff Ai)}
Let the coherent quantales $A_{1},..,A_{n}$ and the product $A=\prod_{i=1}^{n}A_{i}$.
\begin{list}{(\arabic{nr})}{\usecounter{nr}}
\item $A$ has LP iff $A_{i}$ has LP, for $i\in\{1,..,n\}$;
\item $A$ has the property (*) iff $A_{i}$ has property (*), for $i\in\{1,..,n\}$. 
\end{list}
\end{proposition}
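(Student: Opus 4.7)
The plan is to handle both parts by reducing to componentwise statements via the natural identifications $K(A)=\prod_i K(A_i)$ (the lemma stated just above), $B(A)=\prod_i B(A_i)$ (immediate from the fact that $\lor$ and $\cdot$ act componentwise on $A$), and $r(A)=(r(A_1),\ldots,r(A_n))$ (Proposition \ref{MaxA}(2)). For any $a=(a_1,\ldots,a_n)\in A$, writing $b_i=(1,\ldots,1,a_i,1,\ldots,1)$ gives $\bigwedge_i b_i=a$ and $b_i\lor b_j=1$ for $i\neq j$, so Proposition \ref{[a) iso prod[ai)} yields a quantale isomorphism $[a)_A\simeq\prod_i [a_i)_{A_i}$, and restricting to Boolean centres gives $B([a)_A)\simeq\prod_i B([a_i)_{A_i})$.

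For the forward directions of (1) and (2), I would invoke the projections $\pi_i\colon A\to A_i$, which are surjective unital quantale morphisms. The corollary at the end of Section 5 (a surjective unital quantale morphism transports LP) then yields LP on each $A_i$ when $A$ has LP, and Corollary \ref{u surj,A * =>B*} yields (*) on each $A_i$ when $A$ has (*).

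For the backward direction of (1), I would fix $a=(a_i)\in A$ and an arbitrary $\epsilon\in B([a)_A)$. Under the decomposition $B([a)_A)\simeq\prod_i B([a_i)_{A_i})$, write $\epsilon=(\epsilon_i)$. Applying LP in each $A_i$, pick $f_i\in B(A_i)$ with $B(u_{a_i}^{A_i})(f_i)=f_i\lor a_i=\epsilon_i$. Then $(f_1,\ldots,f_n)\in B(A)$ and $B(u_a^A)(f_1,\ldots,f_n)=(f_i\lor a_i)_i=\epsilon$ under the identification above, so $B(u_a^A)$ is surjective and $A$ has LP. For the backward direction of (2), given $a=(a_i)\in A$, use (*) in each $A_i$ to write $a_i=c_i\lor e_i$ with $c_i\in K(A_i)$, $c_i\leq r(A_i)$, $e_i\in B(A_i)$; then $c:=(c_i)\in K(A)$, $e:=(e_i)\in B(A)$, $c\leq(r(A_i))_i=r(A)$, and $a=c\lor e$.

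There is no genuine obstacle in the argument; the proof is essentially bookkeeping against the product decompositions. The only point requiring care is ensuring that the componentwise identities used (for $K$, $B$, $r$, and the intervals $[a)$) are either stated earlier or visibly immediate, so that the surjectivity of $B(u_a^A)$ really does reduce to the surjectivity of the $B(u_{a_i}^{A_i})$.
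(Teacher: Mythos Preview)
Your argument is correct. For part~(2) you do exactly what the paper does: the forward direction via Corollary~\ref{u surj,A * =>B*} and the projections, and the backward direction by assembling the componentwise data $(c_i,e_i)$ into $c=(c_i)\in K(A)$, $e=(e_i)\in B(A)$ with $c\le r(A)$ by Proposition~\ref{MaxA}(2).

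For part~(1) you take a different route from the paper. The paper simply invokes Theorem~\ref{LP equiv Bnormal} (LP is equivalent to B-normality) and leaves the rest implicit: B-normality is visibly a componentwise condition on a product, since $a\lor b=1$ in $\prod_i A_i$ holds iff it holds in each coordinate, and the required $e,f\in B(A)$ can be built or projected coordinatewise. Your proof instead works directly from the definition of LP: forward via the transport corollary at the end of Section~5, and backward by identifying $[a)_A\simeq\prod_i[a_i)_{A_i}$ through Proposition~\ref{[a) iso prod[ai)} applied to the elements $b_i=(1,\ldots,a_i,\ldots,1)$, so that surjectivity of $B(u_a^A)$ reduces to surjectivity of each $B(u_{a_i}^{A_i})$. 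This is more hands-on but has the virtue of not depending on the B-normality characterization; the paper's route is shorter once that theorem is in place. Both are equally valid, and your bookkeeping is sound.
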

\begin{proof}
(1) Straightforward, by using Proposition \ref{LP equiv Bnormal};\\
(2) For direct implication we apply Corollary \ref{u surj,A * =>B*}. Conversely let $a=(a_{1}, .., a_{n})\in A$, so for any $i\in \{1,..n\}$, there exist $c_{i}\in K(A_{i}), e_{i}\in B(A)$ such that $c_{i}\leq r(A_{i})$, $a_{i}=c_{i}\lor e_{i}$. Denote $c=(c_{1}, .., c_{n})$ and $e=(e_{1}, .., e_{n})$. Thus $c\in K(A)$, $c\leq (r(A_{1}), .., r(A_{n}))=r(A)$ and $e\in B(A)$. It is obvious that $a=c\lor e$, hence $A$ has the property (*). 
\end{proof}
The following proposition characterize the finite product of local quantales by using the properties LP and (*). It generalizes Proposition 6.13 of \cite{GeorgescuMuresan}, Theorem 6.1 of \cite{GeorgescuMuresanJMVL}, as well as Theorem 2.10 of \cite{Lam}. 
\begin{theorem}
For a coherent quantale $A$ the following properties are equivalent:
\begin{list}{(\arabic{nr})}{\usecounter{nr}}
\item $A$ is a semilocal and has the property (*);
\item $A$ is semilocal and has LP;
\item $A$ is semilocal and $r(A)$ has LP;
\item There exist $e_{1}, .., e_{n}\in B(A)$ such that $\land _{i=1}^{n}e_{i}=0,$ $ e_{i}\lor e_{j}=1$ for $i\not= j$ and $[e_{i})_{A}$ is a local quantale, for any $i\in\{1,..,n\}$.
\item $A$ is isomorphic to a finite product of local quantales.
\end{list}
\end{theorem}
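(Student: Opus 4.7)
The plan is to establish the cyclic chain $(1) \Rightarrow (2) \Rightarrow (3) \Rightarrow (4) \Rightarrow (5) \Rightarrow (1)$. Three of these implications are short: $(1) \Rightarrow (2)$ is Proposition~\ref{* => LP}; $(2) \Rightarrow (3)$ is immediate from the definition of LP applied to the element $r(A) \in A$; and $(4) \Rightarrow (5)$ follows from Proposition~\ref{[a) iso prod[ai)} applied with $a = 0$ and $a_i = e_i$, which yields $A \simeq \prod_{i=1}^{n}[e_{i})_{A}$, a finite product of local quantales.

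The main step is $(3) \Rightarrow (4)$. Let $Max(A) = \{m_{1}, \ldots, m_{n}\}$ and assume $r(A)$ has LP. Proposition~\ref{[r(A)) iso prod([mi))} gives a quantale isomorphism $[r(A))_{A} \simeq \prod_{i=1}^{n}[m_{i})_{A}$; restricting to boolean centers and using that each $m_{i} \in Spec(A)$ (Lemma~\ref{SpecA inclus MaxA}), the remark following the definition of LP yields $B([m_{i})_{A}) = \{0,1\}$, so $B([r(A))_{A}) \simeq \{0,1\}^{n}$. The LP hypothesis on $r(A)$ supplies surjectivity of $B(u_{r(A)}^{A}) : B(A) \to B([r(A))_{A})$; the kernel of this boolean morphism is $\{e \in B(A) : e \leq r(A)\}$, which is trivial by Lemma~\ref{e complem, e<=r(A) => e=0}, so $B(u_{r(A)}^{A})$ is in fact a boolean isomorphism and $B(A) \simeq \{0,1\}^{n}$. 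Picking the atoms $a_{1}, \ldots, a_{n}$ of $B(A)$ and setting $e_{i} := a_{i}^{\perp}$, the De Morgan laws in $B(A)$ yield $\bigwedge_{i} e_{i} = (\bigvee_{i} a_{i})^{\perp} = 0$ and $e_{i} \lor e_{j} = (a_{i} \land a_{j})^{\perp} = 1$ for $i \neq j$. Proposition~\ref{[a) iso prod[ai)} then gives $A \simeq \prod_{i=1}^{n}[e_{i})_{A}$; each $[e_{i})_{A}$ is a nontrivial coherent quantale (since $a_{i} \neq 0$ forces $e_{i} \neq 1$), hence possesses at least one maximal element by Lemma~\ref{SpecA inclus MaxA}, and Proposition~\ref{MaxA}(3) forces $n = |Max(A)| = \sum_{i=1}^{n}|Max([e_{i})_{A})|$, so every summand equals $1$ and each $[e_{i})_{A}$ is local.

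For $(5) \Rightarrow (1)$, write $A \simeq \prod_{i=1}^{n} A_{i}$ with each $A_{i}$ local. Proposition~\ref{MaxA}(3) gives $|Max(A)| = \sum |Max(A_{i})| = n$, so $A$ is semilocal; property (*) follows from Proposition~\ref{LP si * (A iff Ai)}(2) once each local factor $A_{i}$ is verified to have (*), which is handled by using $B(A_{i}) = \{0,1\}$ (Lemma~\ref{e complem, e<=r(A) => e=0} applied inside $A_{i}$) together with the structural position of an arbitrary element relative to the unique maximal element of $A_{i}$ to produce the decomposition $a = c \lor e$.

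The hard part will be $(3) \Rightarrow (4)$: upgrading the mere surjectivity of $B(u_{r(A)}^{A})$ supplied by LP to an honest boolean isomorphism. The subtlety is that LP on its own guarantees only that \emph{some} preimage of each boolean element exists, with no control on uniqueness; injectivity is an independent input, coming from Lemma~\ref{e complem, e<=r(A) => e=0}. Without the resulting isomorphism $B(A) \simeq \{0,1\}^{n}$ one could not locate a rigid system of $n$ atoms whose complements decompose $A$ into exactly $n$ local factors, which is what drives the remainder of the argument.
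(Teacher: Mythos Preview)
Your proof is correct and follows the same cycle $(1)\Rightarrow(2)\Rightarrow(3)\Rightarrow(4)\Rightarrow(5)\Rightarrow(1)$ as the paper, with identical or near-identical arguments for $(1)\Rightarrow(2)$, $(2)\Rightarrow(3)$, $(4)\Rightarrow(5)$, and $(5)\Rightarrow(1)$.

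The one place where you diverge from the paper is the heart of $(3)\Rightarrow(4)$. The paper works element-by-element: from the isomorphism $[r(A))_A\simeq\prod_i [m_i)_A$ it extracts the specific elements $f_i\in B([r(A))_A)$ corresponding to the coordinate hyperplanes, lifts each $f_i$ to some $e_i\in B(A)$ via LP, and then checks the relations $\bigwedge_i e_i=0$ and $e_i\lor e_j=1$ directly, using Lemma~\ref{propB(A)}(5), Lemma~\ref{e complem, e<=r(A) => e=0}, and Lemma~\ref{a lor r(A)=1 then a=1}. You instead argue globally: you first observe $B([r(A))_A)\simeq\{0,1\}^n$, then promote the surjection $B(u_{r(A)}^A)$ to an isomorphism by noting its kernel is trivial (again Lemma~\ref{e complem, e<=r(A) => e=0}), conclude $B(A)\simeq\{0,1\}^n$, and read off the $e_i$ as complements of atoms. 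Both routes use exactly the same ingredients (LP for surjectivity, Lemma~\ref{e complem, e<=r(A) => e=0} for the ``uniqueness'' side); yours packages them as a single Boolean-algebra isomorphism, which is arguably cleaner and makes transparent \emph{why} exactly $n$ idempotents appear, while the paper's version is more hands-on and avoids the detour through atom structure. The counting argument that forces each $[e_i)_A$ to be local is then identical in both proofs.
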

\begin{proof}
(1) $\imp$ (2) By Proposition \ref{* => LP}.\\
(2) $\imp$ (3) Obvious.\\
(3) $\imp$ (4) Assume  $Max(A)=\{m_{1}, .., m_{n}\}$, hence by Proposition \ref{[r(A)) iso prod([mi))}, the quantales $[r(A))_{A}$ and 
$\prod_{i=1}^{n}[m_{i})_{A}$ are isomorphic. Thus there exist $f_{1}, .., f_{n}\in B([r(A))_{A})$ such  that $r(A)=\land_{i=1}^{n}f_{i}$ and $f_{i}\lor f_{j}=1$ for $i\not= j $. Since the element $r(A)$ verifies LP there exist $e_{1}, .., e_{n}\in B(A)$ such that $f_{i}=e_{i}\lor r(A)$, for any $i\in\{1,..,n\}$. Applying Lemma \ref{propB(A)},(5), $(\land _{i=1}^{n}e_{i})\lor r(A)=\land_{i=1}^{n}(e_{i}\lor r(A))= \land_{i=1}^{n}f_{i}=r(A)$, hence $\land_{i=1}^{n}e_{i}\leq r(A)$. Since $\land _{i=1}^{n}e_{i}\in B(A)$, from Lemma \ref{e complem, e<=r(A) => e=0}, it follows that  $\land_{i=1}^{n}e_{i}=0$. For all $i\not= j$ we have $r(A)\lor e_{i}\lor e_{j}=f_{i}\lor f_{j}=1$ and $e_{i}\lor e_{j}\in B(A)$, therefore by Lemma \ref{a lor r(A)=1 then a=1}, we get $e_{i}\lor e_{j}=1$. Thus the quantales $A$ and $\prod_{i=1}^{n}[e_{i})_{A}$ are isomorphic, hence, by Proposition \ref{MaxA},(3), $|Max(A)|=\sum_{i=1}^{n}|Max[e_{i})_{A}|$. But $f_{i}\not= 1$ so $e_{i}\not= 1$ for each $i\in \{1,..,n\}$, hence $[e_{i})_{A}$ is non-trivial, for  $i=\overline {1,n}$. Thus $|Max[e_{i})_{A})|\geq 1$, for any $i\in\{1,..,n\}$, therefore $|Max([e_{i})_{A}|=1$, for $i\in\{1,..,n\}$, i.e. $[e_{1})_{A}, .., [e_{n})_{A}$ are local quantales.\\
(4) $\imp$ (5) By Proposition \ref{(A iso prodAi)iff (Ai iso[ei))}.\\
(5) $\imp$ (1) Any local quantale verifies the property (*), then we apply Proposition \ref{LP si * (A iff Ai)}
 \end{proof}
 \begin{corollary}
 If the quantale $A$ is semilocal, then: $A$ satisfies (*) iff A has LP iff $r(A)$ has LP.
\end{corollary}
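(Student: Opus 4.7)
The plan is to derive this corollary immediately from the preceding Theorem 7.9, since its conditions (1), (2), (3) essentially package semilocality together with each of the three statements we need to equate here.

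More concretely, I would first observe that under the standing hypothesis that $A$ is semilocal, the three statements to be proved equivalent — namely \emph{$A$ satisfies (*)}, \emph{$A$ has LP}, \emph{$r(A)$ has LP} — become respectively conditions (1), (2) and (3) of Theorem 7.9 once one conjoins them with the (already assumed) semilocality of $A$. Thus the corollary is nothing more than the restriction of the equivalence (1) $\iff$ (2) $\iff$ (3) of Theorem 7.9 to the class of semilocal coherent quantales.

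Hence the proof reduces to a one-line invocation: by Theorem \ref{LP equiv Bnormal} is not needed here; the relevant reference is Theorem 7.9, whose implications (1) $\imp$ (2) (via Proposition \ref{* => LP}), (2) $\imp$ (3) (trivial), and (3) $\imp$ (1) (through the cycle (3) $\imp$ (4) $\imp$ (5) $\imp$ (1) established there) together give exactly the three-way equivalence claimed in the corollary once semilocality is fixed.

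There is no main obstacle — the work has been done in Theorem 7.9. I would simply write: \emph{By Theorem 7.9, the conditions ``$A$ semilocal and has (*)'', ``$A$ semilocal and has LP'', and ``$A$ semilocal and $r(A)$ has LP'' are mutually equivalent; since $A$ is assumed semilocal, the three remaining conditions ``$A$ has (*)'', ``$A$ has LP'', ``$r(A)$ has LP'' are equivalent.}
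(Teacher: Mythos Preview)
Your proposal is correct and matches the paper's approach: the corollary is stated without proof immediately after Theorem 7.9, precisely because it follows at once from the equivalence of conditions (1), (2), (3) there under the standing semilocality hypothesis. Your write-up is exactly the intended one-line derivation.
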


\end{document}